\newtheorem{thm}{Theorem}[section]
\newtheorem{defi}[thm]{Definition}
\newtheorem{prop}[thm]{Proposition}
\newtheorem{cor}[thm]{Corollary}
\newtheorem{lemma}[thm]{Lemma}
\theoremstyle{definition}
\newtheorem{remark}[thm]{Remark}
\newtheorem{example}[thm]{Example}
\newcommand{\bt}{\begin{thm}}
\newcommand{\et}{\end{thm}}
\newcommand{\br}{\begin{remark}}
\newcommand{\er}{\end{remark}}
\newcommand{\bl}{\begin{lemma}}
\newcommand{\el}{\end{lemma}}
\newcommand{\bp}{\begin{proof}}
\newcommand{\ep}{\end{proof}}
\newcommand{\bal}{\begin{align*}}
\newcommand{\eal}{\end{align*}}
\newcommand{\bi}{\begin{itemize}}
\newcommand{\be}{\begin{equation}}
\newcommand{\ee}{\end{equation}}
\newcommand{\bea}{\begin{eqnarray}}
\newcommand{\eea}{\end{eqnarray}}
\newcommand{\ba}{\begin{align*}}
\newcommand{\ea}{\end{align*}}
\newcommand{\ei}{\end{itemize}}
\DeclareMathOperator{\esssup}{ess\ sup}
\DeclareMathOperator{\conv}{\mathrm{conv}}
\newcommand{\R}{\mathbb{R}}
\newcommand{\N}{\mathbb{N}}
\newcommand{\F}{\mathcal{F}}
\newcommand{\cF}{\mathcal{F}}
\newcommand{\cZ}{\mathcal{Z}}
\newcommand{\cA}{\mathcal{A}}
\newcommand{\Om}{\Omega}
\newcommand{\hg}{\widehat g}
\newcommand{\hy}{\widehat y}
\newcommand{\hX}{\widehat X}
\newcommand{\cS}{\mathcal{S}}
\newcommand{\cM}{\mathcal{M}}
\newcommand{\vt}{\vartheta}
\newcommand{\hvt}{\widehat\vartheta}
\newcommand{\tvp}{\widetilde\varphi}
\newcommand{\hvp}{\widehat\varphi}
\newcommand{\hQ}{\widehat{Q}}
\newcommand{\vp}{\varphi}
\newcommand{\vr}{\varrho}
\newcommand{\cD}{\mathcal{D}}
\newcommand{\E}{\mathcal{E}}
\newcommand{\tS}{\widetilde S}
\newcommand{\hS}{\widehat S}
\newcommand{\sint}{\stackrel{\mbox{\tiny$\bullet$}}{}}
\newcommand{\hh}{\widehat h}
\newcommand{\hZ}{\widehat Z}
\newcommand{\cC}{\mathcal{C}}
\numberwithin{equation}{section}
\newcommand{\Lim}{\lim\limits}
\begin{document}
%\title{On the existence of shadow prices for utility maximization on the whole real line,
%exponential utility and fractional Brownian motion}
\title{Portfolio optimisation beyond semimartingales: shadow prices and fractional Brownian motion\footnote{We would like to thank Junjian Yang for careful reading of the manuscript and pointing out a mistake in an earlier version.}}
\author{Christoph Czichowsky\footnote{Department of Mathematics, London School of Economics and Political Science, Columbia House, Houghton Street, London WC2A 2AE, UK, {\tt c.czichowsky@lse.ac.uk}. Financial support by the Swiss National Science Foundation (SNF) under grant PBEZP2\_137313 and by the European Research Council (ERC) under grant FA506041 is gratefully acknowledged.} 
\hspace{20pt}Walter Schachermayer\footnote{Fakult\"at f\"ur Mathematik, Universit\"at Wien, Oskar-Morgenstern-Platz 1, A-1090 Wien, {\tt walter.schachermayer@univie.ac.at}. Partially supported by the Austrian Science Fund (FWF) under grant P25815, the European Research Council (ERC) under grant FA506041 and by the Vienna Science and Technology Fund (WWTF) under grant MA09-003.}
\\
\\
This version: \today.
%\\
%\\
%Preliminary and incomplete version:\\
%Do not quote or redistribute without permission.
}
\date{}
\maketitle

\begin{abstract}
\noindent
While absence of arbitrage in frictionless financial markets requires price processes to be semimartingales, non-semimartingales can be used to model prices in an arbitrage-free way, if proportional transaction costs are taken into account. In this paper, we show, for a class of price processes which are not necessarily semimartingales, the existence of an optimal trading strategy for utility maximisation under transaction costs by establishing the existence of a so-called shadow price. This is a semimartingale price process, taking values in the bid ask spread, such that frictionless trading for that price process leads to the same optimal strategy and utility as the original problem under transaction costs. Our results combine arguments from convex duality with the stickiness condition introduced by P. Guasoni. They apply in particular to exponential utility and geometric fractional Brownian motion. In this case, the shadow price is an It\^o process. As a consequence we obtain a rather surprising result on the pathwise behaviour of fractional Brownian motion: the trajectories may touch an It\^o process in a one-sided manner without reflection.
\end{abstract}
\noindent
\textbf{MSC 2010 Subject Classification:} 91G10, 60G22, 93E20, 60G48\newline
\vspace{-0.2cm}\newline
\noindent
\textbf{JEL Classification Codes:} G11, C61\newline
\vspace{-0.2cm}\newline
\noindent
\textbf{Key words:} portfolio choice, non-semimartingale price processes, fractional Brownian motion, proportional transaction costs, utilities on the whole real line, exponential utility, shadow price, convex duality, stickiness, optimal trading strategies %financial markets with fricitions

%\newpage

\section{Introduction}
Most of the literature in mathematical finance assumes that discounted prices $S=(S_t)_{0 \leq t \leq T}$ of risky assets are modelled by \emph{semimartingales}. In frictionless financial markets, where arbitrary amounts of stock can be bought and sold at the same price $S_t$,  the semimartingale assumption is necessary. Otherwise, there would exist ``arbitrage opportunities'' (see \cite{DS94}, Th.~7.2 for a precise statement) and optimal strategies for utility maximisation problems would fail to exist or yield infinite expected utility (see \cite{AI05,LZ08,KP08}).

For non-semimartingale models based on fractional Brownian motion $(B^H_t)_{t \geq 0}$ such as the \emph{fractional Black-Scholes model} $S_t=\exp(\mu t + \sigma B^H_t)$, where $\mu \in \mathbb{R},$ $\sigma > 0$ and Hurst parameter $H \in (0,1) \setminus\{\frac{1}{2}\}$, Rogers \cite{R97} and Cheridito \cite{C03} showed explicitly how to construct these arbitrage opportunities. Such models have been proposed by Mandelbrot~\cite{M63} for their natural fractal scaling behaviour and related statistical properties. They are prime examples of non-semimartingale models to start with.

While fractional models cannot be covered by the classical theory of frictionless financial markets, recent results \cite{G06, GRS08, GRS10} illustrate that this can be done in an arbitrage-free and economically meaningful way by taking (arbitrary small) proportional transaction costs into account. As has been shown by Guasoni \cite{G06}, the crucial property for the absence of arbitrage under transaction costs is that fractional Brownian motion is sticky. Conceptually, the absence of arbitrage allows to consider portfolio optimisation also for non-semimartingale price processes under transaction costs; see \cite{G02}. However, so far there have been no results on how to obtain the optimal trading strategy in non-semimartingale models.

In this paper, we address this question. For this, we investigate portfolio optimisation under transaction costs for non-semimartingale price processes satisfying the stickiness condition such as the fractional Black-Scholes model and utility functions $U: \mathbb{R} \to \mathbb{R}$ that are defined on the whole real line. The prime example of such a utility is exponential utility $U(x)=- \exp (-x).$ Besides the non-linearity of the wealth dynamics under transaction costs, the main difficulty is that fractional Brownian motion is neither a semimartingale nor a Markov process and therefore the standard tools from stochastic calculus are quite limited. The basic idea to overcome these issues is to use the concept of a \emph{shadow price}. This is a semimartingale price process $\widehat{S}=(\widehat{S}_t)_{0 \leq t \leq T}$ such that the solution to the frictionless utility maximisation problem for this price process gives the same optimal strategy and utility as the original problem under transaction costs. % for the process $S=(S_t)_{0 \leq t \leq T}.$

Our {\it main result} is established in Theorem \ref{mt3} below. It shows the existence of shadow prices for utility functions that are bounded from above, under the assumption that the price process $S=(S_t)_{0 \leq t \leq T}$ is continuous and sticky. Theorem \ref{mt3} also ensures that an optimal trading strategy exists. In the frictionless case one typically assumes the existence of an equivalent local martingale measure for the price process having suitable integrability properties to achieve this. In contrast, our sufficient conditions under transaction costs are more robust and hold in a wide variety of models; see \cite{BS10,C08,GSV11,GRS08,GR15,HPR14,P11,P10}\footnote{Note that, if a process has conditional full support, it is also sticky.}. They apply in particular to the fractional Black-Scholes model and exponential utility. Moreover, we give an example that illustrates that the condition that the price process $S=(S_t)_{0 \leq t \leq T}$ is sticky cannot be replaced by the assumption that it satisfies the condition $(NFLVR)$ of ``no free lunch with vanishing risk'' (without transaction costs).

The connection to frictionless financial markets is then the key to use tools from semimartingale calculus for the potentially non-semimartingale price processes $S=(S_t)_{0 \leq t \leq T}$ by simply applying them to the shadow price process $\widehat{S}=(\widehat{S}_t)_{0 \leq t \leq T}.$ This also allows us to exploit known results for portfolio optimisation in frictionless financial markets under transaction costs. For the fractional Black-Scholes model we obtain in this manner that the shadow price process is an Itô process given by
\begin{align}\label{int:ito}
d \widehat{S}_t = \widehat{S}_t (\widehat{\mu}_t dt + \widehat{\sigma}_t dW_t), \quad 0 \leq t \leq T,
\end{align}
where $\mu=(\widehat{\mu}_t)_{0 \leq t \leq T}$ and $\widehat{\sigma}=(\widehat{\sigma}_t)_{0 \leq t \leq T}$ are predictable processes such that the solution to \eqref{int:ito} is well-defined in the sense of Itô integration. 

We expect that analysing the coefficients $\widehat{\mu}=(\widehat{\mu}_t)_{0 \leq t \leq T}$ and $\widehat{\sigma} = (\widehat{\sigma}_t)_{0 \leq t \leq T}$ of the It\^o process \eqref{int:ito} should also allow to obtain quantitative results for the optimal strategy under transaction costs in the fractional Black-Scholes model. A thorough analysis of these coefficient processes is left to future research.

By the definition of the shadow price, the optimal strategies under transaction costs and the corresponding frictionless problem only trade, if the shadow price $\hS$ is equal to the bid price or ask price, i.e.~$\hS=(1-\lambda)S$ or $\hS=S$, respectively. For sufficiently small transaction costs, we show the intuitively obvious fact that -- with high probability -- the optimal strategy actually does trade as opposed to just keeping the initial position in bond. As a consequence we obtain a rather surprising result on the pathwise behaviour of fractional Brownian motion: the trajectories may touch an It\^o process in a one-sided manner without reflection. The set on which the paths touch contains the set on which the optimal strategies trade.

It is tempting to conjecture that the above described touching of the trajectories of the fractional Brownian motion and the It\^o process happens on a Cantor-like compact subset of $[0,T]$ without isolated points and that the optimal trading strategy is continuous on $(0,T)$ and of local time type. When $S$ is the usual Black-Scholes model, it is well known that these properties hold true; see \cite{TKA88,DN90,SS94}. However, in the present fractional case, the question seems to be completely open.

The conditions that the price process $S=(S_t)_{0 \leq t \leq T}$ is continuous and sticky are invariant under equivalent changes of measure. Therefore, our main result also ensures the existence of exponential utility indifference prices for all bounded European contingent claims $C$ by the usual change of measure given by $\frac{dP_C}{dP}=\frac{\exp(C)}{E[\exp(C)]}$; compare \cite{ER00,Detal02} for the frictionless case. The question is then, if this allows to obtain more reasonable prices in the fractional Black-Scholes model. Recall that the concept of super-replication leads by the face-lifting theorem \cite{GRS08} only to economically trivial prices in these models; compare also \cite{SSC95}.

It is ``folklore'' that the existence of a shadow price is in general related to the solution of a dual problem; see \cite{KMK11,CMKS14,CS14,CSY14}.  We establish this relation for utility functions taking finite values on the whole real line and c\`adl\`ag price processes and provide the necessary duality results in this setup. Similarly as in the frictionless case \cite{S01}, this builds up upon results from utility maximisation for utility functions $U:(0,\infty)\to\R$ that have been recently established in \cite{CS14} under transaction costs. Moreover, we use an ``abstract version'' of the duality for utility functions on the whole real line in the spirit of those in \cite{KS99} for utility functions on the positive half-line.

The understanding of the duality, sometimes called the ``martingale method'', in the context of portfolio optimisation goes back to \cite{KLS87,HP91,KLSX91} in the frictionless case. Under transaction costs, our work complements the dynamic duality results \cite{CK96,CW01,CS14,CSY14} for utility functions on the positive half-line as well as the static duality results \cite{DPT01,B02,BM03,CO11,BY13} for (possibly) multi-variate utility functions.

The insight that utility maximisation can be studied under proportional transaction costs also for non-semimartingale price processes goes back to Guasoni \cite{G02}. In that paper, utility functions $U:(0,\infty)\to\R$ are considered under the assumption that the problems are well posed. However, in this setup it is not clear whether or not this assumption is satisfied for non-semimartingale processes such as the fractional Black-Scholes model and popular utilities like logarithmic utility $U(x)=\log(x)$. In particular, a counter-example in \cite{CSY14} shows that it is not sufficient to suppose that the price process is continuous and sticky to guarantee the existence of a shadow price. For utility functions $U:(0,\infty)\to\R$ that are bounded from above like power utility $U(x)=\frac{1}{p}x^p$ with $p\in(-\infty,0)$, Guasoni's result \cite{G02} applies and establishes the existence of an optimal trading strategy under transaction costs. It remains as an open question, whether a shadow price exists in this setting.\footnote{Note added in proof: This question has been answered in \cite{CPSY16} in the meantime. If the indirect utility is finite, it is sufficient for the existence of a shadow price that the price process is continuous and satisfies the condition $(TWC)$ of ``two way crossing''; see \cite{Ben12,Pey15}. Combining this with the fact that fractional Brownian motion satisfies a law of iterated logarithm not only at deterministic times but also stopping times (see Theorem 1.1 of \cite{Pey15}), it allows to deduce the existence of a shadow price for the fractional Black-Scholes model and \emph{all} utility functions $U:(0,\infty)\to\R$ satisfying the condition of reasonable asymptotic elasticity.\label{footnote}}

The paper is organised as follows. We formulate the problem in Section 2. Section 3 contains the duality results and the relation of the solution to the dual problem and the shadow price for utility functions on the whole real line. Our main result, which asserts the existence of a shadow price, is established in Section 4. We explain how to specialise this result to the fractional Black-Scholes model and exponential utility in Section 5. In Theorem~\ref{thm}, we give the result on the pathwise behaviour of fractional Brownian motion. Finally, the Appendix contains an ``abstract version'' of the duality result established in Section 3 that is used in its proof.

\section{Formulation of the problem}

We consider a financial market consisting of one riskless and one risky asset. The riskless asset is assumed to be constant to one. Trading the risky asset incurs proportional transaction costs $\lambda \in (0,1)$. This means that one has to pay a (higher) ask price $S_t$ when buying risky shares but only receives a lower bid price $(1-\lambda)S_t$ when selling them. Here \mbox{$S=(S_t)_{0\leq t\leq T}$} denotes a strictly positive, adapted, c\`adl\`ag (right-continuous process with left limits) process defined on some underlying filtered probability space $\big(\Om,\F,(\F_t)_{0\leq t\leq T},P\big)$ with fixed finite time horizon $T\in(0,\infty)$ satisfying the usual assumptions of right-continuity and completeness. As usual equalities and inequalities between random variables hold up to $P$-nullsets and between stochastic processes up to $P$-evanescent sets.

\emph{Trading strategies} are modelled by $\R^2$-valued, predictable processes $\vp=(\vp^0_t,\vp^1_t)_{0\leq t\leq T}$ of finite variation, where $\vp^0_{t}$ and $\vp^1_{t}$ describe the holdings in the riskless and the risky asset, respectively, after rebalancing the portfolio at time $t$.  For any process $\psi=(\psi_t)_{0\leq t\leq T}$ of finite variation we denote by $\psi=\psi_0+\psi^{\uparrow}-\psi^{\downarrow}$ its Jordan-Hahn decomposition into two non-decreasing processes $\psi^{\uparrow}$ and $\psi^{\downarrow}$ both null at zero. The total variation $|\psi|_t$ of $\psi$ on $(0,t]$ is then given by $|\psi|_t=\psi^{\uparrow}_t+\psi^{\downarrow}_t$. For $0\leq s<t\leq T$, the total variation of $\psi$ on $(s,t]$ denoted by $\int_s^t|d\psi_u|$ is then simply $\int_s^t|d\psi_u|=|\psi|_t-|\psi|_s$. Note that, any process $\psi$ of finite variation is in particular l\`adl\`ag (with right and left limits). For any l\`adl\`ag process $X=(X_t)_{0\leq t\leq T}$, we denote by $X^c$ its continuous part given by
$$X^c_t:=X_t-\sum_{s<t} \Delta_+ X_s -  \sum_{s\leq t} \Delta X_s,$$
where $\Delta_+ X_t:=X_{t+}-X_t$ are its right and $\Delta X_t:=X_t-X_{t-}$ its left jumps.

A strategy $\vp=(\vp^0_t,\vp^1_t)_{0\leq t\leq T}$ is called \emph{self-financing}, if 
\begin{equation}\label{eq:sf}
\int_s^td\varphi^0_u\leq-\int_s^tS_ud\varphi^{1,\uparrow}_u +\int_s^t (1-\lambda)S_ud\varphi^{1,\downarrow}_u, \quad 0 \leq s\leq  t \leq T,
\end{equation}
where the integrals
\begin{align*}
\int^t_s S_u d\varphi^{1,\uparrow}_u&:= \int^t_s S_u d\varphi_u^{1,\uparrow,c} + \sum_{s < u \leq t}S_{u-}\Delta \varphi_u^{1,\uparrow} + \sum_{s \leq u < t} S_u \Delta_+ \varphi_u^{1,\uparrow},\\
\int^t_s(1-\lambda) S_u d\varphi_u^{1,\downarrow}&:= \int^t_s (1-\lambda) S_u d\varphi_u^{1, \downarrow,c} + \sum_{s < u \leq t} (1-\lambda)S_{u-}\Delta\varphi^{1, \downarrow}_u + \sum_{s \leq u < t} (1-\lambda)S_u \Delta_+ \varphi^{1,\downarrow}_u
\end{align*}
can be defined pathwise by using Riemann-Stieltjes integrals, as explained in \cite{CS13,CS14,S13} for example. The total variation of $\vp=(\vp^0,\vp^1)$ on $(s,t]$ is given by $\int_s^t|d\vp_u|=\int_s^t|d\vp^0_u|+\int_s^t|d\vp^1_u|$.

A self-financing strategy $\vp=(\vp^0,\vp^1)$ is called \emph{admissible}, if there exists some constant $M>0$ such that its \emph{liquidation value} satisfies
\be
V^{liq}_t(\vp):=\vp^0_t+(\vp^1_t)^+(1-\lambda)S_t-(\vp^1_t)^-S_t\geq -M, \quad 0 \leq t \leq T.\label{liq}
\ee
%for some constant $M>0$.

For $x\in\R$, we denote by $\cA^{\lambda}_{adm}(x)$ the set of all self-financing and admissible trading strategies under transaction costs $\lambda$ starting from initial endowment $(\vp^0_0,\vp^1_0)=(x,0)$ and
$$\cC^\lambda_{b}(x):=\{V^{liq}_T(\vp)~|~\vp=(\vp^0,\vp^1)\in\cA^{\lambda}_{adm}(x)\}.$$

As explained in Remark 4.2 in \cite{CS06}, we can assume without loss of generality that $\vp^1_T=0$ and therefore have
$$\cC^\lambda_{b}(x)=\{\vp^0_T~|~\vp=(\vp^0,\vp^1)\in\cA^{\lambda}_{adm}(x)\}.$$

 A \emph{$\lambda$-consistent price system} is a pair of stochastic processes $Z=(Z^0_t, Z^1_t)_{0 \leq t \leq T}$ consisting of the density process $Z^0=(Z^0_t)_{0 \leq t \leq T}$ of an equivalent local martingale measure $Q\sim P$ for a price process $\widetilde{S}=(\widetilde{S}_t)_{0 \leq t \leq T}$ evolving in the bid-ask spread $[(1-\lambda)S,S]$ and the product $Z^1=Z^0\widetilde{S}$. Requiring that $\widetilde{S}$ is a local martingale under $Q$ is tantamount to the product $Z^1=Z^0\widetilde{S}$ being a local martingale under $P$. Similarly, an \emph{absolutely continuous $\lambda$-consistent price system} is a pair of stochastic processes $Z=(Z^0_t, Z^1_t)_{0 \leq t \leq T}$ consisting of the density process $Z^0=(Z^0_t)_{0 \leq t \leq T}$ of an absolutely continuous local martingale measure $Q\ll P$ for a price process $\widetilde{S}=(\widetilde{S}_t)_{0 \leq t \leq T}$ evolving in the bid-ask spread $[(1-\lambda)S,S]$ and the product $Z^1=Z^0\widetilde{S}$ which is assumed to be a local martingale. Under transaction costs these concepts play a similar role as equivalent and absolutely continuous local martingale measures in the frictionless case. We denote by $\cZ^\lambda_e$ the set of all $\lambda$-consistent price systems and by $\cZ^\lambda_a$ the set of all absolutely continuous $\lambda$-consistent price systems.
 
% Similarly, we call a pair of stochastic processes $Z=(Z^0_t, Z^1_t)_{0 \leq t \leq T}$ consisting of the density process $Z^0=(Z^0_t)_{0 \leq t \leq T}$ of an absolutely conintuous local martingale measure $Q\ll P$ for a price process $\widetilde{S}=(\widetilde{S}_t)_{0 \leq t \leq T}$ evolving in the bid-ask spread $[(1-\lambda)S,(1+\lambda)S]$ and the product $Z^1=Z^0\widetilde{S}$.
 
While absence of arbitrage in the frictionless setting in the form of the existence of an equivalent local martingale measure for the price process $S=(S_t)_{0 \leq t \leq T}$ implies that it has to be a semimartingale (this property is invariant under equivalent changes of measure), non-semimartingales can be used to model asset prices in an arbitrage-free way as soon as proportional transaction costs are taken into account. Indeed, for the prime example of a non-seminarmartingale, geometric franctional Brownian motion $S_t:= \exp(B^H_t)$ with Hurst parameter $H \in (0,1) \setminus \{\frac{1}{2}\}$, Guasoni \cite{G06} showed that this price process is arbitrage-free for any proportion $\lambda \in (0,1)$ of transaction costs and hence admits a $\lambda$-consistent price system for any $\lambda \in (0,1)$ by the fundamental theorem of asset pricing for continuous processes under small transaction costs in \cite{GRS10}. As has been observed by Guasoni, the crucial property of fractional Brownian motion, which allows to deduce the arbitrage freeness, is that it is sticky.

\begin{defi}\label{def:sticky}
A stochastic process $X=(X_t)_{0 \leq t \leq T}$ is \emph{sticky}, if
$$P\left(\sup_{t \in [\tau, T]} | X_t - X_\tau | < \delta,\, \tau < T\right) > 0,$$
for any $[0,T]$-valued stopping time $\tau$ with $P(\tau<T)>0$ and any $\delta > 0.$
\end{defi}

By Proposition 2 in \cite{BS10} the stickiness condition is preserved under a transformation of the process $X=(X_t)_{0 \leq t \leq T}$ by continuous functions. Therefore it does not make a difference, if we require that the $\mathbb{R}_+$-valued process $S=(S_t)_{0 \leq t \leq T}$ or $X_t:= \log (S_t)$ is sticky.

In this paper, we want to investigate the existence of optimal trading strategies in models based on fractional Brownian motion $(B^H_t)$ such as the \emph{fractional Black-Scholes model}, where
$$S_t =\exp (\mu t + \sigma B^H_t), \quad 0 \leq t \leq T,$$
where $\mu \in \mathbb{R}$ and $\sigma>0$.
%These processes (with the exception of $H=\frac{1}{2}$) are non-semimartingales. they admit, for frictionless trading, an \emph{unbounded profit with bounded risk (UPBR)} already for trading strategies and therefore no solution with finite expected utility to any frictionless utility maximisation problem can exist; see Proposition 4.19 in \cite{KK07}. However, as we shall see this is different under transaction costs.

To that end, we consider a utility function $U:\mathbb{R} \to \mathbb{R}$ that is defined and finite on the whole real line, increasing, strictly convex, continuously differentiable and satisfying the Inada conditions $U'(-\infty) = \lim_{x \to -\infty} U'(x)=\infty$ and $U'(\infty)= \lim_{x \to \infty} U'(x)=0.$ The prime example of such a utility is exponential utility $U(x)=-\exp(-x)$. While for utility functions on the positive half-line negative wealth is forbidden by the admissibility condition of non-negative wealth, this is not ruled out in the present setting but only penalised by giving it a low utility. Therefore, the optimal trading strategy is in general not attained in the set of admissible trading strategies (which are uniformly bounded from below) and the ``good definition'' of ``allowed'' trading strategies becomes crucial; see \cite{S03} for results in the frictionless setting. In the frictionless case, there are two approaches to deal with this issue. The first is to use a dual definition and to consider all trading strategies whose wealth processes are a super-martingale under all equivalent local martingale measures (ELMM) $Q$ with finite $V$-expectation, i.e.~$E[V(y\frac{dQ}{dP})]<\infty$ for some $y>0$, where $V(y):=\sup_{x\in\R}\{U(x)-xy\}$ for $y>0$ denotes the Legendre transform of $-U(-x)$; see for example \cite{Detal02,KS02,BF08}.

%The analog of ELMM are strictly consistent price systems under transaction costs, i.e.~$\lambda'$-consistent price systems under transaction costs such that $\lambda'\in(0,\lambda)$. As it seems to be rather difficult to establish the existence of a strictly consistent price system with finite $V$-expectation directly for the fractional Black-Scholes model, i.e.~the existence of a $\lambda'$-consistent price system $\bar{Z}=(\bar{Z}^0_t, \bar{Z}^1_t)_{0 \leq t \leq T}$ such that 
%\begin{align}\label{eq:fent}
%E[V(y \bar{Z}^0_T)] < \infty
%\end{align}
%for some $y>0$, 
We follow the second approach of \cite{S01} to consider the ``closure'' of the set of terminal wealths of admissible trading strategies with respect to expected utility.

For this, we define 
\begin{multline*}
\cC^\lambda_U (x) = \big\{ g \in L^0 (P; \mathbb{R} \cup \{\infty\})~ |~ \exists g_n \in \cC^\lambda_b (x)~\text{s.t.} \quad \text{$U(g_n) \in L^1(P)$ and $U(g_n) \stackrel{L^1(P)}{\longrightarrow} U(g)$} \big\}
\end{multline*}
and consider the maximisation problem
\begin{equation} \label{P1}
E[U(g)] \to \max!, \quad g \in \cC_U^{\lambda}(x).
\end{equation}
Clearly,
\begin{align}\label{tag5}
u(x):=\sup_ {g \in \cC^\lambda_U(x)}E[U(g)]=\sup_ {g \in \cC^\lambda_b(x)}E[U(g)].
\end{align}

Note that $U(g_n) \stackrel{L^1(P)}{\longrightarrow} U(g)$ implies that $g_n \to g$ in $L^0(\Om, \mathcal{F}, P; \mathbb{R} \cup \{\infty\}),$ with respect to convergence in probability, since $U:\mathbb{R} \to \mathbb{R}$ is strictly increasing.

While the $g_n$ are real-valued random variables, it may -- a priori -- indeed happen that the solution $\hg(x)$ to \eqref{P1} takes the value $\infty$ with strictly positive probability, i.e.~$P(\hg(x) = \infty) > 0.$ As explained in \cite{A05} in the frictionless case, this can only happen, if $U(\infty) < \infty$, and does not contradict the no arbitrage assumption. In our setting under transaction costs, we show in Example \ref{Ex} below how this phenomenon arises. The question is therefore: does there exist a self-financing trading strategy $\hvp=(\hvp^0_t, \hvp^1_t)_{0 \leq t \leq T}$ under transaction costs $\lambda$ that attains the solution $\hg(x)$ to \eqref{P1} in the sense that $\hg(x)= V_T^{liq} (\hvp)$? For this, we consider the set $\mathcal{A}^\lambda_U (x)$ of all predictable finite variation processes $\varphi=(\varphi^0_t, \varphi^1_t)_{0 \leq t \leq T}$, starting at $(\varphi^0_0, \varphi^1_0)=(x,0),$ satisfying the self-financing condition \eqref{eq:sf} and such that there exists $ \varphi^n=(\varphi^{0,n}, \varphi^{1,n}) \in \mathcal{A}^\lambda_{adm} (x)$ verifying that $U\big(V^{liq}_T (\varphi^n)\big)\in L^1(P)$ and $U\big(V^{liq}_T (\varphi^n)\big) \stackrel{L^1(P)}{\longrightarrow}U\big(V^{liq}_T (\varphi)\big).$

Note that the latter convergence again implies that $V^{liq}_T (\varphi^n) \stackrel{L^0(P)}{\longrightarrow}V^{liq}_T (\varphi)$ by the strict monotonicity of $U$.

Requiring only that the terminal liquidation value $V^{liq}_T (\varphi)$ can be approximated by the terminal liquidation values $V^{liq}_T (\varphi^n)$ of admissible trading strategies $ \varphi^n=(\varphi^{0,n}, \varphi^{1,n}) \in \mathcal{A}^\lambda_{adm} (x)$ seems to be a rather weak version of attainability. However, as we shall see in Proposition \ref{mt2} and Theorem \ref{mt3} below, our results yield that
$$P\left[(\varphi^{0,n}_t,\varphi^{1,n}_t)\rightarrow (\varphi^0_t, \varphi^1_t) ,\ \forall t\in[0,T]\right] = 1,$$
which implies
$$P\left[V^{liq}_t(\varphi^n)\rightarrow V^{liq}_t(\varphi) ,\ \forall t\in[0,T]\right] = 1$$
by the definition of the liquidation value in \eqref{liq}.

We investigate the question of attainability by using the concept of a shadow price.

\begin{defi}\label{def:sp}
A semimartingale price process $\widehat{S}=(\widehat{S}_t)_{0 \leq t \leq T}$ is called a \emph{shadow price process}, if 
\bi
\item[\bf{1)}] $\widehat{S}$ is valued in the bid-ask spread $[(1-\lambda) S, S]$ 
\item[\bf{2)}] The solution $\hvt=(\hvt_t)_{0 \leq t \leq T}$ to the frictionless utility maximisation problem
\begin{equation}\label{P3}
E[U(x + \vt \sint \widehat{S}_T)]\to \max!, \quad \vt \in \mathcal{A}_U(x;\hS),
\end{equation}
exists in the sense of \cite{S01}. Here, $\mathcal{A}_U(x;\hS)$ denotes the set of all $\widehat{S}$-integrable (in the sense of Itô), predictable processes $\vt=(\vt_t)_{0 \leq t \leq T}$ such that there exists a sequence $(\vt^n)_{n=1}^\infty$ of self-financing and admissible trading strategies $\vt^n=(\vt^n_t)_{0 \leq t \leq T}$ without transaction costs\footnote{That is $\hS$-integrable, predictable processes $\vt^n=(\vt^n_t)_{0 \leq t \leq T}$ such that
$X_t=x+\vt^n\sint \hS_t\geq - M(n)$  for all $0\leq t\leq T$
for some constant $M(n)>0$ that might depend on $n$; see \cite{S01} for example.} such that $U(x + \vt^n \sint \widehat{S}_T)\in L^1(P)$ and $U(x + \vt^n \sint \widehat{S}_T) \stackrel{L^1(P)}{\longrightarrow} U(x + \vt \sint \widehat{S}_T).$ 
\item[\bf{3)}] The optimal trading strategy $\hvt=(\hvt_t)_{0 \leq t \leq T}$ to the frictionless problem \eqref{P3} coincides with the holdings in stock $\hvp^1=(\hvp^1_t)_{0 \leq t \leq T}$ to the utility maximisation problem \eqref{P1} under transaction costs such that $x+\hvt\sint\hS_T=V^{liq}_T(\hvp)=\hg(x)$.

%\item[\bf{3)}] The optimal trading strategy $\hvt=(\hvt_t)_{0 \leq t \leq T}$ to the frictionless problem \eqref{P3} is of finite variation and setting $\hvp^1=\hvt$, $\hvp^0_0=x$ and $\hvp^0=-\hS d\hvt$ gives the optimal trading strategy $\hvp=(\hvp^0_t,\hvp^1_t)_{0\leq t\leq T}$ to problem \eqref{P1} under transaction costs.% and coincides with the holdings in stock $\hvp^1=(\hvp^1_t)_{0 \leq t \leq T}$ to the utility maximisation problem \eqref{P1} under transaction costs and setting  $x+\hvt\sint\hS_T=V^{liq}_T(\hvp)=\hg(x)$.
\end{itemize}
\end{defi}

The basic idea is that, if a shadow price $\widehat{S}=(\widehat{S}_t)_{0 \leq t \leq T}$ for \eqref{P1} exists, this allows us to obtain the optimal trading strategy for the utility maximisation  problem \eqref{P1} under transaction costs by solving the frictionless utility maximisation problem \eqref{P3}. To the frictionless problem \eqref{P3}, we can then apply all known results from the frictionless theory to solve it. Since the shadow price $\widehat{S}=(\widehat{S}_t)_{0 \leq t \leq T}$ has to be a semimartingale, this allows us in particular to transfer some of the techniques from semimartingale calculus to utility maximisation problem \eqref{P1} for the possible non-semimartingale price process $S=(S_t)_{0 \leq t \leq T}.$

Note that the existence of a shadow price implies that the optimal strategy $\hvt=(\hvt_t)_{0 \leq t \leq T}$ to the frictionless problem \eqref{P3} is of finite variation and that both optimal strategies $\hvt=(\hvt_t)_{0 \leq t \leq T}$ and $\hvp^1=(\hvp^1_t)_{0 \leq t \leq T}$ that coincide $\hvt=\hvp^1$ only trade, if $\widehat{S}$ is at the bid or ask price, i.e. 
$$\{d\hvt=d\hvp^1>0\}\subseteq\{\hS=S\}  \quad \mbox{and} \quad \{d\hvt=d\hvp^1<0\}\subseteq\{\hS=(1-\lambda)S\}$$
in the sense that
\begin{align}
\{d\hvt^c=d\hvp^{1,c}>0\}&\subseteq \{\hS=S\}, & \{d\hvt^c=d\hvp^{1,c}<0\}&\subseteq \{\hS=(1-\lambda)S\},\notag \\
\{\Delta \hvt=\Delta \hvp>0\}&\subseteq \{\hS_-=S_-\}, &  \{\Delta \hvt=\Delta \hvp<0\}&\subseteq \{\hS_-=(1-\lambda)S_-\}, \notag \\
\{\Delta_+ \hvt=\Delta_+ \hvp>0\}&\subseteq \{\hS=S\}, &  \{\Delta_+ \hvt=\Delta_+ \hvp<0\}&\subseteq \{\hS=(1-\lambda)S\}.\label{2.10}
\end{align}

Here, a precise mathematical meaning of the inclusions \eqref{2.10} above is given by
\begin{align*}
%\int^T_0 \mathbbm{1}_{\{\hS\ne S\}}(u)d\hvt^{1,\uparrow}_u&=
\int^T_0 \mathbbm{1}_{\{\hS\ne S\}}(u)\hvp^{1,\uparrow}_u={}&\int^T_0 \mathbbm{1}_{\{\hS\ne S\}}(u) d\hvp_u^{1,\uparrow,c}+ \sum_{0 < u \leq T}\mathbbm{1}_{\{\hS_-\ne S_-\}}(u)\Delta \hvp_u^{1,\uparrow}\\
&+ \sum_{0 \leq u < T} \mathbbm{1}_{\{\hS\ne S\}}(u)\Delta_+ \hvp_u^{1,\uparrow}=0,\\
%\int^T_0 \mathbbm{1}_{\{\hS\ne(1-\lambda)S\}}(u)d\hvt^{1,\downarrow}_u&=
\int^T_0 \mathbbm{1}_{\{\hS\ne(1-\lambda)S\}}(u)\hvp^{1,\downarrow}_u={}& \int^T_0 \mathbbm{1}_{\{\hS\ne(1-\lambda)S\}}(u) d\hvp_u^{1,\downarrow,c}+ \sum_{0 < u \leq T}\mathbbm{1}_{\{\hS_-\ne (1-\lambda)S_-\}}(u)\Delta \hvp_u^{1,\downarrow}\\
&+ \sum_{0 \leq u < T} \mathbbm{1}_{\{\hS\ne (1-\lambda)S\}}(u) \Delta_+ \hvp_u^{1,\downarrow}=0.
\end{align*}

It is ``folklore'' that the shadow price is related to the solution of the dual problem; see Proposition 3.9 of \cite{CS14} for example. In the present setting of a utility function that is defined on the whole real line, we explain this relation in the next section.

\section{Duality theory} 
We discuss the connections between shadow prices and the solution to the dual problem for utility functions on the whole real line. The following duality relations can be obtained similarly as their frictionless counterparts in \cite{S01}. This has already been implicitly exploited in the static setup of \cite{B02}. We will prove this result in the appendix by reducing it to an ``abstract version''.
\bt[Utility functions on the whole real line]\label{mainthm}
Suppose that $S$ is locally bounded and admits a $\lambda'$-consistent price system for all $\lambda'\in(0,1)$, that $U:\R\to\R$ satisfies the Inada conditions $U'(-\infty) = \lim_{x \to -\infty} U'(x)=\infty$ and $U'(\infty)= \lim_{x \to \infty} U'(x)=0$, has reasonable asymptotic elasticity, i.e. $AE_{\infty}(U):=\varlimsup\limits_{x\to\infty}\frac{xU'(x)}{U(x)}<1$ and $AE_{-\infty}(U):=\varliminf\limits_{x\to-\infty}\frac{xU'(x)}{U(x)}>1$, and that
\be
u(x):=\sup_{g\in\cC^\lambda_U(x)}E[U(g)]<U(\infty)\label{eq:fu}
\ee
for some $x\in\R$. Then:
\bi
\item[{\bf 1)}] The primal value function $u$, defined in \eqref{tag5}, and the dual value function
$$v(y):=\inf_{(Z^0,Z^1)\in\mathcal{Z}^\lambda_{a}}E[V(yZ^0_T)],$$
where $V(y):=\sup_{x\in\R}\{U(x)-xy\}$ for $y>0$ denotes the Legendre transform of $U$, are conjugate, i.e.,
\begin{eqnarray*}u(x)=\inf_{y>0}\{v(y)+xy\},\qquad v(y)=\sup_{x\in\R}\{u(x)-xy\},
\end{eqnarray*}
and continuously differentiable. The functions $u$ and $-v$ are strictly concave and satisfy the Inada conditions
$$\text{$\Lim_{x\to-\infty}u'(x)=\infty,\qquad\Lim_{y\to\infty}v'(y)=\infty,\qquad\Lim_{x\to\infty}u'(x)=0,\qquad\Lim_{y\to0}v'(y)=-\infty$}.$$
The primal value function $u$ has reasonable asymptotic elasticity. 
\item[{\bf 2)}] For $y>0$, the solution $\hZ(y)=\big(\hZ^0(y),\hZ^1(y)\big)\in\mathcal{Z}^\lambda_{a}$ to the dual problem
\begin{equation}
\textstyle
E\left[V\big(yZ^0_T\big)\right]\to\min!\label{D1}, \qquad{Z=(Z^0,Z^1)\in\mathcal{Z}^\lambda_{a}},
\end{equation}
exists, the first component $\hZ^0_T(y)$ is unique and the map $y\mapsto\hZ^0_T(y)$ is continuous in variation norm.
\item[{\bf 3)}] For $x\in\R$, the solution $\hg(x)\in\cC^\lambda_U(x)$ to the primal problem \eqref{P1} exists, is unique and given by
\be
\hg(x)=(U')^{-1}\left(\hy(x)\hZ^0_T\big(\hy(x)\big)\right),\label{eq:dr:1}
\ee
where $\hy(x)=u'(x)$.
% let $\widehat y (x)=u'(x)>0$ which is the unique solution to
%$$
%v(y)+xy\to\min!,\qquad y>0.
%$$
%Then, $\widehat g (x)$ and $\widehat h \big(\widehat y(x)\big)$ are given by $(U')^{-1}\big(\widehat h \big(\widehat y(x)\big)\big)$ and $U'\big(\widehat g (x)\big)$, respectively, and we have that $E\big[\widehat g(x)\widehat h\big(\widehat y (x)\big)\big]=x\widehat y(x)$. In particular, the process
%$$\widehat{Y}^0\big(\widehat y(x)\big)\hvp^0(x)+ \widehat{Y}^1\big(\widehat y(x)\big)\hvp^1(x)=\Big(\widehat{Y}^0_t\big(\widehat y(x)\big)\hvp^0_t(x)+ \widehat{Y}^1_t\big(\widehat y(x)\big)\hvp_t^1(x)\Big)_{0\leq t\leq T}$$
%is a martingale for $\big(\hvp^0(x),\hvp^1(x)\big)\in\cA(x)$ and $\big(\widehat{Y}^0\big(\widehat y(x)\big),\widehat{Y}^1\big(\widehat y(x)\big)\big)\in\cB\big(\widehat y (x)\big)$ satisfying \eqref{martcond} with $y=\widehat y (x)$.
%\item[{\bf 4)}]

%If there exists, for some $\lambda'\in(0,\lambda)$, a $\lambda'$-consistent price system $\bar{Z}=(\bar{Z}^0,\bar{Z}^1)$ such that
%$$E\left[V\big(y\bar{Z}^0_T\big)\right]<\infty$$
%for some $y>0$, then there exists an optimal
\item[{\bf 4)}] We have the formulae
$$v'(y)=E\left[\hZ^0_T(y)V'\big(y\hZ^0_T(y)\big)\right]\qquad\text{and}\qquad xu'(x)=E\left[\hg(x)U'\big(\hg(x)\big)\right],$$
where we use the convention that $0\cdot\infty=0$, if the random variables are of this form.
%\item[{\bf 4)}] There exists a sequence $(\vp^n)_{n=1}^\infty$ of self-financing and admissible trading strategies $\vp^n=(\vp^{0,n}_t,\vp^{1,n}_t)_{0\leq t\leq T}\in\cA^{\lambda}_{adm}(x)$ under transaction costs such that
%$U\big(V^{liq}_t(\vp^n)\big)\to U\big(\hg(x)\big)$ in $L^1(P)$ and $V^{liq}_t(\vp^n)\to \hg(x)$ in $L^0(P;\R\cup\{\infty\})$.

%If $\Big(\hZ^0\big(\hy(x)\big),\hZ^1\big(\hy(x)\big)\Big)\in\cZ^\lambda_{e}$, then the solution to the primal problem \eqref{P1} is attainable, i.e.~there exists $(\hvp^0,\hvp^1)\in\cA^\lambda_{U}(x)$ such that 
% \begin{equation*}
%   \hX_T(x)=\hvp^0_T(x)=V^{liq}_t(\hvp) = I\left(\widehat{y}(x)\widehat{Z}^0_T\big(\widehat{y}(x)\big)\right).
% \end{equation*}
%and $\hS:=\frac{\hZ^1}{\hZ^0}$ is a shadow price.
%Moreover, 
%$$\widehat{Y}^0\big(\widehat y(x)\big)\hvp^0(x)+ \widehat{Y}^1\big(\widehat y(x)\big)\hvp^1(x)=\widehat{Y}^0\big(\widehat y(x)\big)\left(x+ \hvp^1(x)\sint \tfrac{\widehat{Y}^1}{\widehat{Y}^0}\right).$$
%%which implies that $\{\Delta\hvp^1_{t+1}>0\}\subseteq\left\{\frac{\widehat{Y}^1_t}{\widehat{Y}^0_t}=S_t\right\}$ and $\{\Delta\hvp^1_{t+1}<0\}\subseteq\left\{\frac{\widehat{Y}^1_t}{\widehat{Y}^0_t}=(1-\lambda)S_t\right\}$ for $t=0,\ldots,T$.
%\item[{\bf 5)}] Finally, we have $v(y)=\Inf_{(Z^0,Z^1)\in\mathcal{Z}^\lambda}E[V(yZ^0_T)]$.
\ei
\et

Why did we focus on utility functions $U$ taking finite values on the entire real line? The reason is that, for utility functions $U\!:(0, \infty) \to \mathbb{R}$ on the positive half-line shadow prices might fail to exist due to the fact that the solution to the dual problem is not necessarily attained as a local martingale but in general only as a supermartingale; see for example \cite{BCKMK13,CMKS14,CS14,CSY14}. We do not know how to successfully overcome this difficulty for models like the fractional Black--Scholes model in that context.\footnote{Note added in proof: We answered this question in \cite{CPSY16} in a quite satisfactory way: for the fractional Black-Scholes model, there exists a shadow price for {all} utility functions $U:(0,\infty)\to\R$ satisfying the condition of reasonable asymptotic elasticity. See the footnote on page \pageref{footnote} for more details.}
 This ``supermartingale phenomenon'' does not appear for utilities $U:\mathbb{R} \to \mathbb{R}$ on the whole real line, the dual optimiser is guaranteed to be a local martingale. On the other hand, the solution $\widehat{Z}=(\widehat{Z}^0, \widehat{Z}^1)$ to the dual problem \eqref{D1} may  -- in general -- fail to induce a shadow price due to the fact that it might only be a absolutely continuous $\lambda$-consistent price system, i.e.~that $P(\widehat{Z}^0_T =0) > 0.$ By the duality relation \eqref{eq:dr:1}, the set $\{\widehat{Z}^0_T=0\}$ is equal to the set $\{\hg(x)=\infty\}.$ As $V(0)=U(\infty)$, such a behaviour can only arise, if $U(\infty) < \infty$ and there exists no $\lambda'$-consistent price system $\bar{Z}=(\bar{Z}^0_t, \bar{Z}^1_t)_{0 \leq t \leq T}$ such that $E[V(y\bar{Z}^0_T)] < \infty$ for some $y > 0$; compare \cite{C75, S01, A05} for the frictionless case. For utility functions such that $U(\infty)=\infty$, the dual optimiser $\widehat{Z}=(\widehat{Z}^0, \widehat{Z}^1)$, provided it exists, always satisfies $\hZ^0_T>0$ almost surely. However, for these utility functions, the condition \eqref{eq:fu} seems hard to verify for non-semimartingale price process such as the fractional Black-Scholes model.

The following proposition shows that the existence of a strictly consistent price system with finite $V$-expectation ensures the attainability of the primal optimiser $\hg(x)$. It generalises Lemma 25 in \cite{B02} to our setting and its proof follows by similar arguments. 
\begin{prop}\label{mt2}
Under the assumptions of Theorem \ref{mainthm}, suppose that, for some $\lambda' \in (0,\lambda)$, there exists a $\lambda'$-consistent price system $\bar{Z}=(\bar{Z}^0, \bar{Z}^1) \in \cZ^{\lambda'}_e$ such that
$$E[V(\bar{y}\bar{Z}^0_T)] < \infty$$
for some $\bar{y}>0.$ Then the solution to the primal problem \eqref{P1} is attainable, i.e.~there exists $\hvp=(\hvp^0,\hvp^1)\in\cA^\lambda_{U}(x)$ such that $V^{liq}_T(\hvp)=\hg(x)$, and there exist $\tvp^n=(\tvp^{0,n},\tvp^{1,n})\in\cA^\lambda_{adm}(x)$ such that
\be
P\left[(\widetilde{\varphi}^{0,n}_t,\widetilde{\varphi}^{1,n}_t)\rightarrow (\widehat{\varphi}^0_t, \widehat{\varphi}^1_t) ,\ \forall t\in[0,T]\right] = 1\label{mt2:eq1}.
\ee
\end{prop}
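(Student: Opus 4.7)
The plan is to combine Theorem~\ref{mainthm} with a pathwise compactness argument for predictable finite variation strategies under transaction costs of the type developed in \cite{CS06,CS14}. By Theorem~\ref{mainthm}(3) the primal optimiser $\hg(x) \in \cC^\lambda_U(x)$ exists. Unwinding the definition of $\cC^\lambda_U(x)$, I choose a sequence $\varphi^n = (\varphi^{0,n}, \varphi^{1,n}) \in \cA^\lambda_{adm}(x)$ (after liquidating at time $T$, i.e.\ $\varphi^{1,n}_T := 0$, which does not affect $V^{liq}_T$) such that $U(V^{liq}_T(\varphi^n)) \to U(\hg(x))$ in $L^1(P)$; in particular $V^{liq}_T(\varphi^n) \to \hg(x)$ in probability by strict monotonicity of $U$.

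The extra assumption $\lambda' < \lambda$ together with $\bar Z \in \cZ^{\lambda'}_e$ plays a twofold role. First, valuing any $\lambda$-admissible strategy in the narrower $\lambda'$-bid-ask spread $[(1-\lambda')S, S]$ produces at every trade a slackness of order $(\lambda-\lambda')/(1-\lambda')$, which, via the supermartingale property of $\bar Z^0\varphi^{0,n} + \bar Z^1 \varphi^{1,n}$, converts into a uniform $L^1(P)$-bound on the total variation of $\varphi^{1,n}$ weighted by $\bar Z^1$. Second, the integrability $E[V(\bar y \bar Z^0_T)] < \infty$ combined with the Fenchel inequality $U(g) \leq V(\bar y \bar Z^0_T) + \bar y \bar Z^0_T g$ gives uniform integrability of $\big(V^{liq}_T(\varphi^n)\big)^-$ and shows that the $L^1$-convergence of $U(V^{liq}_T(\varphi^n))$ is stable under forward convex combinations.

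Given these uniform variation estimates, I apply a Koml\'os-type compactness result for predictable finite variation processes under transaction costs (in the spirit of \cite[Theorem 3.4]{CS06} and its adaptations in \cite{CS14}), to extract forward convex combinations $\tvp^n \in \mathrm{conv}(\varphi^n,\varphi^{n+1},\ldots) \subset \cA^\lambda_{adm}(x)$ converging $P$-almost surely, simultaneously for all $t \in [0,T]$, to some predictable l\`adl\`ag process $\hvp = (\hvp^0,\hvp^1)$ of finite variation, starting at $(x,0)$. The self-financing inequality \eqref{eq:sf} is linear in $\vp$ and hence preserved under convex combinations; using the Helly-type convergence of the associated Riemann--Stieltjes integrals of $S$ against the $\tvp^{1,n,\uparrow}$, $\tvp^{1,n,\downarrow}$ (whose total variations are uniformly bounded) and respecting the splitting of jumps into $\Delta$ and $\Delta_+$ parts, the inequality transfers to $\hvp$, so $\hvp$ is self-financing. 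Continuity of $V^{liq}_T$ in $(\vp^0_T,\vp^1_T)$ yields $V^{liq}_T(\tvp^n)\to V^{liq}_T(\hvp)$ almost surely, and since $U(V^{liq}_T(\tvp^n))$ still converges in $L^1(P)$ to $U(\hg(x))$ by convexity of $-U$, we conclude $V^{liq}_T(\hvp) = \hg(x)$ a.s. Thus $\hvp \in \cA^\lambda_U(x)$ attains the primal optimum, and the convergence \eqref{mt2:eq1} is built into the construction of $\tvp^n$.

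The hard part will be the compactness step: (i) making the uniform $L^1$-variation bound for $\varphi^{1,n}$ rigorous when the liquidation values are not uniformly bounded from below, where the Fenchel estimate against $\bar Z^0_T$ and the strict inclusion $\lambda' < \lambda$ must be used jointly; and (ii) transferring the self-financing constraint through the pointwise limit while keeping track of the c\`adl\`ag versus l\`adl\`ag structure, i.e.\ of the distinction between $S$ and $S_-$ in the integrals appearing in \eqref{eq:sf}. Both obstacles are of the kind that the Koml\'os-type machinery of \cite{CS06,CS14} is designed to handle, and their resolution is the dynamic counterpart of the static argument of \cite[Lemma 25]{B02}.
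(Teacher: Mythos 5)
Your proposal follows essentially the same approach as the paper: supermartingale property of the $\bar Z$-valued portfolio obtained from the $\lambda'$-consistent price system, slack produced by $\lambda-\lambda'$ converted into a bound on the downward variation (via the increasing process $A^n_t=(\lambda-\lambda')\int_0^t S_u\,d\vp^{1,n,\downarrow}_u$), Fenchel inequality against $\bar y\bar Z^0_T$ plus $E[V(\bar y\bar Z^0_T)]<\infty$ to control integrability, and Koml\'os-type compactness from \cite{CS06} to extract a pointwise-convergent sequence of convex combinations whose limit is self-financing.

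Two imprecisions are worth tightening. First, what you actually get from the supermartingale and Fenchel estimates is uniform integrability of $\bigl(\bar Z^0_T(V^{liq}_T(\vp^n)+A^n_T)\bigr)^-$ and an $L^1(P)$-bound on $\bar Z^0_T(V^{liq}_T(\vp^n)+A^n_T)$, not directly uniform integrability of $(V^{liq}_T(\vp^n))^-$ nor an $L^1$-bound on the unweighted total variation; since $\bar Z^0_T$ is a.s.\ strictly positive but not bounded away from zero, the only conclusion is $L^0$-boundedness of $\conv\{A^n_T\}$. Second, to pass from a bound on $\int_0^T S_u\,d\vp^{1,n,\downarrow}_u$ to a bound on $|\vp^{1,n}|_T$ (and hence $|\vp^{0,n}|_T$) you need $\inf_{0\le u\le T}S_u>0$ a.s.; this follows from $S\ge\bar S$ together with the minimum principle applied to the nonnegative $\bar Q$-supermartingale $\bar S$, a step that should be stated explicitly since it is where the $L^0$-boundedness of the total variation (the input to Proposition 3.4 of \cite{CS06}) is actually secured. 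Neither point changes the structure of the argument, which matches the paper's.
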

\begin{proof}
By Theorem \ref{mainthm} there exists a sequence $\vp^n=(\vp^{0,n},\vp^{1,n})\in\cA^\lambda_{adm}(x)$ such that
$$U\big(V^{liq}_T(\vp^n)\big)\xrightarrow{L^1(P)}U\big(\hg(x)\big).$$ Then $\left(\bar{Z}^0_t\left(\vp^{0,n}_t+\vp^{1,n}_t\bar{S}_t+A^n_t\right)\right)_{0\leq t\leq T}$
is a supermartingale for all $n$, where $\bar{S}:=\frac{\bar{Z}^1}{\bar{Z}^0}$ and $A^n_t :=(\lambda-\lambda') \int_0^tS_ud\vp^{1,n,\downarrow}_u$. Indeed, by integration by parts we can write
$$\bar{Z}^0_t (\vp^{0,n}_t+\vp^{1,n}_t\bar{S}_t)= \bar{Z}^0_t \left(\vp^{0,n}_t + \int_0^t\bar{S}_u d\vp^{1,n}_u+ \int_0^t\vp_u^{1,n}d\bar{S}_u \right).$$
Since $\bar{S}\in[(1-\lambda')S, S]$ and
 $$\vp^{0,n}_t\leq x-\int_0^tS_ud\vp^{1,n,\uparrow}_u+\int_0^t(1-\lambda)S_ud\vp^{1,n,\downarrow}_u$$
 by the self-financing condition \eqref{eq:sf}, the process
$ \left(\vp^{0,n}_t + \int_0^t\bar{S}_u d\vp^{1,n}_u+A^n_t\right)_{0\leq t\leq T}$ is non-increasing. Moreover, by Bayes' rule $\bar{S}$ is a local martingale under the measure $\bar{Q}\sim P$ given by $\frac{d\bar{Q}}{dP}=\bar{Z}^0_T$ and, since $\vp^{1,n}$ is of finite variation and hence locally bounded, the stochastic integral $\vp^{1,n}\sint \bar{S}$ is a local martingale under $\bar{Q}$. Therefore $\bar{Z}^0\left(\vp^{0,n}+\vp^{1,n}\bar{S}+A^n\right)$ is a local supermartingale under $P$ again by Bayes' rules that is bounded from below by $\bar{Z}^0V^{liq}(\vp^n)$. Since $\vp^n\in\cA^\lambda_{adm}(x)$ is admissible and $\bar{Z}^0$ a martingale, this implies that $\bar{Z}^0\left(\vp^{0,n}+\vp^{1,n}\bar{S}+A^n\right)$ is a true supermartingale so that
\begin{equation}
   E\left[\bar{Z}^0_T\left(V^{liq}_T(\vp^n)+A^n_T\right) \right]=E\left[\bar{Z}^0_T\left(\vp^{0,n}_T+(\lambda-\lambda')\int_0^TS_ud\vp^{1,n, \downarrow}_u\right) \right]\leq x\label{l:att:eq1}
 \end{equation}
 for all $n$. Combining Fenchel's inequality with the monotonicity of $U$ we can estimate
 $$\bar{Z}^0_T\left(V^{liq}_T(\vp^n)+A^n_T\right)\geq \frac{1}{\bar{y}}\left(U\big(V^{liq}_T(\vp^n)\big)-V\big(\bar{y}\bar{Z}^0_T\big)\right).$$
 Since $\frac{1}{\bar{y}}\left(U\big(V^{liq}_T(\vp^n)\big)-V\big(\bar{y}\bar{Z}^0_T\big)\right)\xrightarrow{L^1(P)}\frac{1}{\bar{y}}\left(U\big(\hg(x)\big)-V\big(\bar{y}\bar{Z}^0_T\big)\right)$, as $n\to\infty$, we obtain that $\left(\bar{Z}^0_T\left(V^{liq}_T(\vp^n)+A^n_T\right)^-\right)_{n=1}^\infty$ is uniformly integrable and hence that $\left(\bar{Z}^0_T\left(V^{liq}_T(\vp^n)+A^n_T\right)\right)_{n=1}^\infty$ is bounded in $L^1(P)$ by \eqref{l:att:eq1}. Since $\bar{Z}^0_T>0$ and $V^{liq}_T(\vp^n)\overset{L^0(P)}{\longrightarrow}\hg(x)$, this implies that $\conv\big\{A^n_T~;~n\geq1\big\}$
is bounded in $L^0(P)$. Since $\bar{S}$ is as a non-negative local $\bar{Q}$-martingale also a $\bar{Q}$-supermartingale, we have that $\inf_{0\leq u\leq T}S_u\geq \inf_{0\leq u\leq T}\bar{S}_u>0$ by the minimum principle for supermartingales. This implies that $\conv\big\{|\vp^{1,n}|_T~;~n\geq1\big\}$ and hence $\conv\big\{|\vp^{0,n}|_T~;~n\geq1\big\}$ are bounded in $L^0(P)$ as well. By Proposition 3.4 in \cite{CS06} (and its application in the proof of Theorem 3.5 therein) there exists a sequence
\begin{equation*}
 (\widetilde{\varphi}^{0,n},\widetilde{\varphi}^{1,n})\in\conv\left((\varphi^{0,n},\varphi^{1,n}), (\varphi^{0,n+1},\varphi^{1,n+1}), \dots \right)
\end{equation*}
of convex combinations and a predictable process $\hvp=(\widehat{\varphi}^0_t, \widehat{\varphi}^1_t)_{0\leq t\leq T}$ of finite variation such that 
\begin{equation}
 P\left[(\widetilde{\varphi}^{0,n}_t,\widetilde{\varphi}^{1,n}_t)\rightarrow (\widehat{\varphi}^0_t, \widehat{\varphi}^1_t) ,\ \forall t\in[0,T]\right] = 1.\label{l:att:eq2}
\end{equation}
The convergence \eqref{l:att:eq2} then implies that $\hvp=(\widehat{\varphi}^0, \widehat{\varphi}^1)$ is a self-financing trading strategy under transaction costs $\lambda$ such that $V^{liq}_T(\hvp)=\hg(x)$ and hence $\hvp\in\cA_U^\lambda(x)$.
\end{proof}
The next result shows that \eqref{mt2:eq1} is sufficient to guarantee the existence of a shadow price.
\begin{prop}\label{mt2B}
Under the assumptions of Theorem \ref{mainthm}, suppose that the solution $\hg(x)$ to the primal problem \eqref{P1} is attainable, i.e.~there exists $\hvp=(\hvp^0,\hvp^1)\in\cA^\lambda_{U}(x)$ such that $V^{liq}_T(\hvp)=\hg(x)$, and that there exist $\tvp^n=(\tvp^{0,n},\tvp^{1,n})\in\cA^\lambda_{adm}(x)$ such that
\be
P\left[(\widetilde{\varphi}^{0,n}_t,\widetilde{\varphi}^{1,n}_t)\rightarrow (\widehat{\varphi}^0_t, \widehat{\varphi}^1_t) ,\ \forall t\in[0,T]\right] = 1\label{mt2:eq1B}
\ee
Then the dual optimiser $\widehat{Z}=(\widehat{Z}^0, \widehat{Z}^1)$ to \eqref{D1} is in $\cZ^{\lambda}_e,$ i.e.~ a $\lambda$-consistent price system, and $\widehat{S}:=\frac{\widehat{Z}^1}{\widehat{Z}^0}$ is a shadow price (in the sense of Definition $\ref{def:sp}$) to problem \eqref{P1}.
\end{prop}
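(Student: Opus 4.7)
The plan is to prove the two conclusions in sequence: first that $\widehat{Z}\in\cZ^\lambda_e$, and then that $\widehat{S}=\widehat{Z}^1/\widehat{Z}^0$ satisfies conditions 1)-3) of Definition~\ref{def:sp}. For the first step, I would invoke the duality formula \eqref{eq:dr:1}, which gives $\hg(x)=(U')^{-1}\big(\hy(x)\widehat{Z}^0_T\big)$. Since $\hg(x)=V^{liq}_T(\hvp)$ is attained by a real-valued self-financing strategy, it is a.s.\ finite; since the Inada condition $U'(\infty)=0$ forces $(U')^{-1}(0)=\infty$, we conclude $\widehat{Z}^0_T>0$ $P$-a.s., so $\widehat{Z}\in\cZ^\lambda_e$. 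Then $\widehat{S}=\widehat{Z}^1/\widehat{Z}^0$ is well-defined, lies in $[(1-\lambda)S,S]$ by construction of a consistent price system, and is a $P$-semimartingale because $\widehat{S}$ is a local martingale under $\widehat{Q}\sim P$ with $d\widehat{Q}/dP \propto \widehat{Z}^0_T$. This verifies 1).

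The heart of the argument is a supermartingale identity. For any $\vp=(\vp^0,\vp^1)\in\cA^\lambda_{adm}(x)$, integration by parts and the self-financing condition \eqref{eq:sf} combined with $\widehat{S}\in[(1-\lambda)S,S]$ show that
\begin{align*}
\widehat{Z}^0_t\vp^0_t+\widehat{Z}^1_t\vp^1_t=\widehat{Z}^0_t\bigl(\vp^0_t+\vp^1_t\widehat{S}_t\bigr)
\end{align*}
is a $P$-local supermartingale, bounded below by $-M\widehat{Z}^0$, hence a genuine supermartingale. Applied to the approximating sequence $\tvp^n\in\cA^\lambda_{adm}(x)$ from \eqref{mt2:eq1B}, the pointwise convergence $\tvp^n_T\to\hvp_T$ together with uniform integrability supplied by the $L^1$-convergence $U(V^{liq}_T(\tvp^n))\to U(\hg(x))$ yields $E[\widehat{Z}^0_T\,\hvp^0_T]\le x$. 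Since $\hvp^1_T=0$, this reads $E[\widehat{Z}^0_T\hg(x)]\le x$. The dual identity $E[\hg(x)U'(\hg(x))]=xu'(x)$ in Theorem~\ref{mainthm}(4), combined with $U'(\hg(x))=\hy(x)\widehat{Z}^0_T$ and $\hy(x)=u'(x)$, gives exactly $E[\widehat{Z}^0_T\hg(x)]=x$. Hence equality holds everywhere.

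Equality in the supermartingale inequality forces the two sources of slack---the self-financing gap in \eqref{eq:sf} and the spread $\widehat{S}\in[(1-\lambda)S,S]$---to vanish on the support of $d\hvp$, which is precisely the complementary slackness \eqref{2.10}. Consequently $d\hvp^0_u+\widehat{S}_u\,d\hvp^1_u=0$ pathwise in the l\`adl\`ag Riemann--Stieltjes sense, and integration by parts yields
\begin{align*}
\hvp^0_t+\hvp^1_t\widehat{S}_t=x+\int_0^t\hvp^1_u\,d\widehat{S}_u,\qquad 0\le t\le T.
\end{align*}
In particular $\hvp^1$ is $\widehat{S}$-integrable, and at $t=T$ (using $\hvp^1_T=0$) we obtain $x+\hvp^1\sint\widehat{S}_T=\hvp^0_T=V^{liq}_T(\hvp)=\hg(x)$, giving the identification required in 3).

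It remains to prove optimality of $\hvt:=\hvp^1$ in the frictionless problem \eqref{P3}, which will also yield 2). For any $\vt\in\mathcal{A}_U(x;\widehat{S})$ approximated by frictionless admissible strategies $\vt^n$, the process $x+\vt^n\sint\widehat{S}$ is a $\widehat{Q}$-supermartingale bounded below, so $E[\widehat{Z}^0_T(x+\vt^n\sint\widehat{S}_T)]\le x$; passing to the $L^1(P)$-limit of $U(x+\vt^n\sint\widehat{S}_T)$ via Fenchel's inequality $U(w)\le V(y\widehat{Z}^0_T)+yw\widehat{Z}^0_T$ yields $E[U(x+\vt\sint\widehat{S}_T)]\le v(y)+xy$ for every $y>0$, hence $\le u(x)$. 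Since $\hvp^1$ attains this bound by the previous paragraph, it is the frictionless optimiser, and uniqueness of the primal optimiser in Theorem~\ref{mainthm}(3) pins down $\hvt=\hvp^1$. The main obstacle is the careful bookkeeping in the integration-by-parts calculation underlying the supermartingale property and the subsequent identification of trading regions, because the l\`adl\`ag strategies and c\`adl\`ag price process produce separate left-jump, continuous, and right-jump components in \eqref{2.10} that must each be shown to vanish off the bid/ask sets.
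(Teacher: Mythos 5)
Your proof follows the same broad route as the paper (duality formula to get $\widehat Z^0_T>0$, supermartingale identity for admissible strategies, integration by parts, pass to the limit, complementary slackness, then frictionless optimality), but there is a genuine gap at the critical step where you extract the complementary slackness.

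You establish the terminal equality $E[\widehat Z^0_T\widehat\varphi^0_T]=x$ and then assert that ``equality holds everywhere'' forces the non-decreasing process $A$ in the integration-by-parts decomposition
$\widehat Z^0\widehat\varphi^0+\widehat Z^1\widehat\varphi^1=\widehat Z^0\bigl(x+\widehat\varphi^1\sint\widehat S-A\bigr)$
to vanish. But the terminal equality alone does not yield this. Writing the terminal identity as
$E\bigl[\widehat Z^0_T(x+\widehat\varphi^1\sint\widehat S_T)\bigr]-E[\widehat Z^0_T A_T]=x$,
you would need $E\bigl[\widehat Z^0_T(x+\widehat\varphi^1\sint\widehat S_T)\bigr]\le x$ to conclude $E[\widehat Z^0_T A_T]\le 0$. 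This would follow if $\widehat Z^0(x+\widehat\varphi^1\sint\widehat S)$ were a supermartingale, but it is merely a local martingale, and $\widehat\varphi\in\mathcal A^\lambda_U(x)$ does not carry a uniform lower bound (the admissibility constants $M_n$ of the approximants $\widetilde\varphi^n$ may blow up), so no boundedness-from-below argument is available. What is actually needed first is the \emph{process-level} supermartingale property of $\widehat Z^0\widehat\varphi^0+\widehat Z^1\widehat\varphi^1$; then it has constant expectation by Theorem \ref{mainthm}(4) and is hence a true martingale, so $\widehat Z^0 A$ is a non-negative local martingale starting at $0$, hence $\equiv 0$. The paper obtains this process-level supermartingale property by a class (D) argument applied to the non-negative submartingales $\bigl(\widehat Z^0_t\widetilde\varphi^{0,n}_t+\widehat Z^1_t\widetilde\varphi^{1,n}_t\bigr)^-$, which gives uniform integrability of the negative parts at \emph{every} stopping time $\tau$, and then Fatou along stopping times transfers the supermartingale inequality to the limiting process. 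Your argument considers only $\tau=T$ and so does not reach the process-level statement. Everything after that point (identifying the frictionless optimiser via the frictionless duality, uniqueness pinning down $\widehat\vartheta=\widehat\varphi^1$) is essentially what the paper does and is fine, modulo making precise that $\widehat\varphi^1\in\mathcal A_U(x;\widehat S)$, which the paper handles by citing part (iv) of Theorem 2.2 in \cite{S01}.
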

\bp
Since $\hg(x)=V^{liq}_T(\hvp)<\infty$, we have that $\hy(x)\hZ^0_T=U'\big(\hg(x)\big)>0$ by the duality relation \eqref{eq:dr:1} and therefore that the dual optimiser $\widehat{Z}=(\widehat{Z}^0, \widehat{Z}^1)$ is in $\cZ^{\lambda}_e$. It then follows along the same arguments as in the proof of Proposition \ref{mt2} after replacing $\vp^n=(\vp^{0,n},\vp^{1,n})$ by $\tvp^n=(\tvp^{0,n},\tvp^{1,n})$ and $\bar{Z}=(\bar{Z}^0,\bar{Z}^1)$ by $\hZ=(\hZ^0,\hZ^1)$ and setting $\lambda'=\lambda$ that $(\widehat{Z}^0 \tvp^{0,n} + \widehat{Z}^1 \tvp^{1,n})_{n=1}^\infty$ is a sequence of supermartingales $\widehat{Z}^0 \tvp^{0,n} + \widehat{Z}^1 \tvp^{1,n}=(\widehat{Z}_t^0 \tvp^{0,n}_t + \widehat{Z}^1_t \tvp^{1,n}_t)_{0\leq t\leq T}$ such that $\left(\big(\widehat{Z}^0_T \tvp^{0,n}_T + \widehat{Z}^1_T \tvp^{1,n}_T\big)^-\right)_{n=1}^\infty$
is uniformly integrable. This implies that each $\left(\big(\widehat{Z}^0_t \tvp^{0,n}_t + \widehat{Z}^1_t \tvp^{1,n}_t\big)^-\right)_{0\leq t\leq T}$ is a non-negative submartingale and hence of class (D) so that $\left(\big(\widehat{Z}^0_\tau \tvp^{0,n}_\tau + \widehat{Z}^1_\tau \tvp^{1,n}_\tau\big)^-\right)_{n=1}^\infty$ is uniformly integrable for every $[0,T]$-valued stopping time $\tau$. Since
$$\text{$\widehat{Z}^0_\tau \tvp^{0,n}_\tau + \widehat{Z}^1_\tau \tvp^{1,n}_\tau\xrightarrow{\text{$P$-a.s.}}\widehat{Z}^0_\tau \hvp^0_\tau + \widehat{Z}^1_\tau \hvp^1_\tau,\quad$ as $n\to\infty$,}$$
for every $[0,T]$-valued stopping time $\tau$ by \eqref{mt2:eq1B}, we obtain that $(\widehat{Z}^0_t \hvp^0_t + \widehat{Z}^1_t \hvp^1_t)_{0\leq t\leq T}$ is a supermartingale by Fatou's lemma that has by part 4) of Theorem \ref{mainthm} constant expectation and is therefore a martingale.
%Note that, since 
%$$x\geq \varliminf_{n\to\infty} E[\widehat{Z}^0_T \tvp^{0,n}_T + \widehat{Z}^1_T \tvp^{1,n}_T]\geq 
%E[\widehat{Z}^0_T \hvp^0_T + \widehat{Z}^1_T\hvp^1_T]=x$$
%by the supermartingale property of $\widehat{Z}^0 \tvp^{0,n} + \widehat{Z}^1 \tvp^{1,n}$ and Fatou's lemma, this also gives that
%$$\widehat{Z}^0_T \tvp^{0,n}_T + \widehat{Z}^1_T \tvp^{1,n}_T\xrightarrow{L^1(P)}\widehat{Z}^0_T \hvp^0_T + \widehat{Z}^1_T\hvp^1_T,\quad\text{as $n\to\infty$,}$$ as $\left((\widehat{Z}^0_T \tvp^{0,n}_T + \widehat{Z}^1_T \tvp^{1,n}_T)^-\right)_{n=1}^\infty$
%is uniformly integrable.

By integration by parts we get that
$$\widehat{Z}^0 \hvp^0 + \widehat{Z}^1\hvp^1 = \widehat{Z}^0 (\hvp^0 + \hvp^1 \widehat{S})= \widehat{Z}^0 (x + \hvp^1 \sint \widehat{S} - A),$$
where
$$A_t= \int^t_0 \big(\widehat{S}_u - (1-\lambda)S_u\big) d\hvp^{1, \downarrow}_u + \int^t_0 \big(S_u - \widehat{S}_u\big) d \hvp_u ^{1, \uparrow},\quad 0 \leq t \leq T,$$
is a non-decreasing, predictable process.

Since $\widehat{Z}^0 \hvp^0 + \widehat{Z}^1 \hvp^1$ is a martingale and $\widehat{Z}^0 (x + \hvp^1 \sint \widehat{S})$ is a local martingale by Bayes' rule and the fact that $\hvp^1$ is of finite variation and hence locally bounded, this implies that $A\equiv 0$ and therefore that $\widehat{Z}^0(\hvp^0 + \hvp^1 \widehat{S} )= \widehat{Z}^0(x + \hvp^1 \sint \widehat{S})$ is a martingale and $\{d\hvp^1 > 0 \} \subseteq \{\widehat{S}=S\}$ and $\{ d\hvp^1 < 0 \} \subseteq \{\widehat{S}=(1-\lambda)S\}$ in the sense of \eqref{2.10}. As $\widehat{Z}=(\widehat{Z}^0, \widehat{Z}^1) \in \cZ^{\lambda}_{e},$ we obtain that $\widehat{Z}^0= (\widehat{Z}^0_t)_{0 \leq t \leq T}$ is the density process of an ELMM for the frictionless price process $\widehat{S}=(\widehat{S}_t)_{0 \leq t \leq T}.$ Therefore $\hZ^0=(\hZ^0_t)_{0\leq t\leq T}$ and $\hy(x)$ have to be also the solution to the frictionless dual problem
$$E[V(yZ_T)]+xy\to\min!,\quad y>0,\ Z\in\cZ_{a}(\hS),$$
where $\cZ_{a}(\hS)$ denotes the set of all density processes $Z=(Z_t)_{0\leq t\leq T}$ of absolutely continuous martingale measures $Q\ll P$ for the locally bounded price process $\hS=(\hS_t)_{0\leq t\leq T}$. It follows from the frictionless duality (see Theorem 2.2 in \cite{S01}) that $x + \hvp^1 \sint \widehat{S}_T=\hvp^0_T + \hvp^1_T \widehat{S}_T = V^{liq}_T (\hvp) = U' \big(\hy(x) \widehat{Z}^0_T\big)$ is the optimal terminal wealth to the frictionless utility maximisation problem \eqref{P3} for $\widehat{S}=(\widehat{S}_t)_{0 \leq t \leq T}$. Since $x + \hvp^1 \sint \widehat{S}$ is a $\hQ$-martingale under the measure $\hQ\sim P$ given by $\frac{d\hQ}{dP}=\hZ^0_T$ by Bayes' rule, we obtain that $\hvp^1=(\hvp^1_T)_{0 \leq t \leq T}$ has to be the optimal strategy to the frictionless utility maximisation problem \eqref{P3} and therefore in $\cA_U(x; \widehat{S})$ by part $(iv)$ of Theorem 2.2 in \cite{S01}, as the optimal strategy is unique in $L(\hS)$. This implies that $\widehat{S}=(\widehat{S}_t)_{0 \leq t \leq T}$ is a shadow price process in the sense of Definition \ref{def:sp} for the utility maximisation problem \eqref{P1} under transaction costs.
\ep
\section{The main result}
%The following theorem gives sufficient conditions that allow us to overcome both of these issues. As we shall explain in more detail in Section \ref{sec:fBM} below, it ensures the existence of a shadow price for exponential utility and the fraction Black-Scholes model. 
\bt\label{mt3}
Suppose that $S$ is continuous and sticky and that $U:\mathbb{R} \to \mathbb{R}$ is strictly concave, increasing, continuously differentiable, bounded from above, satisfying the Inada condition $U'(-\infty)=\lim_{x\to-\infty}U'(x)=-\infty$ and having reasonable asymptotic elasticity, i.e. $\varliminf_{x \to -\infty} \frac{xU'(x)}{U(x)} > 1$.
%\bi
%\item[\bf{i)}] $S$ is continuous and sticky.%*_t:=\sup_{0\leq u\leq t} S_u
%%\item[\bf{ii)}] $S$ is sticky. %(P.~Guasoni 2006), i.e.
%%$$P\Big(\textstyle\sup_{t\in[\sigma,T]}|\log(S_t)-\log(S_\sigma)|<\ve,\ \sigma<T\Big)>0$$
%%%$$P\left(\sup_{t\in[\sigma,T]}\big|\log(S_t)-\log(S_\sigma)\big|<\ve,\ \sigma<T\right)>0$$
%%for all $\ve>0$ and all stopping times $\sigma$ with $P(\sigma<T)>0$.
%\item[\bf{ii)}] The filtration $\FF=(\cF_t)_{0\leq t \leq T}$ is continuous, i.e.~every local martingale is continuous.
%\ei

Then we have for any $x\in\R$ and any proportion of transaction costs $\lambda\in(0,1)$ that:
\bi
\item[\bf{1)}] An optimal trading strategy $\widehat{\varphi}(x) = (\widehat{\varphi}^0_t(x), \widehat{\varphi}^1_t(x)) _{0 \leq t \leq T} \in \mathcal{A}^{\lambda}_U(x)$ for \eqref{P1} exists.
\item[\bf{2)}] There exist admissible trading strategies $\tvp^n=(\tvp^{0,n},\tvp^{1,n})\in\cA^\lambda_{adm}(x)$ which are maximising for \eqref{P1} and such that 
$$P\left[(\widetilde{\varphi}^{0,n}_t,\widetilde{\varphi}^{1,n}_t)\rightarrow (\widehat{\varphi}^0_t, \widehat{\varphi}^1_t) ,\ \forall t\in[0,T]\right] = 1.$$
In fact, for every maximising sequence $(\varphi^n)^\infty_{n=1} \in \mathcal{A}^{\lambda}_{adm} (x)$ we can find a sequence $(\widetilde{\varphi}^n)^\infty_{n=1}$ of convex combinations with the above properties. 
\item[\bf{3)}]  The dual optimiser $\widehat Z=(\widehat Z^0,\widehat Z^1)$ to \eqref{D1} is in $\mathcal{Z}^\lambda_e$, i.e.~a $\lambda$-consistent price system.
\item[\bf{4)}] $\hS:=\frac{\hZ^1}{\hZ^0}$ is a shadow price (in the sense of Def.~\ref{def:sp}). This implies in particular that 
\begin{align*}
\{d\hvp^{1,c}>0\}&\subseteq \{\hS=S\}, & \{d\hvp^{1,c}<0\}&\subseteq \{\hS=(1-\lambda)S\},\notag \\
\{\Delta \hvp^1>0\}&\subseteq \{\hS_-=S\}, &  \{\Delta \hvp^1<0\}&\subseteq \{\hS_-=(1-\lambda)S	\}, \notag \\
\{\Delta_+ \hvp^1>0\}&\subseteq \{\hS=S\}, &  \{\Delta_+ \hvp^1<0\}&\subseteq \{\hS=(1-\lambda)S\}.
\end{align*}

\ei

%In the proof of Theorem \ref{mt3} we want to apply the Duality Theorem \ref{mainthm}. For this we have

\et
\label{mt3:p1}
The proof of Theorem \ref{mt3} will be broken into several lemmas. We begin by verifying the conditions of the duality theorem (Theorem \ref{mainthm}). Since $S$ is continuous and sticky, combining Corollary 2.1 in \cite{G06} and Theorem 2 in \cite{GRS10} yields the existence of a strictly consistent price system for all sizes of transaction costs $\lambda'\in(0,1)$. Moreover, the conditions on the utility function $U$ are satisfied by our assumptions. Therefore, we only need to check condition \eqref{eq:fu}.
\begin{lemma}\label{l:nc}
Let $U:\mathbb{R} \to \mathbb{R}$ be a utility function  that is bounded from above and $S=(S_t)_{0\leq t\leq T}$ be sticky. Then we have, for all $x\in\R$, that
\begin{equation}\label{l:finite:eq:1}
u(x)=\sup_{\varphi \in \mathcal{A}^{\lambda}_{adm} (x)} E [ U(V^{liq}_T(\varphi)) ] < U(\infty).
\end{equation}
\end{lemma}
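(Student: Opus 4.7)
The plan is to argue by contradiction. Assume $u(x) = U(\infty)$ and pick a maximising sequence $\varphi^n \in \mathcal{A}^\lambda_{adm}(x)$ with $E[U(V^{liq}_T(\varphi^n))] \to U(\infty) < \infty$. Since $U$ is bounded from above by $U(\infty)$, the non-negative random variables $U(\infty) - U(V^{liq}_T(\varphi^n))$ tend to zero in $L^1(P)$, hence in probability. Because $U$ is strictly increasing (by the standing assumptions on the utility function in Section~2), this forces $V^{liq}_T(\varphi^n) \to \infty$ in probability.

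To contradict this, I would construct an event of positive probability on which every admissible strategy starting from $(x,0)$ has liquidation value at most $x$. Since $S$ is sticky and the logarithm is continuous, Proposition~2 of \cite{BS10} ensures that $\log S$ is sticky as well. Applying Definition~\ref{def:sticky} with $\tau \equiv 0$ yields, for every $\delta > 0$,
\[
B_\delta := \Big\{ \sup_{t \in [0,T]} |\log S_t - \log S_0| < \delta \Big\}, \qquad P(B_\delta) > 0.
\]
On $B_\delta$ one has $S_0 e^{-\delta} \le S_t \le S_0 e^{\delta}$ for all $t \in [0,T]$, and the same pointwise bound passes to the left limits $S_{t-}$.

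The main step, and also the main technical point, is the pathwise estimate $V^{liq}_T(\varphi) \le x$ on $B_\delta$ for every $\varphi \in \mathcal{A}^\lambda_{adm}(x)$, provided $\delta$ is chosen so that $(1-\lambda) e^{\delta} \le e^{-\delta}$, i.e.\ $\delta \le \tfrac{1}{2}\log\tfrac{1}{1-\lambda}$. Combining the self-financing inequality \eqref{eq:sf} on $(0,T]$ with the liquidation formula \eqref{liq} and the price bounds on $B_\delta$, one obtains pathwise
\[
V^{liq}_T(\varphi) \le x - S_0 e^{-\delta}\varphi^{1,\uparrow}_T + (1-\lambda) S_0 e^{\delta}\varphi^{1,\downarrow}_T + (1-\lambda) S_0 e^{\delta}(\varphi^1_T)^+ - S_0 e^{-\delta}(\varphi^1_T)^-.
\]
Using $\varphi^1_T = \varphi^{1,\uparrow}_T - \varphi^{1,\downarrow}_T$, a case split on the sign of $\varphi^1_T$ simplifies the right-hand side to either $x + \varphi^{1,\uparrow}_T S_0 [(1-\lambda)e^{\delta} - e^{-\delta}]$ (if $\varphi^1_T \ge 0$) or $x + \varphi^{1,\downarrow}_T S_0 [(1-\lambda)e^{\delta} - e^{-\delta}]$ (if $\varphi^1_T < 0$), both of which are $\le x$ since the bracket is non-positive by the choice of $\delta$ and $\varphi^{1,\uparrow}_T, \varphi^{1,\downarrow}_T \ge 0$. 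The jump terms in the integrals of \eqref{eq:sf} cause no trouble, since both $S_u$ and $S_{u-}$ lie in the band $[S_0 e^{-\delta}, S_0 e^{\delta}]$ on $B_\delta$.

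Combining these two facts, $P(V^{liq}_T(\varphi^n) \le x) \ge P(B_\delta) > 0$ for all $n$, which contradicts $V^{liq}_T(\varphi^n) \to \infty$ in probability and proves \eqref{l:finite:eq:1}.
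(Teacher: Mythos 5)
Your proof is correct and takes essentially the same approach as the paper: use stickiness at $\tau\equiv 0$ to produce a positive-probability event of small relative price fluctuation, and show pathwise that every admissible strategy has $V^{liq}_T(\varphi)\le x$ there. The only differences are cosmetic — the paper bounds $E[U(V^{liq}_T(\varphi))]\le U(\infty)(1-P[A])+U(x)P[A]$ directly instead of arguing by contradiction, and works with the multiplicative band $|S_0/S_t-1|<\lambda/3$ rather than your logarithmic band with threshold $\delta\le\tfrac12\log\tfrac{1}{1-\lambda}$, but the pathwise estimate is the same.
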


\begin{proof}%[Proof of Lemma 4.2]
By the stickiness of $S=(S_t)_{0\leq t\leq T}$ and hence that of $X_t:= \log (S_t)$ the set\\
$$A:=\left\{\sup_{t\in[0,T]} \left| \frac{S_0}{S_t} -1 \right| <\frac{\lambda}{3}\right\} \supseteq \left\{\sup_{t\in[0,T]} \mid X_t-X_0 \mid < \log \left(1+\frac{\lambda}{3}\right)\right\}$$
has strictly positive measure, i.e. $P[A]>0$.
Similarly as in Lemma 2.5 and Proposition 2.8 in \cite{G06} we then have that $V^{liq}_T(\varphi) \leq x$ on $A$ for any $\varphi\in \mathcal{A} ^\lambda_{adm}(x)$. Indeed, using the self-financing condition \eqref{eq:sf} under transaction costs we obtain that 
\begin{align}
V^{liq}_T(\varphi)&=\varphi^0_T + \varphi^1_T S_T - \lambda S_T (\varphi^1_T)^+\nonumber\\
&\leq x -\int\limits_{0}^{T} S_u d\varphi^1_u - \lambda \int\limits_{0}^{T} S_u d\varphi^{1,\downarrow}_u + \varphi^1_T S_T - \lambda S_T (\varphi^1_T)^+\nonumber\\
&= x -\int\limits_{0}^{T} (S_u - S_0) d\varphi^1_u - \lambda \int\limits_{0}^{T} S_u  d\varphi^{1,\downarrow}_u + \varphi^1_T (S_T-S_0) - \lambda S_T (\varphi^1_T)^+ \nonumber\\
& \leq x - \frac{2}{3} \lambda \int\limits_{0}^{T} S_u d\varphi^{1,\downarrow}_u-\frac{2}{3}\lambda S_T (\varphi^1_T)^+\leq x \quad \text{on $A$}.\label{eq:s1}
\end{align}
This implies that
$$E[U(V^{liq}_T(\varphi))] \leq U(\infty) (1-P[A]) + U(x) P[A] < U(\infty)$$
for all $\varphi \in \mathcal{A}^{\lambda}_{adm} (x)$ and therefore \eqref{l:finite:eq:1} by taking the supremum.
\end{proof}

Applying the duality theorem (Theorem \ref{mainthm}) allows us to obtain a maximising sequence $\varphi^n=(\varphi^{0,n}_t, \varphi^{1,n}_t)_{0 \leq t \leq T}\in\cA^\lambda_{adm}(x)$ of self-financing and admissible trading strategies and a random variable $\hg=\hg(x) \in L^0 (P; \mathbb{R} \cup \{\infty\})$ such that $E\big[U\big(\hg(x)\big)\big] = u(x)$ and
\begin{align}
V^{liq}_T(\varphi^n) &\stackrel{P}{\longrightarrow} \hg(x),\nonumber\\
U\big(V^{liq}_T (\varphi^n)\big)&\stackrel{L^1(P)}{\longrightarrow} U\big( \hg(x)\big).\label{ms}
\end{align}

As already mentioned it may -- a priori -- happen that the random variable $\hg(x)$ takes the value $\infty$ with strictly positive probability. The following example illustrates how this phenomenon arises under transaction costs. It shows, in particular, that the condition that $S=(S_t)_{0\leq t\leq T}$ is sticky in Theorem \ref{mt3} cannot be replaced by the assumption that $S=(S_t)_{0\leq t\leq T}$ satisfies the condition $(NFLVR)$ of ``no free lunch with vanishing risk'' (without transaction costs).

\begin{example}\label{Ex}
We give an example of a price process $S=(S_t)_{0\leq t\leq 1}$ such that

\bi
\item[{\bf 1)}] $S$ is continuous.
\item[{\bf 2)}] $S$ satisfies the condition $(NFLVR)$ without transaction costs and therefore admits a $\lambda'$-consistent price system for all $\lambda'\in(0,1)$.
\item[{\bf 3)}] There exists no optimal trading strategy to the problem of maximising exponential utility $U(x)=-\exp(-x)$ under transaction costs $\lambda\in(0,\frac{1}{2})$, that is,
$$E\big[U\big(V_1^{liq}(\vp)\big)\big]=E\big[-\exp\big(-V_1^{liq}(\vp)\big)\big]\to\max!,\quad\vp\in\cA^\lambda_U(x).$$
\item[{\bf 3')}] There exists a sequence $\hvp^n=(\hvp^{0,n}_t,\hvp^{1,n}_t)_{0\leq t\leq 1}\in\cA^\lambda_U(x)$ such that $$U\big(V_1^{liq}(\hvp^n)\big)\xrightarrow{L^1(P)}0=U(\infty)$$
and therefore $\hg(x)=\infty$ $P$-a.s. In particular, we have that $|\hvp^n|_T\xrightarrow{P}\infty$.
\ei
For convenience, we give the construction on the infinite time interval $[0,+\infty]$. The corresponding example on the finite interval $[0,1]$ can be obtained by using a time change $h:[0,+\infty]\to [0,1]$ given by $h(t)=\big(1-\exp(-t)\big)$ and  considering $S_{h(t)}$ instead of $S_t$. 

We begin by specifying the ask price $S=(S_t)_{0\leq t\leq\infty}$ under an equivalent local martingale measure $Q$. Let $W=(W_t)_{t\geq 0}$ be a Brownian motion on $[0,+\infty)$ under $Q$ and set
$$\textstyle\sigma:=\inf\{t>0~|~\E(W)_t=\exp(W_t-\frac{1}{2}t)=\frac{1}{2}\}.$$%=\inf\{t>0~|~W_t-\frac{1}{2}t=-\log 2\}.$$
Define $S=(S_t)_{0\leq t\leq\infty}$ by
$$S_t=2 \E(W)^{\sigma}_t, \quad 0\leq t\leq\infty.$$

In prose, the price process $S$ starts at $2$. It then fluctuates until it hits the level $1$ for the first time at time $\sigma$ and then remains constant afterwards. Since the stopping time $\sigma$ is almost surely finite, we have that the price process is a non-negative local martingale under $Q$ such that $S_\infty=1$ $Q$-a.s.

Therefore, short selling one share of stock at time $0$ yields $2(1-\lambda)-1>0$ at time $\infty$ as liquidation value.

The problem with this strategy is, of course, that it is not admissible. Since the stock price can get arbitrarily high with strictly positive probability, the liquidation value $V^{liq}(\vp)$ can get arbitrarily small with strictly positive probability between $0$ and $\sigma$. However, we can approximate this strategy by admissible trading strategies $\bar\vp^n=(\bar\vp^{0,n}_t,\bar\vp^{1,n}_t)_{0\leq t\leq \infty}\in\cA^\lambda_{adm}(0)$. For this, we simply set $\bar\vp^{1,n}_t=-\mathbbm{1}_{\rrbracket0,\sigma_n\rrbracket}(t)$ for $0\leq t\leq\infty$, where $\sigma_n:=\inf\{t>0~|~S_t=n\}$, and define $\bar\vp^{0,n}_t$ via the self-financing condition \eqref{eq:sf} with equality. Then 
\begin{align*}
V^{liq}_\infty(\bar\vp^n)={}&\big(2(1-\lambda)-1\big)\mathbbm{1}_{\{\sigma_n\geq\sigma\}}\\
&+\big(2(1-\lambda)-n\big)\mathbbm{1}_{\{\sigma_n<\sigma\}}\xrightarrow{\text{$P$-a.s.}}1+2(1-\lambda)-1,\quad\text{as $n\to\infty$},
\end{align*}
since $\sigma_n\nearrow\infty$ $Q$-a.s. Therefore, setting $\hvp^{1,n}=n\bar\vp^{1,n}$ and $\hvp^{0,n}=n\bar\vp^{0,n}$ gives a sequence $(\hvp^n)_{n=1}^\infty$ of self-financing and admissible trading strategies $\hvp^n=(\hvp^{0,n}_t,\hvp^{1,n}_t)_{0\leq t\leq \infty}\in\cA^\lambda_{adm}(0)$ such that 
\begin{align*}
U\big(V^{liq}_\infty(\hvp^n)\big)={}&-\exp\big(-n\big(2(1-\lambda)-1\big)\big)\mathbbm{1}_{\{\sigma_n\geq\sigma\}}\\&-\exp\big(-n\big(2(1-\lambda)-n\big)\big)\mathbbm{1}_{\{\sigma_n<\sigma\}}\xrightarrow{\text{$P$-a.s.}}0,\quad\text{as $n\to\infty$}.
\end{align*}
To ensure the convergence also in $L^1(P)$, we need to specify the distribution of $S$ under $P$. Since 
\begin{align*}
E\big[U\big(V^{liq}_\infty(\hvp^n)\big)\big]={}&-\exp\big(-n\big(2(1-\lambda)-1\big)\big)P(\sigma_n\geq\sigma)\\&-\exp\big(-n\big(2(1-\lambda)-n\big)\big)P(\sigma_n<\sigma)
\end{align*}
and $-\exp\big(-\big(1+n\big(2(1-\lambda)-n\big)\big)\big)=O\big(\exp(n^2)\big)$, it will be sufficient to choose $P\sim Q$ such that $P(\sigma_n<\sigma)=o\big(\exp(-n^2)\big)$. This is possible because $A_n:=\{\sigma_n<\sigma\}$ is a decreasing sequence of sets such that $Q(A_n)>0$ and $Q(A_n)\searrow0$.

To obtain $u(x)<U(\infty)$, we flip a fair coin at time $0$. If head shows up, we use the above price process. If we observe tail, then the price process stays at $2$.
\end{example}
The above example indicates that $\hg(x)$ can only take the value $\infty$, if the total variations $(|\hvp^n|_T)_{n=1}^\infty$ of the maximising sequence $\varphi^n=(\varphi^{0,n}_t, \varphi^{1,n}_t)_{0 \leq t \leq T}\in\cA^\lambda_{adm}(x)$ of admissible trading strategies diverge to $\infty$. However, this behaviour leads to an infinite amount of trading volume and therefore of transaction costs. This cannot be optimal for a sticky price process and we now argue how to exclude it. For this, we observe that, if we have that
\begin{align}\label{4:eq1}
C:=\conv\{|\varphi^n|_T~; ~n \geq 1\} 
\end{align}
is bounded in $L^0(P)$ for a sequence $(\varphi^n)^\infty_{n=1}$ of strategies $\varphi^n \in \mathcal{A}^{\lambda}_{adm}(x)$ satisfying \eqref{ms}, there exists a sequence $(\hvp^n)^\infty_{n=1}$ of convex combinations
$$\hvp^n \in \conv (\bar{\varphi}^n, \bar{\varphi}^{n+1}, \dots)$$
and a self-financing trading strategy $\hvp=(\hvp^0_t, \hvp^1_t)_{0 \leq t \leq T}$ under transaction costs such that
\begin{align}\label{4:eq2}
P\left[(\hvp^{0,n}_t, \hvp^{1,n}_t) \xrightarrow{n\to \infty}(\hvp^0_t, \hvp^1_t),\,\forall t \in [0,T]\right]=1
\end{align}
by Proposition 3.4 in \cite{CS06} (and its application in the proof of Theorem 3.5 therein).\footnote{Note that, since $C\subseteq L^0_+(P)$ is convex and bounded, there exists by, for example, Lemma 2.3 in \cite{BS99} a probability measure $Q\sim P$ such that $C$ is bounded in $L^1(Q)$ so that the sequence $(\bar{\varphi}^n)_{n=1}^\infty$, indeed, satisfies the assumptions of Proposition 3.4 in \cite{CS06}.}

Since we then have in particular
\begin{align*}
V^{liq}_T(\hvp^n) &\stackrel{P}{\longrightarrow} V^{liq}_T (\hvp)=\hg(x)\\
U\big( V^{liq}_T (\hvp^n)\big) &\stackrel{L^1(P)}{\longrightarrow} U\big(V_T^{liq}(\hvp)\big)=U\big(\hg(x)\big),
\end{align*}
this implies that $\hvp=(\hvp^0_t, \hvp^1_t)_{0 \leq t \leq T}\in\cA^\lambda_{U}(x)$ attains the solution $\hg(x)$ to \eqref{P1} and that $\hg(x)$ is a.s.~real-valued. Therefore, it only remains to show that $(\varphi^n)^\infty_{n=1}$ satisfies \eqref{4:eq1} which will be true for any sequence $(\varphi^n)^\infty_{n=1}$ of strategies $\varphi^n \in \mathcal{A}^{\lambda}_{adm}(x)$ satisfying \eqref{ms}.

To that end, we fix any sequence $(\varphi^n)^\infty_{n=1}$ of strategies $\varphi^n \in \mathcal{A}^{\lambda}_{adm}(x)$ satisfying \eqref{ms} and denote by $\cS$ the set of all $[0,T] \cup \{\infty\}$-valued stopping times $\sigma$ such that 
$$\conv\{|\varphi^n|_{\sigma \wedge T}~;~n \geq 1\}$$
is bounded in $L^0(P).$ Then \eqref{4:eq1} corresponds to showing that $\infty \in \cS.$

\begin{lemma}
The set $\cS$ is \emph{stable under taking pairwise maxima}, i.e.~$\sigma_1, \sigma_2 \in \cS$ implies $\sigma_1 \vee \sigma_2 \in \cS.$
\end{lemma}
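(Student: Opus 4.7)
The plan is to exploit the monotonicity and additivity of total variation in time. For any $\omega \in \Omega$ and any $n$, the total variation satisfies the pointwise inequality
\begin{align*}
|\varphi^n|_{(\sigma_1 \vee \sigma_2) \wedge T}(\omega) \leq |\varphi^n|_{\sigma_1 \wedge T}(\omega) + |\varphi^n|_{\sigma_2 \wedge T}(\omega),
\end{align*}
as can be seen by splitting on $\{\sigma_1 \geq \sigma_2\}$ and $\{\sigma_1 < \sigma_2\}$: on each of these sets one of the two terms on the right already equals the left-hand side.

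The next step is to transport this to arbitrary convex combinations. If $\lambda_1, \dots, \lambda_k \geq 0$ with $\sum_{i=1}^k \lambda_i = 1$ and $n_1, \dots, n_k \in \mathbb{N}$, then taking the same convex combination on both sides yields
\begin{align*}
\sum_{i=1}^k \lambda_i |\varphi^{n_i}|_{(\sigma_1 \vee \sigma_2) \wedge T} \leq \sum_{i=1}^k \lambda_i |\varphi^{n_i}|_{\sigma_1 \wedge T} + \sum_{i=1}^k \lambda_i |\varphi^{n_i}|_{\sigma_2 \wedge T},
\end{align*}
so that every element of $\conv\{|\varphi^n|_{(\sigma_1 \vee \sigma_2) \wedge T}~;~n\geq 1\}$ is dominated (pointwise) by the sum of an element of $\conv\{|\varphi^n|_{\sigma_1 \wedge T}~;~n\geq 1\}$ and an element of $\conv\{|\varphi^n|_{\sigma_2 \wedge T}~;~n\geq 1\}$.

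To finish, I would invoke the standard fact that the sum of two sets which are bounded in $L^0(P)$ is itself bounded in $L^0(P)$: given $\varepsilon > 0$, choose $M_1, M_2$ with $P(X > M_i) < \varepsilon/2$ for all $X$ in the respective convex hull, and then $P(X_1 + X_2 > M_1 + M_2) < \varepsilon$. Since all quantities involved are non-negative, domination by an $L^0$-bounded set implies $L^0$-boundedness of $\conv\{|\varphi^n|_{(\sigma_1 \vee \sigma_2) \wedge T}~;~n \geq 1\}$, which is exactly $\sigma_1 \vee \sigma_2 \in \mathcal{S}$. I do not expect a real obstacle here; the content of the lemma is the elementary inequality for total variation, and the $L^0$-boundedness part is purely soft.
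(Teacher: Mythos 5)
Your proof is correct and follows essentially the same route as the paper: both arguments hinge on splitting on $\{\sigma_1 \geq \sigma_2\}$ versus $\{\sigma_1 < \sigma_2\}$. The only cosmetic difference is that the paper records this split as the exact identity $\psi_{(\sigma_1\vee\sigma_2)\wedge T}=\psi_{\sigma_1\wedge T}\mathbbm{1}_{\{\sigma_1\geq\sigma_2\}}+\psi_{\sigma_2\wedge T}\mathbbm{1}_{\{\sigma_1<\sigma_2\}}$ and then applies a union bound on tail probabilities, whereas you pass to the (slightly weaker but entirely sufficient) inequality $\psi_{(\sigma_1\vee\sigma_2)\wedge T}\leq\psi_{\sigma_1\wedge T}+\psi_{\sigma_2\wedge T}$ and invoke the soft fact that the sum of two $L^0$-bounded sets is $L^0$-bounded.
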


\begin{proof}
Let $\psi \in A:= \conv\{|\varphi^n|~;~n \geq 1\}$. Then
$$\psi_{(\sigma_1 \vee \sigma_2) \wedge T}=\psi_{\sigma_1 \wedge T}\mathbbm{1}_{\{\sigma_1\geq \sigma_2\}}+\psi_{\sigma_2 \wedge T}\mathbbm{1}_{\{\sigma_1< \sigma_2\}}.$$
This implies that 
\begin{align*}
\lim_{N \to \infty} \sup_{\psi \in A} P\left(\psi_{(\sigma_1 \vee \sigma_2) \wedge T} \geq N\right) \leq \lim_{N \to \infty} \sup_{\psi \in A} P\left(\psi_{\sigma_1 \wedge T} \geq N\right) + \lim_{N \to \infty} \sup_{\psi \in A} P\left(\psi_{\sigma_2 \wedge T} \geq N\right)=0
\end{align*}
and hence that $\sigma_1 \vee \sigma_2 \in \cS.$
\end{proof}

The fact that $\cS$ is stable under taking pairwise maxima allows us to obtain its essential supremum 
\begin{align}\label{806}
\widehat{\sigma}:=\esssup_{\sigma \in \cS} \sigma
\end{align}
as a limit of an increasing sequence $(\widehat{\sigma}_k)^\infty_{k=1}$ of stopping times $\widehat{\sigma}_k \in \cS$ by Theorem A.33.(b) in \cite{FS11}. Note that $\widehat{\sigma} \geq 0,$ as $0 \in \cS$, and that $\widehat{\sigma}$ again is a stopping time.

Recall that the existence of a shadow price implies that the optimal trading strategy $\hvp=(\hvp^0_t,\hvp^1_t)_{0 \leq t \leq T}$ under transaction costs only trades, if the shadow price is at the bid or ask price in the sense of \eqref{2.10}. The next lemma shows that this is already the case in an approximate sense, if we do not yet know, whether or not there is a shadow price.
\bl\label{l:B1}
Under the assumptions of Theorem \ref{mt3}, let $(\vp^n)_{n=1}^\infty$ be a maximising sequence of admissible trading strategies $\varphi^n=(\varphi^{0,n}_t, \varphi^{1,n}_t)_{0 \leq t \leq T}\in\cA^\lambda_{adm}(x)$ for problem \eqref{P1} satisfying \eqref{ms} and set $B_{1,j}=\{ \hZ^0 S-\hZ^1>\frac{1}{j}\}$ and $B_{2,j}=\{ \hZ^1-\hZ^0(1-\lambda) S>\frac{1}{j}\}$ for $j\in\N$. Then we have, for all $j\in\N$, that
\begin{align*}
&\mathbbm{1}_{B_{1,j}}\sint\vp^{1,n,\uparrow}_T+\mathbbm{1}_{B_{2,j}}\sint\vp^{1,n,\downarrow}_T\xrightarrow{P}0,\\
&\mathbbm{1}_{B_{1,j}}\sint\vp^{0,n,\downarrow}_T+\mathbbm{1}_{B_{2,j}}\sint\vp^{0,n,\uparrow}_T\xrightarrow{P}0.
\end{align*}
\el
\bp
Here, we can without loss of generality assume that we have equality in the self-financing condition \eqref{eq:sf} for the maximising strategies $(\vp^n)_{n=1}^\infty$. Since $0<\sup_{0\leq t\leq T}S_t<\infty$ $P$-a.s. by the assumption that $S$ is strictly positive and continuous, it is sufficient to prove the assertion for $\varphi^{1,n}=(\varphi^{1,n}_t)_{0 \leq t \leq T}$. This implies the assertion as well for $\varphi^{0,n}=(\varphi^{0,n}_t)_{0 \leq t \leq T}$ by the self-financing condition \eqref{eq:sf}.

By Lemma \ref{l:A}, we have that 
$$\widehat{Z}^0_T \vp^{0,n}_T \stackrel{L^1(P)}{\longrightarrow} \widehat{Z}^0_T \hg(x)$$
for any maximising sequence $\varphi^n=(\varphi^{0,n}_t, \varphi^{1,n}_t)_{0 \leq t \leq T}$ of self-financing and admissible trading strategies satisfying \eqref{ms}. As we can without loss of generality assume that $\varphi^{1,n}_T=0$, defining 
$$\widehat{X}^n_t = \varphi^{0,n}_t \widehat{Z}^0_t + \varphi^{1,n}_t \widehat{Z}^1_t, \quad 0 \leq t \leq T,$$
gives a sequence $(\widehat{X}^n)^\infty_{n=1}$ of supermartingales $\widehat{X}^n=(\widehat{X}^n_t)_{0 \leq t \leq T}$ starting at $x$ such that $\widehat{X}^n_T$ converges in $L^1(P)$ to the terminal value $\widehat{X}^\infty_T=\widehat{Z}^0_T \widehat{g}(x)$ of the martingale $\widehat{X}^\infty=(\widehat{X}^\infty_t)_{0 \leq t \leq T}$ given by
$$\widehat{X}^\infty_t=E[\widehat{Z}^0_T \hg(x) | \mathcal{F}_t], \quad 0 \leq t \leq T,$$
that is also starting at $x$ by part 4) of Theorem \ref{mainthm}. 

By integration by parts, we obtain
$$\widehat{X}^n_t = x + \varphi^{0,n} \sint \hZ^0_t+\varphi^{1,n} \sint \hZ^1_t - A^n_t,\quad 0 \leq t \leq T,$$
where
\begin{align*}
A^n_t:=&\int^t_0 \big(\hZ^0_uS_u  - \hZ^1_u \big) d\varphi_u^{1,n,\uparrow}+ \int^t_0 \big(\hZ^1_u -\hZ^0_u(1-\lambda) S_u\big) d\varphi^{1,n,\downarrow}_u, \quad 0 \leq t \leq T,
\end{align*}
 is a non-decreasing process starting at $0$. Since 
$$\hX^n_t=\widehat{Z}^0_t \big(\varphi^{0,n}_t + \varphi^{1,n}_t \widehat{S}_t\big) \geq \hZ^0_tV^{liq}_t (\varphi^n)\geq \hZ^0_t(- m), \quad 0 \leq t \leq T,$$
for some $m>0$ by the admissibility of $\vp^n$, the local martingale \mbox{$(x + \varphi^{0,n} \sint \hZ^0_t+\varphi^{1,n} \sint \hZ^1_t)_{0\leq t \leq T}$} is bounded from below by the uniformly integrable martingale $\big(\hZ^0_t(- m)\big)_{0\leq t \leq T}$ and hence a supermartingale. As the supermartingales $\widehat{X}^n=(\widehat{X}^n_t)_{0 \leq t \leq T}$ and the martingale $\widehat{X}^\infty=(\widehat{X}^\infty_t)_{0 \leq t \leq T}$ are both starting at $x$, the convergence $\hX^n_T\xrightarrow{L^1(P)}\hX^{\infty}_T$ therefore implies that $A^n_T\xrightarrow{L^1(P)}0$. Since
$$A^n_T\geq \frac{1}{j}\left(\mathbbm{1}_{B_{1,j}}\sint\vp^{1,n,\uparrow}_T+\mathbbm{1}_{B_{2,j}}\sint\vp^{1,n,\downarrow}_T\right)\geq0,$$
the latter $L^1$-convergence yields that  $\mathbbm{1}_{B_{1,j}}\sint\vp^{1,n,\uparrow}_T+\mathbbm{1}_{B_{2,j}}\sint\vp^{1,n,\downarrow}_T\xrightarrow{L^1(P)}0$ and hence also in probability.
\ep
We establish the following lemma to prove that $\widehat{\sigma}$ as defined in \eqref{806} equals $\widehat{\sigma}=\infty$ by contradiction. 

\begin{lemma}\label{A:l1}
Under the assumptions of Theorem \ref{mt3}, suppose that $P(\widehat{\sigma} < \infty)>0.$ Then there exists a stopping time $\tau$ with $P(\tau<T)>0$ such that we have
\bi
\item[{\bf 1)}] $\conv\{|\varphi^n|_{\tau \wedge T}~;~n \geq 1\}$
is bounded in $L^0(P),$
%\item[{\bf 2)}] $\big\{|\varphi|_T~;~ \varphi \in \conv \{\bar{\varphi}^n~;~n \geq 1\}\big\}$ is \emph{not} bounded in $L^0(P),$
\item[{\bf 2)}] there exists a set $A\in\cF$ with $A\subseteq\{\tau<T\}$ and $P(A) >0,$ a constant $c>0$ and a sequence $(\hvp^n)^\infty_{n=1}$ of convex combinations
$$\hvp^n \in \conv (\varphi^n, \varphi^{n+1}, \dots)$$
such that we have on $A$ that
\bi
\item[{\bf a)}] $\int^\tau_0 |d\hvp^n_u| \leq c$ for all $n$,
\item[{\bf b)}] $\int^T_\tau|d\hvp^n_u| \xrightarrow{P} \infty,$ as $n \to \infty$,
\item[{\bf c)}] $|S_t - S_\tau| \leq \frac{\lambda}{3} S_t$ for all $t \in [\tau,T].$
\ei
\ei
\end{lemma}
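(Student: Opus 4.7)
The idea is to take $\tau$ to be a stopping time in $\cS$ sitting just below the essential supremum $\widehat\sigma$, where the variation of any maximising sequence of strategies must blow up. First, I would select $K$ such that $P(\widehat{\sigma}_K < T) > 0$. Such $K$ exists: otherwise $\widehat{\sigma}_K = T$ a.s.\ for every $K$, which (using that $\widehat{\sigma}_K \in \cS$) would mean $\conv\{|\varphi^n|_T~;~n \geq 1\}$ is bounded in $L^0$, equivalently $\infty \in \cS$, forcing $\widehat{\sigma} = \infty$ a.s.\ and contradicting $P(\widehat{\sigma} < \infty) > 0$. Setting $\tau := \widehat{\sigma}_K$ immediately yields property~1), since $\tau \in \cS$.

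Properties (a) and (c) then follow from standard ingredients. For (c), I would invoke the stickiness of $S$ at $\tau$: since $P(\tau < T) > 0$, stickiness produces a set $A_0 \subseteq \{\tau < T\}$ with $P(A_0) > 0$ on which $\sup_{t \in [\tau,T]} |S_t - S_\tau|$ is arbitrarily small; shrinking $A_0$ so that $S$ is bounded below by a strictly positive constant there (possible by continuity and positivity of $S$), we can guarantee $|S_t - S_\tau| \leq \frac{\lambda}{3} S_t$. For (a), I would apply Proposition~3.4 of \cite{CS06} to $(\varphi^n)$ stopped at $\tau$: the $L^0$-boundedness of $\conv\{|\varphi^n|_\tau~;~n \geq 1\}$, which is precisely the defining property of $\tau \in \cS$, yields convex combinations $\hat\varphi^n \in \conv(\varphi^n, \varphi^{n+1}, \dots)$ with $\hat\varphi^n_t \to \hat\varphi_t$ pointwise on $[0, \tau]$ almost surely, so that $|\hat\varphi^n|_\tau$ is a.s.\ pointwise bounded. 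Restricting to a further subset $A \subseteq A_0$ of positive measure on which the bound is uniform in $n$ delivers (a) with a suitable constant $c$.

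The main obstacle is property (b). I would argue by contradiction: if the variation $\int_\tau^T |d\hat\varphi^n_u|$ stayed bounded in probability on some positive subset $B$ of $A$, then combining with (a) the total variation $|\hat\varphi^n|_T \mathbbm{1}_B$ would be bounded in $L^0$. A measurable selection arranging $B \in \cF_\tau$ would then allow one to form the stopping time $\tau^* := T \cdot \mathbbm{1}_B + \tau \cdot \mathbbm{1}_{B^c}$ and verify $\tau^* \in \cS$ with $\tau^* > \tau$ on $B$, giving a stopping time in $\cS$ strictly larger than $\tau$ on a set of positive probability. Running this up against the essential supremum property of $\widehat{\sigma}$---possibly by an adaptive refinement of $K$ and an iteration---one derives the contradiction forcing the blow-up of the variation on $[\tau, T]$. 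The main technical hurdle I anticipate is securing the $\cF_\tau$-measurability of the set $B$ on which the residual variation stays bounded, which may require an additional diagonal extraction decoupling the pointwise convergence of $\hat\varphi^n$ on $[0, \tau]$ from the behaviour on $(\tau, T]$.
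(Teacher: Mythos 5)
Your overall plan—take $\tau=\widehat{\sigma}_K$ so that property~1) and the path estimate~(c) can be extracted via stickiness, use Proposition~3.4 of \cite{CS06} for~(a), and run a contradiction against the essential supremum for~(b)—is a reasonable first sketch, and the selection of $\tau$ and properties~1) and~(c) are essentially right (the paper does also take $\tau=\widehat{\sigma}_k$, with a more elaborate two-stage stickiness argument via an auxiliary stopping time $\varrho$ immediately after $\widehat\sigma$). However, there are two genuine gaps in your treatment of~(b).

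First, your contradiction step does not close. You want to show that if $\int_\tau^T|d\hvp^n_u|$ stays bounded in probability on a set $B$, then $\tau^*:=T\mathbbm{1}_B+\tau\mathbbm{1}_{B^c}\in\cS$. But $\tau^*\in\cS$ requires $L^0$-boundedness of the \emph{entire} convex hull $\conv\{|\varphi^n|_{\tau^*\wedge T};n\geq1\}$, whereas your hypothesis only controls the \emph{particular} sequence of convex combinations $(\hvp^n)_{n\geq1}$ extracted for~(a). Boundedness of a specific sequence of convex combinations does not imply boundedness of the full hull, so the purported membership $\tau^*\in\cS$ does not follow and the contradiction collapses.

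Second, and this is the crux of the paper's argument, you never confront the fact that total variation is a \emph{convex}, not linear, functional. The hereditary unboundedness of $D=\conv\{|\varphi^n|_{\varrho\wedge T};n\geq1\}$ produces convex combinations $\psi^n=\sum_k\mu_k^n|\varphi^{m^n_k}|$ with $\psi^n_{\varrho\wedge T}\to\infty$, but it does \emph{not} directly produce blow-up of $|\hvp^n|=|\sum_k\mu_k^n\varphi^{m^n_k}|$; in general one only has $|\hvp^n|\leq\psi^n$, which goes the wrong way. The paper bridges this gap using Lemma~\ref{l:B1}: along any maximizing sequence the strategies (approximately) only buy when $\hZ^1\approx\hZ^0 S$ and only sell when $\hZ^1\approx\hZ^0(1-\lambda)S$, so the buying and selling components of different $\varphi^{m}$ cannot cancel against each other in the convex combination except up to a term that vanishes in probability. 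The paper also treats separately the degenerate case $\hZ^0_{\varrho\wedge T}=0$, where the estimate involving $\inf\hZ^0>\bar c$ is unavailable and one instead argues directly via $\hg(x)=\infty$. Your proposal contains neither the one-sided-trading input nor the case split, so property~(b) is not established by it.
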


\begin{proof}
Let $X_t=\log (S_t)$ and define the stopping time
$$\varrho:= \inf\Big\{t > \widehat{\sigma}~\Big|~|X_t - X_{\widehat{\sigma}}| > \textstyle\frac{1}{3} \log \big(1+ \frac{\lambda}{3}\big)\Big\}.$$
Clearly, $\vr>\widehat{\sigma}$ on $\{\widehat{\sigma}<T\}$ so that $P(\vr>\widehat{\sigma})=P(\widehat{\sigma} < T) > 0$. Hence
\begin{align}\label{A:eq4}
D:=\conv\{|\varphi^n|_{\varrho \wedge T}~;n \geq 1\}
\end{align}
is not bounded in $L^0(P)$ by the definition of $\widehat{\sigma}.$ Moreover, since $D\subseteq L^0_+(P)$ is convex, there exists by Lemma 2.3 in \cite{BS99} a partition of $\Om$ into disjoint sets $\Om_u, \Om_b \in \mathcal{F}_{\varrho}$ with $P(\Om_u) > 0$ such that 
\bi
\item[{\bf (i)}]
The restriction $D|_{\Om_b}=\{g\mathbbm{1}_{\Om_b}~|~g\in D\}$ of $D$ to $\Om_b$ is bounded in $L^0(P).$
\item[{\bf (ii)}] $D$ is hereditarily unbounded in $L^0(P)$ on $\Om_u$. That is, for every subset $B\in\cF$, $B\subseteq\Om_u$, $P(B)>0$, we have that $D|_{B}=\{g\mathbbm{1}_{B}~|~g\in D\}$ fails to be bounded in $L^0(P)$; see Definition 2.2 in \cite{BS99}.
\ei
Now, we can have two cases. Either $P(\Om_u \cap \{\varrho \geq T\}) > 0$ or $P(\Om_u \cap \{\varrho < T\})=P(\Om_u).$ In the first case, we set $F:=\Om_u \cap \{\varrho \geq T\}.$ In the second one, there exists by the stickiness of $S$ and hence that of $X$ a set $F \in \mathcal{F} $ with $P(F) > 0$ such that $F \subseteq \Om_u \cap \{\varrho < T\}$ and $\sup_{t \in [\varrho,T]}|X_t - X_{\varrho} | < \frac{1}{3} \log (1+ \frac{\lambda}{3})$ on $F.$

By the continuity of $S$, we can choose $k \in \mathbb{N}$ sufficiently large such that
$$\sup_{t \in [\widehat{\sigma}_k, \widehat{\sigma}]} | X_t - X_{\widehat{\sigma}_k} | < \frac{1}{3} \log \Big(1 + \frac{\lambda}{3}\Big)$$
on a set $A\in\cF$ with $A\subseteq F$ and $P(A) > 0.$

Setting $\tau=\widehat{\sigma}_k$, we then have 1) by \eqref{806} and that
$$\sup_{t \in [\tau, T]}|X_t - X_\tau | < \log \left(1 + \frac{\lambda}{3}\right)\quad\text{on $A$},$$
which implies that 
$$\text{$|S_t - S_\tau| \leq \frac{\lambda}{3} S_t$ for all $t \in [\tau,T]$ on $A$.}$$
By part 4) of Lemma 2.3 in \cite{BS99}, assertion (ii) above yields the existence of a sequence $(\psi^n)^\infty_{n=1}$ of convex combinations
\be
\psi^n\in\conv\{|\varphi^m|~;~m \geq n\}\label{psi}
\ee
such that
\begin{align}\label{A:eq5}
P\Big(\Om_u \cap \big\{\psi^n_{\varrho \wedge T} < n \big\} \Big) < \frac{1}{n}.
\end{align}
Since $\conv\{|\varphi^n|_{\tau\wedge T}~;~n \geq 1\}$ is bounded in $L^0(P)$, we can by an application of Koml\'os' lemma (see, for example, Lemma A.1 in \cite{DS94}) assume without loss of generality that
\be
\psi^n_{\tau\wedge T}\xrightarrow{\text{$P$-a.s.}}f,\quad\text{as $n\to\infty$},\label{psi1}
\ee
for some $f\in L^0_+(P)$.

Let $(\hvp^n)_{n=1}^\infty$ be a sequence of convex combinations 
$$\hvp^n=\sum_{k=1}^{K_n}\mu_k^n\vp^{m^n_k}\in\conv(\vp^n,\vp^{n+1},\ldots)$$
that is obtained from the sequence $(\varphi^n)_{n=1}^\infty$ by taking the same convex weights that lead to the sequence $(\psi^n)^\infty_{n=1}$ in \eqref{psi} from the sequence $(|\varphi^n|)_{n=1}^\infty$. By \eqref{psi1} and the convexity of the total variation, we can assume by possibly passing to a smaller set $A$ that still has positive probability $P(A)>0$ that there exists a constant $c >0$ such that 
$$|\hvp^n|_{\tau \wedge T} \leq c \quad \mbox{for all} \quad n \in \mathbb{N} \quad \mbox{on} \quad A.$$
This proves properties a) and c) of part 2). 

To establish property b), we need to consider the following two cases:
\bi
\item[\bf{(i')}] $P(\hZ^0_{\varrho\wedge T}=0,\ A)>0$,
\item[\bf{(ii')}] $P(\hZ^0_{\varrho\wedge T}>0,\ A)>0$.
\ei

In case (i'), it follows from the fact that $\hZ^0=(\hZ^0_t)_{0\leq t\leq T}$ is a non-negative martingale that $G:=\{\hZ^0_{\varrho\wedge T}=0\}\subseteq\{\hZ^0_T=0\}$. By the duality relation $\hg(x)=(U')^{-1}\big(\hy(x)\hZ^0_T\big)$, this implies that $\hg(x)=\lim_{n\to\infty}V^{liq}_T(\hvp^n)=\infty$ on $G$. Since $S=(S_t)_{0\leq t\leq T}$ is strictly positive and continuous, we have that $0<\sup_{0\leq t\leq T}S_t<\infty$ $P$-a.s. The only way we can have $\hg(x)=\lim_{n\to\infty}V^{liq}_T(\hvp^n)=\infty$ on $G$ is therefore that $\lim_{n\to\infty}|\hvp^n|_T=\infty$ on $G$ by
\begin{align*}
V^{liq}_T(\hvp^n)&\leq x -\int\limits_{0}^{T} S_u d\hvp^{1,n,\uparrow}_u +  \int\limits_{0}^{T}(1-\lambda) S_u d\hvp^{1,n,\downarrow}_u + \hvp^{1,n}_T S_T - \lambda S_T (\hvp^{1,n}_T)^+\\
 &\leq x +\left(\sup_{0\leq t\leq T}S_t\right)  | \hvp^{1,n}|_T  \to \infty \quad \mbox{on} \quad G.
\end{align*}
As $| \hvp^n|_{\tau\wedge T}\leq c$ for all $n\geq 1$ on $A\subseteq G$, we have that $\int^T_\tau|d\hvp^n_u| \to \infty$ on $\{\hZ^0_{\varrho\wedge T}=0\}\cap A$.

In case (ii'), we need to show that the fact that the sequence $(\psi^n)_{n=1}^\infty$ of convex combinations of total variation processes is unbounded in $L^0(P)$ in the sense of \eqref{psi1} implies that the sequence $(|\hvp^n|)_{n=1}^\infty$ of total variations of convex combinations is unbounded in $L^0(P)$ in the same sense. While this is not true in general, it follows in the present situation from the fact that all trading strategies $\hvp^n=(\hvp^{0,n}_t, \hvp^{1,n}_t)_{0 \leq t \leq T}$ of any maximising sequence satisfying \eqref{ms} have to buy and sell on the same sets up to an error that vanishes by Lemma \ref{l:B1}. Therefore, the difference between the total variation of the convex combinations and the convex combination of the total variations vanishes by Lemma \ref{l:B1} as well. 

To see this, we observe that we can assume without loss of generality after possibly passing to a smaller set $A$ that $\inf_{0\leq u\leq \varrho\wedge T}\hZ^0_u>\bar{c}$ for some $\bar{c}>0$. This follows by the minimum principle for supermartingales. Then, we can choose $j\in\N$ sufficiently large such that the sets $B^c_{1,j}=\{\hZ^0S-\hZ^1\leq\frac{1}{j}\}$ and $B^c_{2,j}=\{\hZ^1-\hZ^0(1-\lambda)S\leq\frac{1}{j}\}$, where $B_{1,j}=\{\hZ^0S-\hZ^1>\frac{1}{j}\}$ and $B_{2,j}=\{\hZ^1-\hZ^0(1-\lambda)S>\frac{1}{j}\}$ are as defined in Lemma \ref{l:B1}, are disjoint on $\{\inf_{0\leq u\leq \varrho\wedge T}\hZ^0_u>\bar{c}\}$. Therefore, we can estimate on $\{\inf_{0\leq u\leq \varrho\wedge T}\hZ^0_u>\bar{c}\}$ that
\begin{align*}
|\hvp^{1,n}|_{\varrho\wedge T}&=\left|\sum_{k=1}^{K_n}\mu_k^n\vp^{1,m^n_k}\right|_{\varrho\wedge T}\\
&=\left|\sum_{k=1}^{K_n}\mu_k^n\left(\vp^{1,m^n_k,\uparrow}-\vp^{1,m^n_k,\downarrow}\right)\right|_{\varrho\wedge T}\\
&=\left|\sum_{k=1}^{K_n}\mu_k^n\left(\mathbbm{1}_{B^c_{1,j}}\sint\vp^{1,m^n_k,\uparrow}-\mathbbm{1}_{B^c_{2,j}}\sint\vp^{1,m^n_k,\downarrow}+\mathbbm{1}_{B_{1,j}}\sint\vp^{1,m^n_k,\uparrow}-\mathbbm{1}_{B_{2,j}}\sint\vp^{1,m^n_k,\downarrow}\right)\right|_{\varrho\wedge T}\\
&\geq\sum_{k=1}^{K_n}\mu_k^n\left(\mathbbm{1}_{B^c_{1,j}}\sint\vp^{1,m^n_k,\uparrow}_{\varrho\wedge T}+\mathbbm{1}_{B^c_{2,j}}\sint\vp^{1,m^n_k,\downarrow}_{\varrho\wedge T}\right)-\sum_{k=1}^{K_n}\mu_k^n\left(\mathbbm{1}_{B_{1,j}}\sint\vp^{1,m^n_k,\uparrow}_{\varrho\wedge T}+\mathbbm{1}_{B_{2,j}}\sint\vp^{1,m^n_k,\downarrow}_{\varrho\wedge T}\right)\\
&= \sum_{k=1}^{K_n}\mu_k^n|\vp^{1,m^n_k}|_{\varrho\wedge T}-2\sum_{k=1}^{K_n}\mu_k^n\left(\mathbbm{1}_{B_{1,j}}\sint\vp^{1,m^n_k,\uparrow}_{\varrho\wedge T}+\mathbbm{1}_{B_{2,j}}\sint\vp^{1,m^n_k,\downarrow}_{\varrho\wedge T}\right).
\end{align*}
Similarly, we also obtain on $\{\inf_{0\leq u\leq \varrho\wedge T}\hZ^0_u>\bar{c}\}$ that
\begin{align*}
|\hvp^{0,n}|_{\varrho\wedge T}&\geq \sum_{k=1}^{K_n}\mu_k^n|\vp^{0,m^n_k}|_{\varrho\wedge T}-2\sum_{k=1}^{K_n}\mu_k^n\left(\mathbbm{1}_{B_{1,j}}\sint\vp^{0,m^n_k,\downarrow}_{\varrho\wedge T}+\mathbbm{1}_{B_{2,j}}\sint\vp^{0,m^n_k,\uparrow}_{\varrho\wedge T}\right).
\end{align*}
Combining both estimates gives on $\{\inf_{0\leq u\leq \varrho\wedge T}\hZ^0_u>\bar{c}\}$ that
$$|\hvp^n|_{\varrho\wedge T}\geq \psi^n_{\varrho\wedge T}-2\sum_{k=1}^{K_n}\mu_k^n\left(\mathbbm{1}_{B_{1,j}}\sint\vp^{1,m^n_k,\uparrow}_{\varrho\wedge T}+\mathbbm{1}_{B_{2,j}}\sint\vp^{1,m^n_k,\downarrow}_{\varrho\wedge T}+\mathbbm{1}_{B_{1,j}}\sint\vp^{0,m^n_k,\downarrow}_{\varrho\wedge T}+\mathbbm{1}_{B_{2,j}}\sint\vp^{0,m^n_k,\uparrow}_{\varrho\wedge T}\right).$$
Since we have
\begin{align*}
&\mathbbm{1}_{B_{1,j}}\sint\vp^{1,n,\uparrow}_T+\mathbbm{1}_{B_{2,j}}\sint\vp^{1,n,\downarrow}_T\xrightarrow{P}0,\\
&\mathbbm{1}_{B_{1,j}}\sint\vp^{0,n,\downarrow}_T+\mathbbm{1}_{B_{2,j}}\sint\vp^{0,n,\uparrow}_T\xrightarrow{P}0.
\end{align*}
by Lemma \ref{l:B1}, this implies that
$|\hvp^n|_{\varrho\wedge T}\xrightarrow{P}\infty$ on $\{\inf_{0\leq u\leq \varrho\wedge T}\hZ^0_u>\bar{c}\}$ and therefore that $\int^T_\tau|d\hvp^n_u| \xrightarrow{P} \infty$ on $\{\inf_{0\leq u\leq \varrho\wedge T}\hZ^0_u>\bar{c}\}\cap A$, as $| \hvp^n|_{\tau\wedge T}\leq c$ for all $n\geq 1$ on $A$.
\end{proof}

After the preparations above, we can now show that $\widehat{\sigma}=\infty$ $P$-a.s. This proves parts 1) and 2) of Theorem \ref{mt3}. Assertions 3) and 4) then follow from Proposition \ref{mt2B}.
\begin{lemma}\label{A:l2A}
Under the assumptions of Theorem \ref{mt3}, we have that $\widehat{\sigma}=\infty$ P-a.s. 

That is, for any maximising sequence $\varphi^n=(\varphi^{0,n}_t, \varphi^{1,n}_t)_{0 \leq t \leq T}\in\cA^\lambda_{adm}(x)$ of trading strategies satisfying
\eqref{ms}, we have that $C:=\conv\{|\varphi^n|_T~;~n \geq 1\}$ is bounded in $L^0(P)$.
\end{lemma}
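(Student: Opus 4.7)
The proof is by contradiction. Assume $P(\widehat\sigma<\infty)>0$. Lemma~\ref{A:l1} then furnishes a stopping time $\tau$, a set $A\in\cF$ with $A\subseteq\{\tau<T\}$ and $P(A)>0$, a constant $c>0$, and a sequence of convex combinations $\hvp^n\in\conv(\vp^n,\vp^{n+1},\ldots)$ of the given maximising sequence $(\vp^n)$ such that on $A$
\[
\textstyle \int_0^\tau|d\hvp^n_u|\leq c,\qquad \int_\tau^T|d\hvp^n_u|\xrightarrow{P}\infty,\qquad |S_t-S_\tau|\leq \tfrac{\lambda}{3}\,S_t \text{ for all } t\in[\tau,T].
\]
The plan is to derive from these properties that $V^{liq}_T(\hvp^n)\to-\infty$ on $A$ along a subsequence and, using that $U\leq U(\infty)<\infty$, to contradict the fact that $(\hvp^n)$ is still a maximising sequence.

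That $(\hvp^n)$ remains a maximising sequence follows from the concavity of the map $\vp\mapsto V^{liq}_T(\vp)=\vp^0_T+\vp^1_TS_T-\lambda S_T(\vp^1_T)^+$ (the last summand being concave in $\vp^1_T$) together with the concavity and monotonicity of $U$: writing $\hvp^n=\sum_k\mu_k^n\vp^{m_k^n}$ with $m_k^n\geq n$ one gets $V^{liq}_T(\hvp^n)\geq \sum_k\mu_k^n V^{liq}_T(\vp^{m_k^n})$ and hence $E[U(V^{liq}_T(\hvp^n))]\geq \sum_k\mu_k^n E[U(V^{liq}_T(\vp^{m_k^n}))]\to u(x)$, while admissibility of convex combinations of admissible strategies yields $E[U(V^{liq}_T(\hvp^n))]\leq u(x)$, so $\lim_n E[U(V^{liq}_T(\hvp^n))]=u(x)>-\infty$.

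The heart of the proof is a Guasoni-type estimate on $[\tau,T]$ analogous to Lemma~\ref{l:nc} but starting at $\tau$ with the non-zero (yet bounded) stock position $\hvp^{1,n}_\tau$. Decomposing $\hvp^{1,n}_u=\hvp^{1,n}_\tau+\tilde\vp^{1,n}_u$ for $u\in[\tau,T]$ so that $\tilde\vp^{1,n}_\tau=0$, applying the key oscillation step of Lemma~\ref{l:nc} to $\tilde\vp^{1,n}$ on $[\tau,T]$, and absorbing the bounded contribution $\hvp^{1,n}_\tau(S_T-S_\tau)$ together with the discrepancy between $(\tilde\vp^{1,n}_T)^+$ and $(\hvp^{1,n}_T)^+$ into a correction $R_\tau^n$ with $|R_\tau^n|\leq \tfrac{2\lambda c}{3}S_T$ on $A$, one obtains
\[
V^{liq}_T(\hvp^n)\leq \hvp^{0,n}_\tau+\hvp^{1,n}_\tau S_\tau+R_\tau^n-\tfrac{2\lambda}{3}\int_\tau^T S_u \,d\hvp^{1,n,\downarrow}_u-\tfrac{2\lambda}{3}S_T(\hvp^{1,n}_T)^+ \quad\text{on } A.
\]
The boundary term $\hvp^{0,n}_\tau+\hvp^{1,n}_\tau S_\tau+R_\tau^n$ is dominated on $A$ by an a.s.\ finite random variable independent of $n$ (using $|\hvp^{0,n}_\tau|\leq |x|+c$, $|\hvp^{1,n}_\tau|\leq c$ and continuity of $S$). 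A short case analysis based on the sign of $\hvp^{1,n}_\tau+\int_\tau^T d\hvp^{1,n,\uparrow}_u-\int_\tau^T d\hvp^{1,n,\downarrow}_u$ and using $|\hvp^{1,n}_\tau|\leq c$ gives $(\hvp^{1,n}_T)^++\int_\tau^T d\hvp^{1,n,\downarrow}_u\geq \tfrac{1}{2}\int_\tau^T|d\hvp^{1,n}_u|-\tfrac{c}{2}$, which combined with $S_*:=\inf_{[\tau,T]}S_u>0$ on $A$ and $\int_\tau^T|d\hvp^{1,n}_u|\xrightarrow{P}\infty$ on $A$ (derived from $\int_\tau^T|d\hvp^n_u|\xrightarrow{P}\infty$ via the self-financing estimate $|d\hvp^{0,n}_u|\leq S_u\,|d\hvp^{1,n}_u|$) yields $V^{liq}_T(\hvp^n)\xrightarrow{P}-\infty$ on $A$.

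After passing to an a.s.\ convergent subsequence, the reverse Fatou lemma (applicable since $U\leq U(\infty)<\infty$) gives
\[
\limsup_n E[U(V^{liq}_T(\hvp^n))]\leq E[\limsup_n U(V^{liq}_T(\hvp^n))]\leq U(\infty)P(A^c)+U(-\infty)P(A)=-\infty,
\]
contradicting $\lim_n E[U(V^{liq}_T(\hvp^n))]=u(x)>-\infty$. The main technical difficulty is verifying the clean $-\tfrac{2\lambda}{3}$-coefficient in the Guasoni-type estimate when the initial stock position $\hvp^{1,n}_\tau$ is non-zero, which is resolved by the shift-and-absorb argument sketched above relying crucially on the a.s.\ bound $|\hvp^{1,n}_\tau|\leq c$ on $A$.
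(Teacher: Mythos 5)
Your proof is correct and follows essentially the same route as the paper's: contradiction via Lemma~\ref{A:l1}, a Guasoni-type estimate on $[\tau,T]$ to force $V^{liq}_T(\hvp^n)\to-\infty$ on $A$, and a contradiction with the fact that convex combinations of a maximising sequence remain maximising. Two remarks. First, the paper streamlines the oscillation estimate by invoking the normalisation $\hvp^{1,n}_T=0$ (which is always available by Remark~4.2 of \cite{CS06}), so the terminal penalty term $-\lambda S_T(\hvp^{1,n}_T)^+$ vanishes and no shift by $\hvp^{1,n}_\tau$ or case analysis on $(\hvp^{1,n}_T)^+$ is needed; then b) of Lemma~\ref{A:l1} together with the self-financing relation directly forces $\int_\tau^T d\hvp^{1,n,\downarrow}_u\to\infty$ on $A$ and $V^{liq}_T(\hvp^n)\leq \hvp^{0,n}_\tau+\hvp^{1,n}_\tau S_\tau-\tfrac{2\lambda}{3}\int_\tau^TS_u\,d\hvp^{1,n,\downarrow}_u\to-\infty$. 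Your shift-and-absorb argument, with the case analysis yielding $(\hvp^{1,n}_T)^++\int_\tau^Td\hvp^{1,n,\downarrow}_u\geq\tfrac12\int_\tau^T|d\hvp^{1,n}_u|-\tfrac{c}{2}$, reproduces the same conclusion at the cost of extra bookkeeping, so adopting the $\hvp^{1,n}_T=0$ normalisation would shorten your proof without changing its substance. Second, your explicit justification that $(\hvp^n)$ remains a maximising sequence (via concavity of $V^{liq}_T$ and of $U$) is a useful detail that the paper leaves implicit, and your reverse-Fatou step is a valid alternative to the paper's appeal to the $L^1$-convergence $U(V^{liq}_T(\hvp^n))\to U(\hg(x))$ combined with the finiteness of $u(x)$.
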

\begin{proof}
We argue by contradiction and assume that $P(\widehat{\sigma} < \infty) > 0.$ Then there exists by 2) of Lemma \ref{A:l1} a stopping time $\tau$, a set $A\subseteq\{\tau<T\}$ with $P(A)>0,$ a constant $c>0$ and a sequence $(\hvp^n)^\infty_{n=1}$ of convex combinations
$$\hvp^n \in \conv (\bar{\varphi}^n, \bar{\varphi}^{n+1}, \dots)$$
such that we have on $A$ that
\bi
\item[{\bf a)}] $\int^\tau_0 |d\hvp^n_u| \leq c$ for all $n$,
\item[{\bf b)}] $\int^T_\tau|d\hvp^n_u| \to \infty,$ as $n \to \infty$,
\item[{\bf c)}] $|S_t - S_\tau| \leq \frac{\lambda}{3} S_t$ for all $t \in [\tau,T].$
\ei

As we can assume without loss of generality that $\hvp^{1,n}_T=0$, we obtain by combining a) -- c) with the self-financing condition \eqref{eq:sf} under transaction costs similarly as in \eqref{eq:s1} that
\begin{align}
V^{liq}_T(\hvp^n)=\hvp^{0,n}_T&\leq x -\int\limits_{0}^{T} S_u d\hvp^{1,n}_u - \lambda \int\limits_{0}^{T} S_u d\hvp^{1,n,\downarrow}_u \nonumber\\
&=\hvp^{0,n}_\tau+\hvp^{1,n}_\tau S_\tau-\int\limits_{\tau}^{T} (S_u - S_\tau) d\hvp^{1,n}_u - \lambda \int\limits_{\tau}^{T} S_u d\hvp^{1,n,\downarrow}_u\nonumber \\
&\leq \hvp^{0,n}_\tau+\hvp^{1,n}_\tau S_\tau - \frac{2}{3} \lambda \int\limits_{\tau}^{T} S_u d\hvp^{1,n,\downarrow}_u\to-\infty,\quad \text{as $n\to\infty$, on $A$}.\label{A:eq7}
\end{align}
Note that $\hvp^{1,n}_T=0$ implies that $\int^T_\tau d\hvp^{1,n,\downarrow}_u \to \infty,$ as $n \to \infty$, on $A$ by b).

Since $\hvp^n \in\conv (\varphi^n, \varphi^{n+1}, \dots),$ the sequence $(\hvp^n)^\infty_{n=1}$ also has to satisfy
%\begin{align*}%\label{A:eq6}
$$U\big(V^{liq}_T (\hvp^n)\big) \stackrel{L^1(P)}{\longrightarrow} U\big( \hg(x)\big).$$
%\end{align*}
However, this contradicts \eqref{A:eq7} and we therefore have that $P(\widehat{\sigma} < \infty) = 0.$
\end{proof}
\begin{proof}[Proof of Theorem \ref{mt3}]
We only need to prove 2). This immediately implies 1) and 3) and 4) by Proposition \ref{mt2B}. As explained after the statement of Theorem \ref{mt3} on page \pageref{mt3:p1}, the assumptions of the Duality Theorem \ref{mainthm} are satisfied under the assumptions of Theorem \ref{mt3} and by Lemma \ref{l:nc}. This allows us to apply the Duality Theorem \ref{mainthm} to obtain a maximising sequence $\varphi^n=(\varphi^{0,n}_t, \varphi^{1,n}_t)_{0 \leq t \leq T}\in\cA^\lambda_{adm}(x)$ of self-financing and admissible trading strategies and a random variable $\hg=\hg(x) \in L^0 (P; \mathbb{R} \cup \{\infty\})$ such that $E\big[U\big(\hg(x)\big)\big] = u(x)$ and
\begin{align}
V^{liq}_T(\varphi^n) &\stackrel{P}{\longrightarrow} \hg(x),\nonumber\\
U\big(V^{liq}_T (\varphi^n)\big)&\stackrel{L^1(P)}{\longrightarrow} U\big( \hg(x)\big).\label{ms2}
\end{align}
By Lemma \ref{A:l2A}, we then have that $C:=\conv\{|\varphi^n|_T~;~n \geq 1\}$
is bounded in $L^0(P)$. Therefore, there exists a sequence $(\hvp^n)^\infty_{n=1}$ of convex combinations
$$\hvp^n \in \conv (\varphi^n, \varphi^{n+1}, \dots)$$
and a self-financing trading strategy $\hvp=(\hvp^0_t, \hvp^1_t)_{0 \leq t \leq T}$ under transaction costs such that
\begin{align}\label{4:eq2}
P\left[(\hvp^{0,n}_t, \hvp^{1,n}_t) \xrightarrow{n\to \infty}(\hvp^0_t, \hvp^1_t),\,\forall t \in [0,T]\right]=1
\end{align}
by Proposition 3.4 in \cite{CS06} (and its application in the proof of Theorem 3.5 therein). The sequence $(\hvp^n)^\infty_{n=1}$ then also satisfies \eqref{ms2}, which completes the proof.
\ep

\section{A case study: Fractional Brownian Motion and Exponential Utility}

We resume here the theme of (exponential) fractional Brownian motion which was briefly discussed in the introduction. In fact, the challenge posed by this example was an important motivation for the present research.

Fractional Brownian motion has been proposed by B.~Mandelbrot \cite{M63} as a model for stock price processes more than 50 years ago. Until today, this idea poses a number of open problems. From a mathematical point of view, a major difficulty arises from the fact that fractional Brownian motion fails to be a semimartingale (except for the Brownian case $H=\frac{1}{2}$). Tools from stochastic calculus are therefore hard to apply and it is difficult to reconcile this model with the usual no arbitrage theory of mathematical finance. Indeed, it was shown in (\cite{DS94}, Theorem 7.2) that a stochastic process which fails to be a semi-martingale automatically allows for arbitrage (in a sense which was made precise in Theorem 7.2). In the special case of fractional Brownian motion, this was also shown directly by C.~Rogers \cite{R97}. 

One way to avoid this deadlock arising from the violation of the no-arbitrage paradigm is the consideration of proportional transaction costs. The introduction of proportional transaction costs $\lambda$, for arbitrarily small $\lambda > 0,$ makes the arbitrage opportunities disappear. Theorem \ref{mt3} applies perfectly to the case of fractional Brownian motion, for any Hurst index $H \in(0,1).$ As utility function $U$, we may, e.g., choose exponential utility $U(x)=-e^{-x}.$ Hence, we dispose of a duality theory for fractional Brownian motion under transaction costs and, in particular, we may find a shadow price process $\widehat{S}$ which is a semimartingale. 

Let us define the setting more formally. As driver of our model $S$, we fix a standard Brownian motion $(W_t)_{-\infty < t < \infty},$ indexed by the entire real line, in its natural (right continuous, saturated) filtration $(\mathcal{F}_t)_{-\infty < t < \infty}.$ We let the Brownian motion $W$ run from $-\infty$ on in order to apply the elegant integral representation below \eqref{C4} due to Mandelbrot and van Ness; see \cite{N03}. 

We note that the Brownian motion $(W_t)_{0 \leq t \leq T}$, now indexed by $[0,T],$ has the integral representation property with respect to the filtration $(\mathcal{F}_t)_{0 \leq t \leq T}.$ The only difference to the more classical setting, where we consider the filtration $(\mathcal{G}_t)_{0 \leq t \leq T}$ generated by $(W_t)_{0 \leq t \leq T}$ is that $\mathcal{F}_0$ is not trivial anymore. But this causes little trouble. We simply have to do all the arguments conditionally on $\mathcal{F}_0.$

Fix a Hurst parameter $H \! \in \! (0,1)\setminus\{ \frac{1}{2}\}.$ We may define the fractional Brownian motion $(B_t)_{0 \leq t \leq T}=(B^H_t)_{0 \leq t \leq T}$ as

\begin{align}\label{C4}
B_t= C (H) \int^t_{-\infty} \left( (t-s)^{H-\frac{1}{2}} - \left(|s|^{H-\frac{1}{2}}\mathbbm{1}_{(-\infty,0)} \right) \right) dW_s,\quad 0 \leq t \leq T,
\end{align}
where $C(H)$ is some constant which is not relevant in the sequel (see \cite{N03}, section 1.1 or \cite{R97}, formula (1.1)). 

We may further define a non-negative stock price process $S=(S_t)_{0 \leq t \leq T}$ by letting
\begin{align}\label{C5}
S_t= \exp(B_t), \quad \quad 0 \leq t \leq T,
\end{align}
or, slightly more generally,
\begin{align}\label{C5a}
S_t= \exp (\sigma B_t + \mu t), \quad \quad 0 \leq t \leq T,
\end{align}
for some $\sigma > 0$ and $\mu \in \mathbb{R}.$ For the sake of concreteness we stick to \eqref{C5}. We now are in a situation covered by Theorem \ref{mt3}.

As regards the stickiness of $S$, this property (Def. \ref{def:sticky}) of (exponential) fractional Brownian motion has been shown by P.~Guasoni \cite{G06}. We also fix transaction costs $\lambda > 0$ and $U(x)=-e^{-x}$, as well as an initial capital $x \in \mathbb{R},$ e.g. $x=0.$ By Theorem \ref{mt3}, we may find a primal optimizer $\hvp=(\hvp^0_t, \hvp^1_t)_{0 \leq t \leq T},$ a dual optimiser $\widehat{Z}=(\widehat{Z}^0_t, \widehat{Z}^1_t)_{0 \leq t \leq T}$ which is a $\lambda$-consistent price system, as well as a shadow price process $\widehat{S}=\frac{\widehat{Z}^1}{\widehat{Z}^0}.$
From this general theorem, we know that $\widehat{Z}^0$ is a uniquely determined martingale and that $\widehat{Z}^1$ is a local martingale. It seems rather obvious that in the present case \eqref{C5} or \eqref{C5a} the process $\widehat{Z}^1$ is, in fact, also a martingale, but we do not need this result and therefore do not attempt to prove it.

These general and rather innocent looking results have some striking consequences, also outside the realm of mathematical finance. They imply that the fractional Brownian paths may touch the paths of an Itô process in a one-sided way (Theorem \ref{thm} below). 
\vskip10pt
Let us draw some conclusions from Theorem \ref{mt3}.
\begin{lemma}\label{5.3}
In the above setting of exponential fractional Brownian motion the martingale $(\widehat{Z}^0_t)_{0 \leq t \leq T}$ has a representation as
\begin{align}\label{p21}
\widehat{Z}_t^0=\widehat{Z}_0^0\exp\left(-\int^t_0 \ \widehat{\alpha}_udW_u - \frac{1}{2} \int^t_0  \widehat{\alpha}^2_u du\right), \quad 0 \leq t \leq T,
\end{align}
for some $\mathbb{R}$-valued predictable (with respect to the filtration $(\mathcal{F}_t)_{0 \leq t \leq T}$) process $\widehat{\alpha}=(\widehat{\alpha}_t)_{0 \leq t \leq T}$ such that $\int^T_0 \widehat{\alpha}^2_t dt < \infty$ almost surely. 

The process $\widehat{X}=\log (\widehat{S})$ is an Itô process and may be represented as
\begin{align}\label{p21a}
\widehat{X}_t = \widehat{X}_0 + \int^t_0 \left(\widehat{\sigma}_u dW_u + \left(\widehat{\mu}_u - \frac{\widehat{\sigma}^2_u}{2}\right)du\right), \quad 0 \leq t \leq T,
\end{align}
where $\widehat{\sigma}$ and $\widehat{\mu}$ are $\mathbb{R}$-valued predictable processes such that $\int^T_0 \widehat{\sigma}^2_t dt$ as well as $\int^T_0 |\widehat{\mu}_t|dt$ are a.s.~finite. In fact, $\widehat{S} = \exp (\widehat{X})$ is a local martingale under the measure $\widehat{Q}$ defined by $\frac{d\widehat{Q}}{dP}= \widehat{Z}^0_T.$ We therefore have the relation 
\begin{align}\label{p21b}
\widehat{\alpha}_u = \frac{\widehat{\mu}_u}{\widehat{\sigma}_u}, \quad u \in [0,T].
\end{align}
This equality holds $m \otimes P$ almost surely, where $m$ is Lebesgue-measure on $[0,T].$ The equality is defined to hold true in the case when the right hand side is of the form $\frac{0}{0}.$ 
\end{lemma}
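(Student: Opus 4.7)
The plan is to combine the martingale representation property of the Brownian filtration $(\mathcal{F}_t)_{0 \leq t \leq T}$ (asserted in the paragraph preceding the lemma) with standard semimartingale calculus (It\^o's formula, Girsanov's theorem) applied to the outputs of Theorem~\ref{mt3}.

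For \eqref{p21}: by Theorem~\ref{mt3}, the pair $(\widehat{Z}^0, \widehat{Z}^1)$ is a $\lambda$-consistent price system, so $\widehat{Z}^0$ is a strictly positive martingale with $\widehat{Z}^0_0 > 0$. The integral representation property of $W$ in the filtration $(\mathcal{F}_t)_{0 \leq t \leq T}$ --- applied conditionally on $\mathcal{F}_0$ as suggested in the preceding discussion --- yields a predictable process $\eta$ with $\int_0^T \eta_u^2\, du < \infty$ a.s.\ such that $\widehat{Z}^0_t = \widehat{Z}^0_0 + \int_0^t \eta_u\, dW_u$. In particular $\widehat{Z}^0$ is continuous. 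Setting $\widehat{\alpha}_u := -\eta_u/\widehat{Z}^0_u$ (well defined since $\widehat{Z}^0 > 0$) and applying It\^o's formula to $\log \widehat{Z}^0$ then produces \eqref{p21} together with $\int_0^T \widehat{\alpha}_u^2 \,du < \infty$ a.s.

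For \eqref{p21a}: since $\widehat{S} \in [(1-\lambda)S, S]$ with $S$ strictly positive and continuous, $\widehat{S}$ is strictly positive and $\widehat{X} = \log \widehat{S}$ is well defined. The local martingale $\widehat{Z}^1 = \widehat{Z}^0 \widehat{S}$ again admits a representation $\widehat{Z}^1_t = \widehat{Z}^1_0 + \int_0^t \zeta_u\, dW_u$ for a predictable $\zeta$; in particular $\widehat{Z}^1$, and hence $\widehat{S} = \widehat{Z}^1/\widehat{Z}^0$, is continuous. Writing $\widehat{X} = \log \widehat{Z}^1 - \log \widehat{Z}^0$ and applying It\^o's formula twice produces an It\^o decomposition $d\widehat{X}_u = \widehat{\sigma}_u \,dW_u + \widehat{\beta}_u\, du$ for predictable $\widehat{\sigma}, \widehat{\beta}$ with the required integrability; setting $\widehat{\mu}_u := \widehat{\beta}_u + \widehat{\sigma}_u^2/2$ puts the decomposition into the geometric form \eqref{p21a}.

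For \eqref{p21b}: since $\widehat{Z}^0_T$ is the density $d\widehat{Q}/dP$ under which $\widehat{S}$ is a local martingale, Girsanov's theorem combined with \eqref{p21} implies that $\widetilde{W}_t := W_t + \int_0^t \widehat{\alpha}_u\, du$ is a $\widehat{Q}$-Brownian motion. Applying It\^o's formula to $\widehat{S} = \exp(\widehat{X})$ using \eqref{p21a} gives
\[
\frac{d\widehat{S}_u}{\widehat{S}_u} = \widehat{\sigma}_u\, dW_u + \widehat{\mu}_u\, du = \widehat{\sigma}_u\, d\widetilde{W}_u + (\widehat{\mu}_u - \widehat{\sigma}_u \widehat{\alpha}_u)\, du.
\]
For $\widehat{S}$ to be a $\widehat{Q}$-local martingale, the finite-variation part must vanish, forcing $\widehat{\mu}_u = \widehat{\sigma}_u \widehat{\alpha}_u$ $m \otimes P$-a.e. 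On $\{\widehat{\sigma}_u = 0\}$ this implies $\widehat{\mu}_u = 0$, which is precisely the $0/0$ convention stated in the lemma. The only conceptual point requiring care is the non-triviality of $\mathcal{F}_0$ when invoking martingale representation; this is handled by conditioning on $\mathcal{F}_0$ as indicated, after which the argument is a routine application of It\^o calculus to the optimizers delivered by Theorem~\ref{mt3}.
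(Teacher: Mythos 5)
Your proof is correct and follows the same route as the paper's: invoke the martingale representation theorem for $\widehat{Z}^0$ and $\widehat{Z}^1$ (conditionally on $\mathcal{F}_0$), pass to $\widehat{X}=\log\widehat{S}=\log\widehat{Z}^1-\log\widehat{Z}^0$ via It\^o's formula, and then deduce \eqref{p21b} from Girsanov and the vanishing of the finite-variation part of $\widehat{S}$ under $\widehat{Q}$. Your write-up is somewhat more explicit than the paper's (e.g.\ spelling out the $0/0$ case and the reduction of $\widehat{\alpha}$ to $-\eta/\widehat{Z}^0$), but the underlying argument is identical.
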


\begin{proof}
We know from Theorem \ref{mt3} that $\widehat{Z}^0$ and $\widehat{Z}^1$ are local martingales so that we may apply the martingale representation theorem which implies \eqref{p21}. We deduce that $\widehat{S}=\frac{\widehat{Z}^1}{\widehat{Z}^0}$ as well as $\widehat{X}=\log (\widehat{S})$ are Itô processes which yields a representation of the form \eqref{p21a}. Passing again to $\widehat{S}=\exp(\widehat{X})$ we obtain 
$$\frac{d\widehat{S}_t}{\widehat{S}_t}= \widehat{\sigma}_t dW_t + \widehat{\mu}_t dt,$$
which implies equality \eqref{p21b} by Girsanov and the fact that $\widehat{S}$ is a local martingale under $Q$.
%As regards the final assertion \eqref{B1} we know from Theorem \ref{mt3} that $(\hvp^1_t)_{0 \leq t \leq T}$, considered as a predictable integrand with respect to the semi-martingale $\widehat{S}$, solves the frictionless optimisation problem with respect to the utility function $U(x)=-e^{-x}.$ This implies \eqref{B1} by a well-known feature of exponential utility (see \dots CHRISTOPH: kennst du da eine gute Referenz?) Some care is needed for the case $\sigma_t=0$ in \eqref{B1}. In this case we must have $\mu_t=0$ (to be precise: for all $t \in [0,T]$ almost surely) as $\widehat{S}$ is a local martingale under $Q$ and $Q$ is equivalent to $P.$ 
\end{proof}
Before formulating the main result of this section we still need some preparation which also is of some independent interest. 
\begin{lemma}
For $0 < \lambda <1,$ denote by $u^{(\lambda)}(x)$ the corresponding indirect utility function \eqref{tag5}.
Then
\begin{equation}\label{L1a}
u^{(\lambda)} (x)=-f(\lambda)e^{-x}, \quad 0 < \lambda < 1,
\end{equation}
where $f (\lambda)$ is a non-decreasing function taking values in $(0,1]$ and 
\begin{equation}\label{L1}
\lim_{\lambda \searrow 0} f(\lambda)=0.
\end{equation}
\end{lemma}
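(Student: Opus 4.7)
The structure $u^{(\lambda)}(x)=-f(\lambda)e^{-x}$ is a direct consequence of the translation invariance of exponential utility. Given any strategy $\varphi=(\varphi^0,\varphi^1)\in\cA^{\lambda}_{adm}(0)$, the shifted strategy $\tvp:=(\varphi^0+x,\varphi^1)$ lies in $\cA^{\lambda}_{adm}(x)$, is self-financing, and satisfies $V^{liq}_T(\tvp)=V^{liq}_T(\varphi)+x$; conversely, every element of $\cA^{\lambda}_{adm}(x)$ arises this way. Since $U(y+x)=e^{-x}U(y)$, taking expectations and passing to the supremum gives $u^{(\lambda)}(x)=e^{-x}u^{(\lambda)}(0)$. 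Define $f(\lambda):=-u^{(\lambda)}(0)$; this immediately yields \eqref{L1a}.

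The bounds on $f$ follow quickly. First, $f(\lambda)>0$ because Lemma~\ref{l:nc} (applicable since exponential utility is bounded by $U(\infty)=0$ and $S=\exp(B^H)$ is sticky by Guasoni~\cite{G06}) gives $u^{(\lambda)}(0)<U(\infty)=0$. Second, the trivial strategy $\varphi\equiv(0,0)\in\cA^\lambda_{adm}(0)$ has $V^{liq}_T=0$ and hence utility $-1$, so $u^{(\lambda)}(0)\geq-1$, i.e.~$f(\lambda)\leq 1$. Monotonicity is handled by comparing admissible sets: if $0<\lambda_1\leq\lambda_2<1$, then any $\varphi\in\cA^{\lambda_2}_{adm}(x)$ is also self-financing at level $\lambda_1$ (the right-hand side of \eqref{eq:sf} only increases when $\lambda$ decreases) and the liquidation value \eqref{liq} is pointwise larger at $\lambda_1$ than at $\lambda_2$, so admissibility is preserved. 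Hence $\cA^{\lambda_2}_{adm}(0)\subseteq\cA^{\lambda_1}_{adm}(0)$ (with larger liquidation values), which together with the monotonicity of $U$ gives $u^{(\lambda_1)}(0)\geq u^{(\lambda_2)}(0)$, that is, $f(\lambda_1)\leq f(\lambda_2)$.

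The only genuinely delicate point is $\Lim_{\lambda\searrow 0}f(\lambda)=0$. My plan is to exhibit, for every $\varepsilon>0$, a threshold $\lambda_0>0$ and, for each $\lambda<\lambda_0$, an admissible strategy $\varphi^\lambda\in\cA^\lambda_{adm}(0)$ with $E[\exp(-V^{liq}_T(\varphi^\lambda))]<\varepsilon$. The input is that the frictionless fractional Black--Scholes model admits arbitrage (\cite{R97,C03}). Concretely, one selects a large constant $N$ and $\delta>0$ with $(1-\delta)e^{-N}+\delta\cdot e<\varepsilon/2$ (the second term coming from a uniform lower bound $V^{liq}_T\geq -1$, say). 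Then one invokes the explicit constructions (or the scaling of Cheridito's doubling-type strategies) to produce a simple, finite-variation strategy $\vartheta$ with $\vartheta\sint S_T\geq 2N$ on a set of probability $\geq 1-\delta$ and $\vartheta\sint S_T\geq-\tfrac12$ everywhere. Such $\vartheta$ has a deterministic bound $K$ on its total variation, and the frictional correction is at most $\lambda\cdot K\cdot\sup_{t\leq T}S_t$. Choosing $\lambda$ small enough so that $\lambda K\sup_{t\leq T}S_t\leq N$ on a set of probability $\geq 1-\delta$ (possible by tightness of $\sup S$) yields $V^{liq}_T(\varphi^\lambda)\geq N$ on a set of probability $\geq 1-2\delta$ and $V^{liq}_T(\varphi^\lambda)\geq -1$ everywhere (after a further truncation to restore admissibility). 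Then
\begin{equation*}
E\bigl[\exp(-V^{liq}_T(\varphi^\lambda))\bigr]\leq(1-2\delta)e^{-N}+2\delta\cdot e<\varepsilon,
\end{equation*}
so $f(\lambda)\to 0$.

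The main obstacle is precisely this last step: translating the frictionless arbitrage of $B^H$ into a sequence of $\lambda$-admissible strategies whose liquidation values dominate arbitrary thresholds with high probability. Once admissibility (the uniform lower bound on $V^{liq}$) is secured via truncation and scaling against $\lambda$, the rapid decay of $e^{-x}$ does the remaining work; but the admissibility truncation must be performed with care so as not to destroy the favourable lower tail on the target set.
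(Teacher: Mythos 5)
Your proof follows the paper's route. The scaling identity, the bounds $f(\lambda)\in(0,1]$, and the monotonicity match the paper's argument (the paper asserts the first two as "well-known" / "obvious"; your spelled-out comparison of admissible sets and the trivial strategy $\varphi\equiv(0,0)$ are the same observations). For $\Lim_{\lambda\searrow 0}f(\lambda)=0$, both proofs invoke the frictionless arbitrage of fractional Brownian motion: the paper takes a simple integrand $\vt$ from \cite{R97} or the proof of Theorem 7.2 in \cite{DS94} with $(\vt\sint S)_T\geq -1$ a.s.\ and $P[(\vt\sint S)_T\geq M]>1-\varepsilon$, sets $\varphi^1=\vt\mathbbm{1}_{(0,T)}$, and asserts that for $\lambda$ small enough one retains $\varphi^0_T\geq -2$ a.s.\ and $P[\varphi^0_T\geq M-1]>1-\varepsilon$. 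Your mechanics differ slightly: you bound the frictional correction by $\lambda K\sup_{t\le T}S_t$ (with $K$ a deterministic total-variation bound on $\vt$) and rely on tightness of $\sup_{t\le T}S_t$, which controls the transaction cost only with high probability, not almost surely.

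The gap is exactly the "further truncation to restore admissibility," which you correctly flag but do not carry out. Without it, the strategy you construct is not an element of $\cA^\lambda_{adm}(0)$ on the small-probability event where $\lambda K\sup_{t\le T}S_t$ is large, so it cannot be used to lower-bound $u^{(\lambda)}(0)$. The standard fix is to stop at $\rho:=\inf\{t\colon V^{liq}_t(\varphi)\leq -2\}$ and liquidate the stock position there (making the truncated strategy $2$-admissible), then show $P[\rho<T]\to 0$ as $\lambda\searrow 0$ by the same tightness estimate; on $\{\rho\geq T\}$ the truncation is inactive, so the favourable tail $V^{liq}_T\geq M-1$ survives with probability close to $1-\varepsilon$. (The paper sidesteps the truncation by asserting $\varphi^0_T\geq -2$ a.s.\ outright; this implicitly uses that the total transaction volume of the concrete simple integrand is $P$-a.s.\ uniformly bounded, which is a feature of the cited constructions and does not follow from $(\vt\sint S)_T\geq-1$ alone.) Once the stopping argument is supplied, your final estimate $E[e^{-V^{liq}_T(\varphi^\lambda)}]\leq(1-2\delta)e^{-N}+2\delta e<\varepsilon$ goes through, after recalibrating the $\varepsilon/2$-versus-$\varepsilon$ bookkeeping.
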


\begin{proof}
The fact that $u^{(\lambda)}$ is of the form \eqref{L1a} is a well-known scaling property of exponential utility. 

Let us analyze the function $f(\lambda).$ It is obvious that $f(\lambda)$ in non-decreasing and takes its values in $(0,1].$ As regards \eqref{L1}, it follows from \cite{R97} (or the proof of Theorem 7.2 in \cite{DS94}) that we may find, for $\varepsilon >0$ and $M>0$, a simple predictable process $\vt$ of the form 
$$\vt_t= \sum^{N-1}_{i=0} g_i \mathbbm{1}_{\rrbracket \tau_i, \tau_{i+1} \rrbracket}(t)$$
where $g_i \in L^\infty (\Omega, \mathcal{F}_{\tau_i}, P)$ and $0=\tau_0 \leq \tau_1 \leq \dots \leq \tau_N=T$ are stopping times such that, for $S=\exp(B),$
\begin{equation}
(\vt\sint S)_T = \sum^N_{i=0} g_i (S_{\tau_{i+1}} - S_{\tau_i})
\end{equation}
satisfies $(\vt\sint S)_T \geq -1$ almost surely and $P[(\vt\sint S)_T \geq M] > 1 - \varepsilon.$

For $0 < \lambda <1,$ we may $\vt$ interpret also in the setting of transaction costs. More formally: associate to $\vt$ a $\lambda$-self-financing process $\varphi=(\varphi^0, \varphi^1)$ as above starting at $(\varphi^0_0, \varphi^1_0)=(0,0),$ such that $\varphi^1=\vt \mathbbm{1}_{(0,T)}$ and $\varphi^0$ is defined by having equality in \eqref{eq:sf}. Choosing $\lambda >0$ sufficiently small we obtain $\varphi^0_T \geq -2$ almost surely as well as $P[\varphi^0_T \geq M-1] > 1 - \varepsilon.$ This readily shows \eqref{L1}.
\end{proof}

We now can formulate a consequence of the above results on portfolio optimisation which seems remarkable, independently of the above financial applications, as a general result on the pathwise behaviour of fractional Brownian motion: they may touch It\^o processes in a non-trivial way without involving local time or related concepts pertaining to the reflection of Brownian motion.

\bt\label{thm}
Let $(B_t)_{0 \leq t \leq T}$ be fractional Brownian motion with Hurst index $H \in(0,1)\setminus\{\frac{1}{2}\}$ and $\alpha >0$ (which corresponds to $\alpha=-\log (1-\lambda)$ in the above setting of transaction costs).

There is an Itô process $(X_t)_{0 \leq t \leq T}$ such that
\begin{align}\label{L2}
B_t-\alpha \leq X_t \leq B_t, \quad \quad 0 \leq t \leq T,
\end{align}
holds true almost surely.

In addition, $X$ can be constructed in such a way that $(e^{X_{t}})_{0 \leq t \leq T}$ is a local martingale under some measure $Q$ equivalent to $P.$ For $\varepsilon >0,$ we may choose $\alpha >0$ sufficiently small so that the trajectory $(X_t)_{0 \leq t \leq T}$ touches the trajectories $(B_t)_{0 \leq t \leq T}$ as well as the trajectories $(B_t -\alpha)_{0 \leq t \leq T}$ with probability bigger than $1-\varepsilon.$
\et

\begin{proof}
The theorem is a consequence of Theorem \ref{mt3} and Lemma \ref{5.3} where we simply take $X=\widehat{X}.$

We only have to show the last assertion. It translates into the setting of Theorem \ref{mt3} as the statement that, for $\varepsilon > 0,$ there is $\lambda_0 > 0$ such that, for $0 < \lambda < \lambda_0,$ we have with probability bigger than $1-\varepsilon$ that $(\widehat{\varphi}_t)_{0 \leq t \leq T}$ is not constant. Indeed, apart from the trivial case $\widehat{\varphi}_t \equiv (x,0)$ of no trading there must be some buying as well as some selling of the stock, as the investor starts and finishes with zero holdings of stock. As this can only happen if $\widehat{S}_t=S_t$ or $\widehat{S}_t=(1-\lambda)S_t$ respectively, we must have equality in \eqref{L2} for both cases for some $t \in [0,T]$. To show that this case occurs with probability bigger than $1-\varepsilon$, for sufficiently small enough $\alpha >0$, assume to the contrary that there are $\eta > 0$ and arbitrary small $\alpha>0$ such that the optimal trading strategy $\widehat{\varphi}$ remains constant with probability bigger than $\eta.$ This contradicts \eqref{L1} as then we have 
\begin{align*}
u^{\lambda}(0) \leq -\eta.
\end{align*}
\end{proof}

Let us comment on the interpretation of the above theorem. Using the above construction define $\sigma$ and $\tau$ to be the stopping time
\begin{align*}
\sigma= \inf \{ t \in [0,T]: X_t=B_t - \alpha\}, \quad \tau= \inf\{t \in [0,T]: X_t=B_t\},
\end{align*}
which for sufficiently small $\alpha >0,$ satisfies $P[\sigma < \infty]= P[\tau < \infty] > 1 - \varepsilon.$ Here, the equality $P[\sigma < \infty]= P[\tau < \infty]$ follows from the fact that, since we start and end with zero holdings in stock, any position that is bought or sold has to be liquidated before time $T$. We may suppose w.l.o.g.~that $\tau < \sigma$ (the case $\sigma < \tau$ is analogous). Consider the difference process
\begin{align}\label{L3a}
D_t = B_t - X_t, \quad \quad 0 \leq t \leq T,
\end{align}
which, is non-negative and vanishes for $t=\tau.$ We formulate a consequence of the above considerations.

\begin{cor}
On the set $\{\tau < \sigma\}$ we have that $\sigma \leq T$ almost surely, and that the process $(D_t)_{\tau \leq t \leq \sigma}$ starts at zero, remains non-negative and ends at $D_\sigma = \alpha.$ $\square$
\end{cor}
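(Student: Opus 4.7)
The plan is to read off the three statements about $D$ directly from the definitions of $\tau$ and $\sigma$ and from the sandwich bound \eqref{L2}, while deriving $\sigma\le T$ from the equality $P[\sigma<\infty]=P[\tau<\infty]$ that was recorded just before the corollary. Nothing new needs to be proved about optimal strategies; the hard work has already been done in Theorem~\ref{thm} and the preceding discussion.

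For $\sigma\le T$, I would first observe that both $\tau$ and $\sigma$ take values in $[0,T]\cup\{\infty\}$, so on the set $\{\tau<\sigma\}$ the strict inequality forces $\tau<\infty$, i.e.~$\tau\le T$. The crucial input is the identity
\[
P[\sigma<\infty]=P[\tau<\infty],
\]
established in the discussion that immediately precedes the corollary: because the optimal $\widehat{\varphi}$ starts and ends at zero stock holdings, any buying (only possible where $\widehat{S}=S$, equivalently where $X=B$) must be matched by selling (only possible where $\widehat{S}=(1-\lambda)S$, equivalently where $X=B-\alpha$), and vice versa. This identity says the events $\{\tau<\infty\}$ and $\{\sigma<\infty\}$ coincide up to a $P$-nullset. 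Hence on $\{\tau<\sigma\}\subseteq\{\tau<\infty\}$ we have $\sigma<\infty$ a.s., i.e.~$\sigma\le T$ a.s.

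The three properties of $D_t=B_t-X_t$ on $[\tau,\sigma]$ are then immediate. The sandwich bound \eqref{L2} gives $D_t\ge 0$ on all of $[0,T]$, in particular on $[\tau,\sigma]$. For the endpoints, fractional Brownian motion has continuous trajectories by construction \eqref{C4}, and $X=\widehat{X}$ is an It\^o process by Lemma~\ref{5.3}, so both $B$ and $X$ are continuous; therefore the infima defining $\tau$ and $\sigma$ are attained whenever they are finite, giving $X_\tau=B_\tau$ (so $D_\tau=0$) and $X_\sigma=B_\sigma-\alpha$ (so $D_\sigma=\alpha$).

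There is really no hard step here: the only point that requires care is making sure the implication ``$\tau<\infty\Rightarrow\sigma<\infty$'' is invoked in the correct direction (the case $\tau<\sigma$ uses precisely this direction; the symmetric case $\sigma<\tau$ would use the reverse implication, which holds equally well from the same identity). Everything else is a direct consequence of continuity and of the definitions of $\tau$, $\sigma$, and $D$.
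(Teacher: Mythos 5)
Your proof is correct and takes essentially the same approach as the paper: the corollary carries no written proof (the $\square$ appears directly after its statement), precisely because it is meant to be read off from the preceding discussion, and your argument is a faithful reconstruction of that discussion — the sandwich bound \eqref{L2} gives $D_t\ge 0$, continuity of $B$ and of the It\^o process $X$ gives attainment of the infima and hence $D_\tau=0$ and $D_\sigma=\alpha$, and the identification of $\{\tau<\infty\}$ with $\{\sigma<\infty\}$ gives $\sigma\le T$ on $\{\tau<\sigma\}$. One wording caution worth raising: the bare numerical equality $P[\sigma<\infty]=P[\tau<\infty]$ does not by itself yield the event-level coincidence $\{\tau<\infty\}=\{\sigma<\infty\}$ that you (and the paper) actually use; what delivers it is the argument you quote — that the optimal position starts and ends flat, so any buying (which forces $\widehat S=S$, i.e. $X=B$) must be matched by a later sale (which forces $\widehat S=(1-\lambda)S$, i.e. $X=B-\alpha$) and vice versa — and you should cite that reasoning rather than the equality of probabilities as the ground for the event identification.
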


This statement should be compared to the well-known fact, that there are {\it no} stopping times $\tau < \sigma$ such that $P[\tau < T]=P[\sigma \leq T] > 0$ and such that $B_\sigma - B_\tau > \alpha$, almost surely on $\{\tau < T\}.$ Indeed, this follows from the stickiness property (Def. \ref{def:sticky}) of fractional Brownian motion proved by P.~Guasoni (\cite{G06}; compare also \cite{GRS10}). Adding to $B$ the It\^o process $X$ somewhat miraculously changes this behaviour of $B$ drastically as formulated in the above corollary.

\begin{appendix}
\section{An abstract version of the duality theorem}
The basic idea to prove the Duality Theorem \ref{mainthm} under transaction costs is, as in \cite{CS14}, to reduce it to an abstract version of the duality theorem in the frictionless case in \cite{S01}. We provide this abstract version that is what was actually shown in the proof of Theorem 2.2 in \cite{S01} below. It might find other applications as well.

To that end, let $\cC$ be a closed, convex, solid and bounded subset of $L^0_+(P)$ containing the constant $1$, set $\cC(x)=x\,\cC$ for all $x>0$ and $\cC_b(x)=\cup_{n=1}^\infty\{\cC(x+n)-n\}$ for all $x\in\R$. Denote by $\cD$ the polar of $\cC$ in $L^0_+(P)$ given by $\cC^{\circ}=\{h\in L^0_+(P)~|~E[gh]\leq 1\quad\forall g\in\cC\}$ and set $\cD(y)=y\,\cD$ for all $y>0$. Note that, since $1\in\cC$, we have that $E[h]\leq 1$ for all $h\in\cD$. Suppose that $D=\{h\in\cD~|~h>0~\text{and}~E[h]=1\}$ is non-empty and such that $\cD$ is the closed, convex and solid hull of $D$ in $L^0_+(P)$. Denote by $\overline{D}$ the $L^1(P)$-closure of $D$ given by $\overline{D}=\{h\in\cD~|~E[h]=1\}$.

As shown in Theorem 3.2 of \cite{KS99}, the properties of the sets $\cC(x)$ and $\cD(y)$ above are the ones that are needed to establish the duality theory for utility maximisation on the positive half-line. The following theorem presents an extension of this result to utility functions on the whole real line.

% closed under countable convex combinations 
\bt\label{mtav}
Under the assumptions above, suppose that $U:\R\to\R$ satisfies the Inada conditions,
%$$\text{$U'(-\infty) = \lim_{x \to -\infty} U'(x)=\infty$ and $U'(\infty)= \lim_{x \to \infty} U'(x)=0$,}$$
has reasonable asymptotic elasticity, i.e.
$AE_{\infty}(U):=\varlimsup\limits_{x\to\infty}\frac{xU'(x)}{U(x)}<1$ and $AE_{-\infty}(U):=\varliminf\limits_{x\to-\infty}\frac{xU'(x)}{U(x)}>1$,
and that
\be
u(x):=\sup_{g\in\cC_U(x)}E[U(g)]<U(\infty)\label{eq:fu:av}
\ee
for some $x\in\R$, where
\begin{multline*}
\cC_U (x) = \big\{ g \in L^0 (P; \mathbb{R} \cup \{\infty\})~ |~ \exists g_n \in \cC_b (x)~\text{such that}\\\text{$U(g_n) \in L^1(P)$ and $U(g_n) \stackrel{L^1(P)}{\longrightarrow} U(g)$} \big\}.
\end{multline*}
Then:
\bi
\item[{\bf 1)}] The primal value function $u$, defined in \eqref{eq:fu:av}, and the dual value function
$$v(y):=\inf_{h\in \overline{D}}E[V(yh)],$$
where $V(y):=\sup_{x\in\R}\{U(x)-xy\}$ for $y>0$ denotes the Legendre transform of $U$, are conjugate, i.e.,
\begin{eqnarray*}u(x)=\inf_{y>0}\{v(y)+xy\},\qquad v(y)=\sup_{x\in\R}\{u(x)-xy\},
\end{eqnarray*}
and continuously differentiable. The functions $u$ and $-v$ are strictly concave and satisfy the Inada conditions
$$\text{$\Lim_{x\to-\infty}u'(x)=\infty,\qquad\Lim_{y\to\infty}v'(y)=\infty,\qquad\Lim_{x\to\infty}u'(x)=0,\qquad\Lim_{y\to0}v'(y)=-\infty$}.$$
The primal value function $u$ has reasonable asymptotic elasticity. 
\item[{\bf 2)}] For $y>0$, the solution $\hh(y)\in \overline{D}$ to the dual problem
\begin{equation}
\textstyle
E\left[V\big(yh\big)\right]\to\min!\label{D1:av}, \qquad{h\in\overline{D}},
\end{equation}
exists, is unique and the map $y\mapsto\hh(y)$ is continuous in variation norm.
\item[{\bf 3)}] For $x\in\R$, the solution $\hg(x)\in\cC_U(x)$ to the primal problem
\begin{equation}
\textstyle
E[U(g)]\to\max!\label{P1:av}, \qquad{g\in\cC_U(x)},
\end{equation}
exists, is unique and given by
\be
\hg(x)=(U')^{-1}\left(\hy(x)\hh\big(\hy(x)\big)\right),\label{eq:dr:1:av}
\ee
where $\hy(x)=u'(x)$.
\item[{\bf 4)}] We have the formulae
$$v'(y)=E\left[\hh(y)V'\big(y\hh(y)\big)\right]\qquad\text{and}\qquad xu'(x)=E\left[\hg(x)U'\big(\hg(x)\big)\right],$$
where we use the convention that $0\cdot\infty=0$, if the random variables are of this form.
\ei
\et
\bp
The proof follows along the same arguments as that of Theorem 2.2 in \cite{S01} after replacing each of the approximating problems (16) in \cite{S01} by its abstract version, i.e. problem (3.4) in \cite{KS99}, and using Theorem 3.2 in \cite{KS99} instead of Theorem 2.2 in \cite{KS99}.
% An inspection of the arguments in the proof of Theorem 2.2 in \cite{S01} reveals that the sets above have all properties that are used.
%
%Indeed, under the assumption of the existence of an equivalent local martingale measure, i.e. $\exists Q\in\cM_{e}(S)$, the abstract sets defined above correspond to the following sets for a locally bounded semimartingale price process $S=(S_t)_{0\leq t\leq T}$ in the proof of Theorem 2.2~in~\cite{S01}:
%\begin{align*}
%\cC&=\{1+\vt\sint S_T~|~\vt\in L(S)~\text{such that $1+\vt\sint S_t\geq 0$ $\forall t\in[0,T]$}\},\\
%\cC(x)&=\{x+\vt\sint S_T~|~\vt\in L(S)~\text{such that $x+\vt\sint S_t\geq 0$ $\forall t\in[0,T]$}\},\\
%\cC_b(x)&=\{x+\vt\sint S_T~|~\vt\in L(S)~\text{such that $x+\vt\sint S_t\geq -n$ $\forall t\in[0,T]$ for some $n\in\N$}\},\\
%\cD&=\{ Y_T~|~
%\end{align*}
%
%. The set $D$ and $\overline{D}$ play the role of the set of all Radon-Nikodym derivatives of equivalent and absolutely continuous martingale measures for $S$, respectively.

Indeed, let $\widetilde{S}=(\widetilde{S}_t)_{0 \leq t \leq T}$ be a locally bounded semimartingale price process that admits an equivalent local martingale measure (ELMM) $Q\sim P$ so that the set $\cM^e(\widetilde{S})$ of all ELMM for $\tS$ is non-empty. Denote by $\mathcal{X}(x)$ the set of all non-negative wealth processes starting with initial capital $x$, i.e.
$$X_t=x + \vt \sint \widetilde{S}_t \geq 0, \quad 0 \leq t \leq T,$$
where $\vt \in L(\widetilde{S})$ is an $\widetilde{S}$-integrable predictable process, and by $\mathcal{Y}(y)$ the set of all supermartingale deflators for $\widetilde{S}$, i.e.~non-negative optional strong supermartingales $Y=(Y_t)_{0 \leq t \leq T}$ starting at $Y_0=y$ such that $Y X=(Y_tX_t)_{0 \leq t \leq T}$ is a non-negative supermartingale for all $X \in \mathcal{X}(1).$ Then the abstract sets above correspond to the following sets in \cite{S01}
\begin{align*}
\cC &\triangleq \{ g \in L^0_+(P)~|~\text{$\exists X \in \mathcal{X}(1)$ such that $g \leq X_T$}\},\\
\cC (x)  &\triangleq \{ g \in L^0_+(P)~|~\text{$\exists X \in \mathcal{X}(x)$ such that $g \leq X_T$}\},\quad x>0,\\
\cC_b(x) &\triangleq \cup^\infty_{n=1} \{\cC (x+n) - n\},\\
\mathcal{D} &\triangleq \{Y_T~|~Y \in \mathcal{Y}(1)\},\\
\mathcal{D}(y)& \triangleq \{Y_T~|~Y \in \mathcal{Y}(y)\},\quad y>0,\\
D &\triangleq \left\{\textstyle\frac{dQ}{dP}~\Big|~Q \in \mathcal{M}^e (\tS)\right\},\\
\overline{D} &\triangleq \left\{\textstyle\frac{dQ}{dP}~\Big|~Q \in \mathcal{M}^a (\tS)\right\}.
\end{align*}
Note that $\cC_b(x)$ corresponds to the set of all random variables $g\in L^0(P)$ that are bounded from below and such that there exists $X \in \mathcal{X}_b (x)$ such that $g\leq X_T$, where $\mathcal{X}_b(x)$ is the set of all wealth processes that are uniformly bounded from below, i.e.~there exists some $M > 0$ such that
$$X_t = x + \vt \sint \widetilde{S}_t \geq-M, \quad 0 \leq t \leq T.$$

Conversely, replacing the ``concrete sets'' above in the proof of Theorem 2.2 in \cite{S01} and using the ``abstract version'' of the duality results for utility functions on the positive half-line in Theorem 3.2 of \cite{KS99} instead of Theorem 2.2 in \cite{KS99} with the ``abstract sets'' yields the proof of the abstract version of the theorem. This is clear for all steps of the proof except step 1, step 3 and step 10. 

In step 1, it is used that by part $(iv)$ of Theorem 2.2 in \cite{KS99} the dual optimiser for the utility maximisation problem on the positive half-line can be approximated by the Radon--Nikodym derivatives of an ELMMs. To ensure this in our ``abstract setting'', one has by Proposition 3.2 in \cite{KS99} to use that the set $\cD$ is the closed, convex and solid hull of $D$ in $L^0_+(P)$ and that $D$ is closed under countable convex combinations. This follows immediately from the assumption that $\mathcal{D}$ is convex and closed in probability and an application of the the monotone convergence theorem.

Step 3 and step 10 show in addition dynamic properties of the primal and dual optimiser that we do not assert and therefore do not need to prove here.
\ep

Applying the abstract duality theorem above to portfolio optimisation under transaction costs then allows us to prove Theorem \ref{mainthm}.

\begin{proof}[Proof of Theorem \ref{mainthm}]
We begin by recalling some of the definitions for portfolio optimisation under transaction costs for utility functions on the positive half line from \cite{CS14}. 

For $x >0$, we denote by $\mathcal{A}^{\lambda} (x)$ the set of all self-financing trading strategies $\varphi=(\varphi^0_t, \varphi^1_t)_{0 \leq t \leq T}$ under transaction costs starting with initial endowment $(\varphi^0_0, \varphi^1_0)=(x,0)$ that are \emph{$0$-admissible}, i.e.~$V^{liq}_t(\varphi) \geq 0$ for all $t \in [0,T].$ The set $\mathcal{B}^{\lambda}(y)$ of all \emph{optional strong supermartingale deflators} consists of all pairs of non-negative optional strong supermartingales $Y=(Y^0_t, Y^1_t)_{0 \leq t \leq T}$ such that $Y^0_0=y,$ $Y^1=Y^0 \widetilde{S}$ for some $[(1-\lambda)S,S]$-valued process $\widetilde{S}=(\widetilde{S}_t)_{0 \leq t \leq T}$ and $Y^0(\varphi^0 + \varphi^1 \widetilde{S})= Y^0 \varphi^0 + Y^1 \varphi^1$ is a non-negative optional strong supermartingale for all $\varphi \in \mathcal{A} (1).$ Note that $\mathcal{Z}^{\lambda}_e \subseteq  \mathcal{Z}^{\lambda}_a \subseteq \mathcal{B}^\lambda (1).$

We define the following sets
\begin{align*}
\cC^\lambda&= \cC^{\lambda} (1) = \{ V^{liq}_T (\varphi)~|~\varphi \in \mathcal{A}^\lambda (1)\},\\
\cC^\lambda (x)&= \{ V^{liq}_T (\varphi)~|~\varphi \in \mathcal{A}^{\lambda} (x)\} , \quad x > 0,\\
\mathcal{D}^{\lambda}&= \mathcal{D}^{\lambda} (1)=\{Y^0_T~|~Y \in \mathcal{B}^{\lambda} (1)\},\\
\mathcal{D}^{\lambda} (y)&= \{Y^0_T~|~Y \in \mathcal{B} (y) \} = y \mathcal{D}^{\lambda}, \quad y > 0,\\
D^{\lambda} &= \{ Z^0_T~|~Z \in \mathcal{Z}^\lambda_e \},\\
\overline{D}^{\lambda}&= \{Z^0_T~|~Z \in \mathcal{Z}^\lambda_a\}. 
\end{align*}

Under the assumptions of Theorem \ref{mainthm} we have by Lemma A.1 in \cite{CS14} that $\cC^\lambda$ is a closed, convex and bounded subset of $L^0_+(P)$ containing the constant $1$, that $\mathcal{D}^\lambda$ coincides with the polar $(\cC^{\lambda})^\circ$ of $\cC^\lambda$ in $L^0_+(P)$ and that $\mathcal{D}^\lambda$ is the closed, convex and solid hull of $D^{\lambda}$ in $L^0_+(P).$

In order to deduce the Duality Theorem \ref{mainthm} by applying the abstract version (Theorem \ref{mtav}) for $\cC=\cC^{\lambda},$ $\mathcal{D}=\mathcal{D}^{\lambda},$ $D=D^{\lambda}$ and $\overline{D}=\overline{D}^{\lambda}$ we therefore only need to verify that 
\begin{align}
&D^{\lambda}=\{ h \in \mathcal{D}^{\lambda}~|~\text{$h >0$ and $E[h]=1$}\}, \label{A:eq2}\\
&\overline{D}^{\lambda}=\{h \in \mathcal{D}^{\lambda}~ |~E[h]=1\}.\label{A:eq3}
\end{align}

We begin with \eqref{A:eq3}. Recall that by the definition of $\mathcal{D}^{\lambda}$ there exists $\overline{Y} =(\overline{Y}^0_t,\overline{Y}^1_t)_{0\leq t\leq T}\in \mathcal{B}^{\lambda}(1)$ such that $\overline{Y}^0_T=h.$ Since $\overline{Y}^0=(\overline{Y}^0_t)_{0 \leq t \leq T}$ is a non-negative optional strong supermartingale starting at $\overline{Y}^0_0=1$, the condition $E[\overline{Y}^0_T]=E[h]=1$ implies that $\overline{Y}^0$ is a true martingale and hence c\`adl\`ag. To see the local martingale property of $\overline{Y}^1=(\overline{Y}^1_t)_{0 \leq t \leq T}$, we need to use the local boundedness of $S=(S_t)_{0 \leq t \leq T}.$ Let $(\tau_n)^\infty_{n=1}$ be a localising sequence of stopping times tending stationarily to $T$ such that $\sup_{0 \leq t \leq T} S^{\tau_n}_t \leq n $ on $\{S_0\leq n\}$. Since $\overline{Y}^1$ is a non-negative optional strong supermartingale, we only need to show that $E[\overline{Y}^1_{\tau_n}\mathbbm{1}_{\{S_0\leq n\}}] \geq E[\overline{Y}^1_0\mathbbm{1}_{\{S_0\leq n\}}]$ to establish the local martingale property of $\overline{Y}^1$ with localising sequence $(\sigma_n)^\infty_{n=1}$ of stopping times given by $\sigma_n=\tau_n\mathbbm{1}_{\{S_0\leq n\}}$.

For this, consider, for $m\geq n$, the self-financing trading strategy $\varphi^m=(\varphi^{0,m}_t, \varphi^{1,m}_t)_{0 \leq t \leq T}$ under transaction costs that starts at $\varphi^m_0=(1,0),$ sells $\frac{1}{m}$ shares of stock immediately after time $0$ on $\{S_0\leq n\}$ and, if $\tau_{m} < T,$ buys them back again at time $\tau_m.$ That is $\varphi^{1,m}= \left(- \frac{1}{m} \mathbbm{1}_{\rrbracket 0,T \rrbracket} + \frac{1}{m} \mathbbm{1}_{\rrbracket \tau_m, T \rrbracket}\right)\mathbbm{1}_{\{S_0\leq n\}}$ and  $\varphi^{0,m}=1 + \left(\frac{1}{m} (1-\lambda) S_0 \mathbbm{1}_{\rrbracket 0,T \rrbracket} - \frac{1}{m} S_{\tau_m} \mathbbm{1}_{\rrbracket \tau_m,T \rrbracket}\right)\mathbbm{1}_{\{S_0\leq n\}}$. The liquidation value of this strategy is given by
\begin{align*}
\textstyle V^{liq}_t(\varphi^m)=1+\left(\frac{1}{m}(1-\lambda) S_0 - \frac{1}{m} S_{\tau_m \wedge t}\right)\mathbbm{1}_{\{S_0\leq n\}} \geq 0, \quad 0\leq t\leq T.
\end{align*}
Therefore $\varphi^m$ is $0$-admissible and $\overline{Y}^0 \varphi^{0,m} + \overline{Y}^1 \varphi^{1,m}$ is an optional strong supermartingale so that
\begin{align*}
1 &\geq E\Big[\overline{Y}^0_{0+} \varphi^{0,m}_{0+} + \overline{Y}^1_{0+} \varphi^{1,m}_{0+}\Big]\\
&= \textstyle E\Big[\overline{Y}^0_0 \big(1 + \frac{1}{m} (1-\lambda) S_0\mathbbm{1}_{\{S_0\leq n\}}\big) - \frac{1}{m} \overline{Y}^1_{0+}\mathbbm{1}_{\{S_0\leq n\}}\Big]\\
&\geq E\Big[\overline{Y}^0_{\tau_n} \varphi^{0,m}_{\tau_n} + \overline{Y}^1_{\tau_n} \varphi^{1,m}_{\tau_n}\Big]\\
&\geq\textstyle E\left[\left(\overline{Y}^0_{\tau_n} \varphi^{0,m}_{\tau_n} + \overline{Y}^1_{\tau_n} \varphi^{1,m}_{\tau_n}\right) \mathbbm{1}_{\{\tau_m=T\}}+\overline{Y}^0_{\tau_n} V^{liq}_{\tau_n}(\varphi^m)\mathbbm{1}_{\{\tau_m < T\}}\right]\\
&=\textstyle E\left[\left(\overline{Y}^0_{\tau_n} \left(1+\frac{1}{m}(1-\lambda)S_0\mathbbm{1}_{\{S_0\leq n\}}\right)- \frac{1}{m} \overline{Y}^1_{\tau_n}\mathbbm{1}_{\{S_0\leq n\}}\right) \mathbbm{1}_{\{\tau_m=T\}}\right]\\
&\textstyle\quad+ E\Big[\overline{Y}^0_{\tau_n}\big(1+\frac{1}{m}(1-\lambda)S_0\mathbbm{1}_{\{S_0\leq n\}} - \frac{1}{m} S_{\tau_n}\mathbbm{1}_{\{S_0\leq n\}}\big) \mathbbm{1}_{\{\tau_m < T\}}\Big].
\end{align*}
By the martingale property of $\overline{Y}^0$ this implies
\begin{align}\label{A:eq4}
\textstyle-\frac{1}{m} E\big[\overline{Y}^1_{0^+}\mathbbm{1}_{\{S_0\leq n\}}\big] \geq \textstyle-\frac{1}{m} E\big[\overline{Y}^1_{\tau_n}\mathbbm{1}_{\{S_0\leq n\}} \mathbbm{1}_{\{\tau_m=T\}}\big]-\frac{1}{m} E\big[\overline{Y}^0_{\tau_n} (1-\lambda)S_{\tau_n}\mathbbm{1}_{\{S_0\leq n\}}\mathbbm{1}_{\{\tau_m < T\}}\big]
\end{align}
and therefore
\begin{align*}
E\big[\overline{Y}^1_{\tau_n}\mathbbm{1}_{\{S_0\leq n\}}\big] \geq E\big[\overline{Y}^1_{0^+}\mathbbm{1}_{\{S_0\leq n\}}\big]
\end{align*}
after multiplying both sides of \eqref{A:eq4} with $m$ and then sending $m$ to infinity, where we use that $P(\tau_m < T) \to 0,$ as $m  \to \infty.$ As $\overline{Y}^0=(\overline{Y}^0_t)_{0 \leq t \leq T}$ and $S=(S_t)_{0 \leq t \leq T}$ are both c\`adl\`ag, we can modify $\overline{Y}^1=(\overline{Y}^1_t)_{0 \leq t \leq T}$ at time $0$ by setting $\overline{Y}^1_0=\overline{Y}^1_{0+}$ to obtain that $\overline{Y}=(\overline{Y}^0_t, \overline{Y}^1_t)_{0 \leq t \leq T}$ is a pair consisting of a martingale $\overline{Y}^0$ and a local martingale $\overline{Y}^1$ such that there exists an $[(1-\lambda)S,S]$-valued process such that $\overline{Y}^1=\overline{Y}^0 \bar{S}$. So we get that there exists $\overline{Y}=(\overline{Y}^0, \overline{Y}^1) \in \mathcal{Z}^{\lambda}_a$ such that $\overline{Y}^0_T=h$ and therefore \eqref{A:eq3}. If $\overline{Y}^0_T=h > 0, $ then $\overline{Y}=(\overline{Y}^0, \overline{Y}^1) \in \mathcal{Z}^\lambda_e$, which proves \eqref{A:eq2}.
\ep
The following auxiliary result was used in the proof of Lemma \ref{l:B1}.
\bl\label{l:A}
Under the assumptions of Theorem \ref{mtav}, let $(g_n)_{n=1}^\infty$ be any sequence of random variables in $\cC_b(x)$ satisfying $U(g_n)\xrightarrow{L^1(P)}U\big(\hg(x)\big)$. Then $\hh\big(\hy(x)\big)g_n\xrightarrow{L^1(P)}\hh\big(\hy(x)\big)\hg(x)$.
\el
\bp
Since $U'$ is non-negative and decreasing, we can estimate
$$\Big(U(g_n)-U\big(\hg(x)\big)\Big)^-\geq U'\big(\hg(x)\big)\big(g_n-\hg(x)\big)^-.$$
Together with the $L^1$-convergence of $U(g_n)$ to $U\big(\hg(x)\big)$, this implies that
$$\Big(U'\big(\hg(x)\big)\big(g_n-\hg(x)\big)^-\Big)_{n=1}^\infty$$ is uniformly integrable and hence that
$$U'\big(\hg(x)\big)\big(g_n-\hg(x)\big)^-\xrightarrow{L^1(P)}0,$$
since $U(g_n)\xrightarrow{L^1(P)}U\big(\hg(x)\big)$ yields that $g_n\xrightarrow{P}\hg(x)\in L^0 (P; \mathbb{R} \cup \{\infty\})$ by the strict monotonicity of $U$. Therefore, we obtain that
\be
\lim_{n\to\infty}E\left[U'\big(\hg(x)\big)\big(g_n-\hg(x)\big)\right]\geq0\label{l:A:eq1}
\ee
by the generalised version of Fatou's lemma. By parts 3) and 4) of Theorem \ref{mtav}, we have that $U'\big(\hg(x)\big)=\hy(x)\hh\big(\hy(x)\big)\in\hy(x)\overline{D}$ and 
\be
E\left[U'\big(\hg(x)\big)\big(g_n-\hg(x)\big)\right]=\hy(x)E\left[\hh\big(\hy(x)\big)\big(g_n-\hg(x)\big)\right]\leq 0\label{l:A:eq2}.
\ee
Combining \eqref{l:A:eq1} and \eqref{l:A:eq2} gives $
\lim_{n\to\infty}E\left[U'\big(\hg(x)\big)\big(g_n-\hg(x)\big)\right]=0
$
and therefore that
$$U'\big(\hg(x)\big)\big(g_n-\hg(x)\big)^+\xrightarrow{L^1(P)}0.$$
The convergence $\hh\big(\hy(x)\big)g_n\xrightarrow{L^1(P)}\hh\big(\hy(x)\big)\hg(x)$
then follows, since $U'\big(\hg(x)\big)=\hy(x)\hh\big(\hy(x)\big)$.
\ep
\end{appendix}
\bibliography{SPEU-2016-02-08}

\begin{thebibliography}{10}

\bibitem{A05}
B.~Acciaio.
\newblock Absolutely continuous optimal martingale measures.
\newblock {\em Statistics \& Decisions}, 23(2):81--100, 2005.

\bibitem{AI05}
S.~Ankirchner and P.~Imkeller.
\newblock Finite utility on financial markets with asymmetric information and
  structure properties of the price dynamics.
\newblock {\em Annales de l'Institut Henri Poincare (B) Probability and
  Statistics}, 41(3):479--503, 2005.

\bibitem{BS10}
E.~Bayraktar and H.~Sayit.
\newblock On the stickiness property.
\newblock {\em Quantitative Finance}, 10(10):1109--1112, 2010.

\bibitem{BY13}
E.~Bayraktar and X.~Yu.
\newblock {On the Market Viability under Proportional Transaction Costs}.
\newblock {\em Preprint}, 2013.

\bibitem{Ben12}
C.~Bender.
\newblock {Simple arbitrage}.
\newblock {\em Annals of Applied Probability}, 22(5):2067--2085, 2012.

\bibitem{BCKMK13}
G.~Benedetti, L.~Campi, J.~Kallsen, and J.~Muhle-Karbe.
\newblock On the existence of shadow prices.
\newblock {\em Finance and Stochastics}, 17(4):801--818, 2013.

\bibitem{BF08}
S.~Biagini and M.~Frittelli.
\newblock A unified framework for utility maximization problems: an orlicz
  space approach.
\newblock {\em The Annals of Applied Probability}, 18(3):929--966, 2008.

\bibitem{B02}
B.~Bouchard.
\newblock Utility maximization on the real line under proportional transaction
  costs.
\newblock {\em Finance Stoch.}, 6(4):495--516, 2002.

\bibitem{BM03}
B.~Bouchard and L.~Mazliak.
\newblock A multidimensional bipolar theorem in {$L^0(\Bbb R^d;\Omega,\mathcal
  F,P)$}.
\newblock {\em Stochastic Process. Appl.}, 107(2):213--231, 2003.

\bibitem{BS99}
W.~Brannath and W.~Schachermayer.
\newblock A bipolar theorem for {$L^0_+(\Omega,\mathcal{F},\mathbb{P})$}.
\newblock In {\em S{\'e}minaire de Probabilit{\'e}s XXXIII}, pages 349--354.
  Springer, 1999.

\bibitem{CO11}
L.~Campi and M.~P. Owen.
\newblock Multivariate utility maximization with proportional transaction
  costs.
\newblock {\em Finance Stoch.}, 15(3):461--499, 2011.

\bibitem{CS06}
L.~Campi and W.~Schachermayer.
\newblock A super-replication theorem in {K}abanov's model of transaction
  costs.
\newblock {\em Finance Stoch.}, 10(4):579--596, 2006.

\bibitem{C03}
P.~Cheridito.
\newblock Arbitrage in fractional brownian motion models.
\newblock {\em Finance and Stochastics}, 7(4):533--553, 2003.

\bibitem{C08}
A.~Cherny.
\newblock Brownian moving averages have conditional full support.
\newblock {\em The Annals of Applied Probability}, pages 1825--1830, 2008.

\bibitem{C75}
I.~Csisz{\'a}r.
\newblock I-divergence geometry of probability distributions and minimization
  problems.
\newblock {\em The Annals of Probability}, pages 146--158, 1975.

\bibitem{CK96}
J.~Cvitani{\'c} and I.~Karatzas.
\newblock Hedging and portfolio optimization under transaction costs: a
  martingale approach.
\newblock {\em Math. Fin.}, 6(2):113--165, 1996.

\bibitem{CW01}
J.~Cvitani{\'c} and H.~Wang.
\newblock On optimal terminal wealth under transaction costs.
\newblock {\em J. Math. Econom.}, 35(2):223--231, 2001.

\bibitem{CMKS14}
C.~Czichowsky, J.~Muhle-Karbe, and W.~Schachermayer.
\newblock Transaction costs, shadow prices, and duality in discrete time.
\newblock {\em SIAM Journal on Financial Mathematics}, 5(1):258--277, 2014.

\bibitem{CPSY16}
C.~Czichowsky, R.~Peyre, W.~Schachermayer, and J.~Yang.
\newblock Shadow prices, fractional brownian motion and portfolio optimisation
  under transaction costs.
\newblock {\em Preprint}, 2016. Submitted. Available at {\tt
  http://arxiv.org/abs/1608.01415}.

\bibitem{CS14}
C.~Czichowsky and W.~Schachermayer.
\newblock Duality theory for portfolio optimisation under transaction costs.
\newblock {\em Annals of Applied Probability}, 2015.

\bibitem{CS13}
C.~Czichowsky and W.~Schachermayer.
\newblock Strong supermartingales and limits of nonnegative martingales.
\newblock {\em The Annals of Probability}, 44(1):171--205, 2016.

\bibitem{CSY14}
C.~Czichowsky, W.~Schachermayer, and J.~Yang.
\newblock Shadow prices for continuous price processes.
\newblock {\em Preprint}, 2014. To appear in \emph{Mathematical Finance}.

\bibitem{DN90}
M.~H. Davis and A.~R. Norman.
\newblock Portfolio selection with transaction costs.
\newblock {\em Mathematics of Operations Research}, 15(4):676--713, 1990.

\bibitem{DPT01}
G.~Deelstra, H.~Pham, and N.~Touzi.
\newblock Dual formulation of the utility maximization problem under
  transaction costs.
\newblock {\em Ann. Appl. Probab.}, 11(4):1353--1383, 2001.

\bibitem{Detal02}
F.~Delbaen, P.~Grandits, T.~Rheinl{\"a}nder, D.~Samperi, M.~Schweizer, and
  C.~Stricker.
\newblock Exponential hedging and entropic penalties.
\newblock {\em Mathematical finance}, 12(2):99--123, 2002.

\bibitem{DS94}
F.~Delbaen and W.~Schachermayer.
\newblock A general version of the fundamental theorem of asset pricing.
\newblock {\em Mathematische annalen}, 300(1):463--520, 1994.

\bibitem{FS11}
H.~F{\"o}llmer and A.~Schied.
\newblock {\em Stochastic finance: an introduction in discrete time}.
\newblock Walter de Gruyter, 2011.

\bibitem{GSV11}
D.~Gasbarra, T.~Sottinen, H.~Van~Zanten, et~al.
\newblock Conditional full support of gaussian processes with stationary
  increments.
\newblock {\em Journal of Applied Probability}, 48(2):561--568, 2011.

\bibitem{G02}
P.~Guasoni.
\newblock Optimal investment with transaction costs and without
  semimartingales.
\newblock {\em The Annals of Applied Probability}, 12(4):1227--1246, 2002.

\bibitem{G06}
P.~Guasoni.
\newblock No arbitrage under transaction costs, with fractional {B}rownian
  motion and beyond.
\newblock {\em Math. Finance}, 16(3):569--582, 2006.

\bibitem{GR15}
P.~Guasoni and M.~R{\'a}sonyi.
\newblock Fragility of arbitrage and bubbles in local martingale diffusion
  models.
\newblock {\em Finance and Stochastics}, 19(2):215--231, 2015.

\bibitem{GRS08}
P.~Guasoni, M.~R{\'a}sonyi, and W.~Schachermayer.
\newblock Consistent price systems and face-lifting pricing under transaction
  costs.
\newblock {\em The Annals of Applied Probability}, pages 491--520, 2008.

\bibitem{GRS10}
P.~Guasoni, M.~R{\'a}sonyi, and W.~Schachermayer.
\newblock The fundamental theorem of asset pricing for continuous processes
  under small transaction costs.
\newblock {\em Annals of Finance}, 6(2):157--191, 2010.

\bibitem{HP91}
H.~He and N.~D. Pearson.
\newblock Consumption and portfolio policies with incomplete markets and
  short-sale constraints: The infinite dimensional case.
\newblock {\em Journal of Economic Theory}, 54(2):259--304, 1991.

\bibitem{HPR14}
A.~Herczegh, V.~Prokaj, and M.~R{\'a}sonyi.
\newblock Diversity and no arbitrage.
\newblock {\em Stochastic Analysis and Applications}, 32(5):876--888, 2014.

\bibitem{KS02}
Y.~M. Kabanov and C.~Stricker.
\newblock On the optimal portfolio for the exponential utility maximization:
  remarks to the six-author paper ``{E}xponential hedging and entropic
  penalties'' [{M}ath. {F}inance {\bf 12} (2002), no. 2, 99--123; {MR}1891730
  (2003b:91046)] by {F}. {D}elbaen, {P}. {G}randits, {T}. {R}heinl\"ander, {D}.
  {S}amperi, {M}. {S}chweizer and {C}. {S}tricker.
\newblock {\em Math. Finance}, 12(2):125--134, 2002.

\bibitem{KMK11}
J.~Kallsen and J.~Muhle-Karbe.
\newblock Existence of shadow prices in finite probability spaces.
\newblock {\em Math. Methods Oper. Res.}, 73(2):251--262, 2011.

\bibitem{KLS87}
I.~Karatzas, J.~P. Lehoczky, and S.~E. Shreve.
\newblock Optimal portfolio and consumption decisions for a ``small investor''
  on a finite horizon.
\newblock {\em SIAM journal on control and optimization}, 25(6):1557--1586,
  1987.

\bibitem{KLSX91}
I.~Karatzas, J.~P. Lehoczky, S.~E. Shreve, and G.-L. Xu.
\newblock Martingale and duality methods for utility maximization in an
  incomplete market.
\newblock {\em SIAM Journal on Control and optimization}, 29(3):702--730, 1991.

\bibitem{KP08}
C.~Kardaras and E.~Platen.
\newblock On the semimartingale property of discounted asset-price processes.
\newblock {\em Stochastic processes and their Applications},
  121(11):2678--2691, 2011.

\bibitem{KS99}
D.~Kramkov and W.~Schachermayer.
\newblock The asymptotic elasticity of utility functions and optimal investment
  in incomplete markets.
\newblock {\em Ann. Appl. Probab.}, 9(3):904--950, 1999.

\bibitem{LZ08}
K.~Larsen and G.~{\v{Z}}itkovi{\'c}.
\newblock On the semimartingale property via bounded logarithmic utility.
\newblock {\em Annals of Finance}, 4(2):255--268, 2008.

\bibitem{M63}
B.~Mandelbrot.
\newblock The variation of some other speculative prices.
\newblock {\em Journal of Business}, pages 393--413, 1967.

\bibitem{N03}
D.~Nualart.
\newblock Stochastic integration with respect to fractional brownian motion and
  applications.
\newblock {\em Contemporary Mathematics}, 336:3--40, 2003.

\bibitem{P10}
M.~S. Pakkanen.
\newblock Stochastic integrals and conditional full support.
\newblock {\em Journal of Applied Probability}, 47(3):650--667, 2010.

\bibitem{P11}
M.~S. Pakkanen.
\newblock Brownian semistationary processes and conditional full support.
\newblock {\em International Journal of Theoretical and Applied Finance},
  14(04):579--586, 2011.

\bibitem{Pey15}
R.~Peyre.
\newblock {No simple arbitrage for fractional Brownian motion}.
\newblock Preprint, 2015. {T}o appear in \emph{Bernoulli}.

\bibitem{R97}
L.~C.~G. Rogers.
\newblock {Arbitrage with fractional Brownian motion}.
\newblock {\em Mathematical Finance}, 7(1):95--105, 1997.

\bibitem{ER00}
R.~Rouge and N.~El~Karoui.
\newblock Pricing via utility maximization and entropy.
\newblock {\em Mathematical Finance}, 10(2):259--276, 2000.

\bibitem{S01}
W.~Schachermayer.
\newblock Optimal investment in incomplete markets when wealth may become
  negative.
\newblock {\em Ann. Appl. Probab.}, 11(3):694--734, 2001.

\bibitem{S03}
W.~Schachermayer.
\newblock A super-martingale property of the optimal portfolio process.
\newblock {\em Finance and Stochastics}, 7(4):433--456, 2003.

\bibitem{S13}
W.~Schachermayer.
\newblock Admissible trading strategies under transaction costs.
\newblock In {\em S{\'e}minaire de Probabilit{\'e}s XLVI}, pages 317--331.
  Springer, 2014.

\bibitem{SS94}
S.~E. Shreve and H.~M. Soner.
\newblock Optimal investment and consumption with transaction costs.
\newblock {\em The Annals of Applied Probability}, pages 609--692, 1994.

\bibitem{SSC95}
H.~M. Soner, S.~E. Shreve, and J.~Cvitani{\'c}.
\newblock There is no nontrivial hedging portfolio for option pricing with
  transaction costs.
\newblock {\em The Annals of Applied Probability}, pages 327--355, 1995.

\bibitem{TKA88}
M.~Taksar, M.~J. Klass, and D.~Assaf.
\newblock A diffusion model for optimal portfolio selection in the presence of
  brokerage fees.
\newblock {\em Mathematics of Operations Research}, 13(2):277--294, 1988.

\end{thebibliography}
\bibliographystyle{abbrv}
\end{document}